\newcounter{resultnum}[section]\setcounter{resultnum}{0}
\newtheorem{conclusion}{Conclusion}[section]
\newcounter{conclusionnum}[section]\setcounter{conclusionnum}{0}
\newtheorem{condition}{Condition}[section]
\newcounter{conditionnum}[section]\setcounter{conditionnum}{0}
\newcounter{conjecturenum}[section]\setcounter{conjecturenum}{0}
\newcounter{examplenum}[section]\setcounter{examplenum}{0}
\newcounter{exercisenum}[section]\setcounter{exercisenum}{0}
\newtheorem{lemma}{Lemma}[section]
\newcounter{lemmanum}[section]\setcounter{lemmanum}{0}
\newcounter{notationnum}[section]\setcounter{notationnum}{0}
\newtheorem{theorem}{Theorem}[section]
\newcounter{theoremnum}[section]\setcounter{theoremnum}{0}
\newtheorem{definition}{Definition}[section]
\newcounter{definitionnum}[section]\setcounter{definitionnum}{0}
\newtheorem{corollary}{Corollary}[section]
\newcounter{corollarynum}[section]\setcounter{corollarynum}{0}
\newtheorem{remark}{Remark}[section]
\newcounter{remarknum}[section]\setcounter{remarknum}{0}
\newtheorem{proposition}{Proposition}[section]
\newcounter{propositionnum}[section]\setcounter{propositionnum}{0}
\newcounter{acknowledgementnum}[section]\setcounter{acknowledgementnum}{0}
\newcounter{algorithmnum}[section]\setcounter{algorithmnum}{0}
\newcounter{axiomnum}[section]\setcounter{axiomnum}{0}
\newcounter{casenum}[section]\setcounter{casenum}{0}
\newcounter{claimnum}[section]\setcounter{claimnum}{0}
\newcounter{summarynum}[section]\setcounter{summarynum}{0}
\newcounter{problemnum}[section]\setcounter{problemnum}{0}
\newenvironment{proof}[1][]{\textbf{Proof.} }{}
\begin{document}

\title{On General Solutions for Field Equations in Einstein and Higher Dimension Gravity}
\date{December 2, 2009}
\author{ Sergiu I. Vacaru\thanks{
sergiu.vacaru@uaic.ro, Sergiu.Vacaru@gmail.com;\newline
http://www.scribd.com/people/view/1455460-sergiu } \\
{\quad} \\
{\small {\textsl{\ Science Department, University "Al. I. Cuza" Ia\c si},} }%
\\
{\small {\textsl{\ 54 Lascar Catargi street, 700107, Ia\c si, Romania}} }}
\maketitle

\begin{abstract}
We prove that the Einstein equations can be solved in a very general form for arbitrary spacetime dimensions and various types of vacuum and non--vacuum cases following a geometric method of anholonomic frame deformations for constructing exact solutions in gravity. The main idea of this method is to introduce on (pseudo) Riemannian manifolds an alternative (to the Levi--Civita connection) metric compatible linear connection which is also completely defined by the same metric structure. Such a canonically distinguished connection is with nontrivial torsion which is induced by some nonholonomy frame coefficients and generic off--diagonal terms of metrics. It is possible to define certain classes of adapted frames of reference when  the Einstein equations for such an  alternative connection transform into a system of partial differential equations which can be integrated in very general forms. Imposing nonholonomic constraints on generalized metrics and connections and adapted frames (selecting Levi--Civita configurations), we generate exact solutions in Einstein gravity and extra dimension generalizations.

\vskip0.1cm

\textbf{Keywords:}\ Einstein spaces and higher dimension gravity, anholonomic frames, exact solutions, nonholonomic manifolds.

\vskip3pt

MSC:\ 83E15, 83C15

PACS:\ 04.90.+e, 04.20.Jb, 04.50.-h
\end{abstract}


\section{Introduction and formulation of Main Result}

The issue to construct exact solutions in Einstein gravity and high
dimensional gravity theories is not new. It has been posed in different ways
and related to various problems in multidimensional cosmology, black hole
physics, nonlinear gravitational effects etc in Kaluza--Klein gravity and
string/brane generalizations. For instance, in brane gravity, one faces the
problem to generate solutions with wrapped configurations and possible
quantum corrections and noncommutative modifications. Former elaborated
approaches are model dependent, usually for metric ansatz depending on 1-2
coordinates, for spherical/cylindrical symmetries and chosen backgrounds
with certain types of asymptotic boundary conditions. In this context, and
further application in modern physics, it is important to find a way to
elaborate methods of constructed exact solutions in very general form.

The problem of constructing most general classes of solutions in gravity has
been posed in a geometric language following the anholonomic deformation/
frame method, see reviews of results in Refs. \cite{ijgmmp,vncg,vsgg,vrflg}.%
\footnote{%
The geometry of nonholonomic distributions/ deformations and frames should
be not identified with the Cartan's moving frame method even in the first
case "moving frames" can be also included. In our approach, we consider
arbitrary real/complex, in general, noncommutative/supersymmetric
nonholonomic distributions on certain manifolds and adapt the geometric
constructions with respect to such distributions. Such constructions result
in (nonlinear) deformations of connection and metric structures, which is
not the case for moving frames, when the same geometric objects are
re--expressed with respect to moving/different systems of reference.
Selecting some convenient nonholonomic distributions, we obtain separations
of equations and reparametrizations of variables in some physically
important nonlinear systems of partial differential equations which allows
us to integrate such systems in general forms. Then constraining
correspondingly some general solutions, we select necessary subclasses of
exact solutions, for instance, in general relativity and extra dimensions.}
In brief, the method allows us to generate any spacetime metric with
prescribed, or reasonable, physical/ geometric properties (as solutions of
Einstein equations and modifications for different gravity theories) by
performing certain types of nonholonomic transforms/ deformations from
another well defined (pseudo) Riemannian metrics. Such a problem of
constructing "almost general" solutions for Einstein spaces was solved recently for
four and five dimensional spaces \cite{vgseg} but straightforward extensions
and more sophisticate constructions should be provided to achieve such
results for spacetimes of arbitrary dimensions.\footnote{%
Some conditions of theorems and formulas, for four and five dimensions,
presented in Ref. \cite{vgseg} (that version should be considered as a
Letter version of this work, where we emphasized certain techniques for
generating solutions in general relativity) will be repeated for some our
further constructions because they are used for different type
generalizations and simplify proofs for higher dimensions.}

In this paper, we show how the Einstein equations can be integrated in
general form for spacetimes of higher dimensions $\ \ ^{k}n=n+\ m+\
^{1}m+...+\ ^{k}m>5,$ when $n=2,$ or $3,$ and $\ ^{0}m=m,\ ^{1}m,\
^{2}m,...=2;$ for $k=0,1,2,... .$ Such constructions provide a
generalization of the Main Result (Theorem 1) from Ref. \cite{vgseg} proved
for dimensions $\ ^{0}n=n+\ m=4,$ or $5$ (see details on geometric methods
of constructing exact solutions in gravity in Refs. \cite%
{ijgmmp,vncg,vsgg,vrflg}). The approach developed in this work may present a
substantial interest for research in higher dimensional (super) gravity
theories (which, in general, may possess higher order anisotropies \cite%
{vnp1,vsp1,vv1,vst}) and in higher order Lagrange--Finsler/ Hamilton--Cartan
geometry and related gravity and mechanical models \cite%
{m1,m2,m3,av,ma1987,ma}.

Let us consider a (pseudo) Riemannian manifold $\ ^{k}\mathbf{V,}$ $dim\ ^{k}%
\mathbf{V=\ }\ ^{k}n,$ provided with a metric
\begin{equation}
\mathbf{g}=\mathbf{g}_{\ ^{k}\alpha \ ^{k}\beta }(u^{\ ^{k}\gamma })du^{\
^{k}\alpha }\otimes du^{\ ^{k}\beta }  \label{metrml}
\end{equation}%
of arbitrary signature $\epsilon _{\ ^{k}\alpha }=(\epsilon (1)=\pm
1,\epsilon (2)=\pm 1,\ldots ,\epsilon (\ ^{k}n)\pm 1).$\footnote{%
\textbf{Notation Remarks:} In our works, we follow conventions from \cite%
{ijgmmp,vrflg,vnp1,vsp1,vst} when left up/low indices are used as labels for
certain types of geometric spaces/manifolds and objects. The Einstein
summation rule is applied on repeating right left--up indices if it is not
stated a contrary condition. Boldfaced letters are used for spaces
(geometric objects) enabled with (adapted to) some nonholonomic
distributions/frames prescribed on corresponding classes of manifolds. An
abstract/coordinate index formalism is necessary for deriving in explicit
form some general/ exact solutions for gravitational field equations in
higher dimensional gravity. Unfortunately, such denotations can not be
introduced in a more simple form if we aim to present certain general
results on exact solutions derived for some ''multi--level'' systems of
nonlinear partial differential equations.} The local coordinates on $\ ^k%
\mathbf{V}$ are parametrized ''shell by shell'' by increasing dimensions on $%
2$ at every level (equivalently, shell). We begin with denotations for $\
^{0}\mathbf{V,}$ $dim\ ^{0}\mathbf{V=\ }\ ^{0}n,$ when $u^{\alpha
}=(x^{i},y^{a}),$ for $u^{\alpha }=u^{^{0}\alpha }$ and $y^{a}=y^{\ ^{0}a}$
with $k=0,$ where $x^{i}=(x^{1},x^{\widehat{i}})$ and $y^{a}=\left(
v,y\right) ,$ i. e. $y^{4}=v,$ $y^{5}=y.$ Indices $i,j,k,...=1,2,3;$ $\hat{%
\imath},\hat{\jmath},\hat{k}...=2,3$ and $a,b,c,...=4,5$ are used for a
conventional $(3+2)$--splitting of dimension and general abstract/coordinate
indices when $\alpha ,\beta ,\ldots $ run values $1,2,...,5.$ For four
dimensional (in brief, 4--d) constructions, we can write $u^{\widehat{\alpha
}}=(x^{\widehat{i}},y^{a}),$ when the coordinate $x^{1}$ and values for
indices like $\alpha ,i,...=1$ are not considered. In brief, we shall denote
some partial derivatives $\partial _{\alpha }=\partial /\partial u^{\alpha }$
in the form $s^{\bullet }=\partial s/\partial x^{2},s^{\prime }=\partial
s/\partial x^{3},s^{\ast }=\partial s/\partial y^{4}.$ At the next level, $\
^{1}\mathbf{V,}$ $dim\mathbf{\ }\ ^{1}\mathbf{V=\ }\ ^{1}n=n+\ m+\ ^{1}m,$
the coordinates are labeled $u^{^{1}\alpha }=(x^{i},y^{a},y^{^{1}a}),$ for $%
^{1}a,\ ^{1}b,...=6,7.$ For the ''$2$--anisotropy'', $\mathbf{\ }\ ^{2}%
\mathbf{V,}$ $dim\mathbf{\ }\ ^{2}\mathbf{V=\ }\ ^{2}n=n+\ m+\ ^{1}m+\
^{2}m, $ the coordinates are labeled $u^{^{2}\alpha
}=(x^{i},y^{a},y^{^{1}a},y^{^{2}a}),$ for $\ ^{2}a,\ ^{2}b,...=8,9;$ and
(recurrently) for the ''$k$--anisotropy'', $\ ^{k}\mathbf{V,}$ $dim\mathbf{\
}\ ^{k}\mathbf{V=\ }\ ^{k}n,$ the coordinates are labeled $u^{^{k}\alpha
}=(x^{i},y^{a},y^{^{1}a},y^{^{2}a},...,y^{^{k}a}),$ for $\ ^{k}a,\
^{k}b,...=4+2k,5+2k.$\footnote{%
we use the term anisotropy/ anisotropic for some nonholonomically
(equivalently, anholonomically) constrained variables/ coordinates on a
(pseudo) Riemannian manifold subjected to certain non--integrable
conditions; such anisotropies should be not confused with those when
geometric objects depend on some ''directions and velocities'', for
instance, in Finsler geometry}

We shall write $\ ^{k}\nabla =\{\Gamma _{\ \ ^{k}\beta \ ^{k}\gamma }^{\
^{k}\alpha }\}$ for the Levi--Civita connection, with coefficients stated
with respect to an arbitrary local frame basis $e_{\ ^{k}\alpha }=(e_{\
^{k-1}\alpha },$ $e_{\ ^{k}a})$ and its dual basis $e^{\ ^{k}\beta }=(e^{\
^{k-1}\beta },e^{\ ^{k}b}).$ Using the Riemannian curvature tensor $\ ^{k}%
\mathcal{R}=\{R_{\ \ ^{k}\beta \ ^{k}\gamma \ ^{k}\delta }^{\ ^{k}\alpha }\}$
defined by $\ ^{k}\nabla ,$ one constructs the Ricci tensor, $\ ^{k}\mathcal{%
R}ic=\{R_{\ \ ^{k}\beta \ ^{k}\delta }\doteqdot R_{\ \ ^{k}\beta \
^{k}\alpha \ ^{k}\delta }^{\ ^{k}\alpha }\},$ and scalar curvature $\
^{k}R\doteqdot \mathbf{g}^{\ ^{k}\beta \ ^{k}\delta }R_{\ \ ^{k}\beta \
^{k}\delta },$ where $\mathbf{g}^{\ ^{k}\beta \ ^{k}\delta }$ is inverse to $%
\mathbf{g}_{\ ^{k}\alpha \ ^{k}\beta }.$ The Einstein equations on $\ ^{k}%
\mathbf{V,}$ for an energy--momentum source $\ ^{k}T_{\alpha \beta },$ are
written in the form%
\begin{equation}
R_{\ \ ^{k}\beta \ ^{k}\delta }-\frac{1}{2}\mathbf{g}_{\ ^{k}\beta \
^{k}\delta }\ ^{k}R=\varkappa T_{\ ^{k}\beta \ ^{k}\delta },  \label{einsteq}
\end{equation}%
where $\varkappa =const.$ For the Einstein spaces defined by a cosmological
constant $\lambda ,$ such gravitational field equations can be represented
as $R_{\ \ \ ^{k}\beta }^{\ ^{k}\alpha }=\lambda \delta _{\ ^{k}\beta }^{\
^{k}\alpha },$ where $\delta _{\ ^{k}\beta }^{\ ^{k}\alpha }$ is the
Kronecher symbol. The vacuum solutions are obtained for $\lambda =0.$

The goal of our work is to provide (see further sections) the proof of:

\begin{theorem}
\label{mth}\textbf{(Main Theorem)} The gravitational field equations in the $%
\ ^{k}n$--dimensional Einstein gravity (\ref{einsteq}) represented by frame
transforms as
\begin{equation}
R_{\ \ \ ^{k}\beta }^{\ ^{k}\alpha }=\Upsilon _{\ \ \ ^{k}\beta }^{\
^{k}\alpha },  \label{einst1}
\end{equation}%
for any given $\Upsilon _{\ \ \ ^{k}\beta }^{\ ^{k}\alpha }=diag[\Upsilon
_{\ 1},\Upsilon _{\ 2},\Upsilon _{\ 2}=\Upsilon _{\ 3},\Upsilon
_{4},\Upsilon _{5}=\Upsilon _{4},\Upsilon _{6}=\ ^{1}\Upsilon _{2},\newline
\Upsilon _{7}=\Upsilon _{6},\ldots ,\Upsilon _{4+2k}=\ ^{k}\Upsilon
_{2},\Upsilon _{5+2k}=\Upsilon _{4+2k}]$ with
\begin{eqnarray}
\Upsilon _{\ \ 1}^{1} &=&\Upsilon _{1}=\Upsilon _{2}+\Upsilon _{4},\
\Upsilon _{\ \ 2}^{2}=\Upsilon _{\ \ 3}^{3}=\Upsilon _{2}(x^{k},v),
\label{source} \\
\Upsilon _{\ \ 4}^{4} &=&\Upsilon _{\ \ 5}^{5}=\Upsilon _{4}(x^{\widehat{k}%
}),\Upsilon _{\ \ 6}^{6}=\Upsilon _{\ \ 7}^{7}=\ ^{1}\Upsilon _{2}(u^{\
\alpha },\ ^{1}v),\ \Upsilon _{\ \ 8}^{8}=  \notag \\
\Upsilon _{\ \ 9}^{9} &=&\ ^{2}\Upsilon _{2}(u^{\ ^{1}\alpha },\
^{2}v),\ldots ,\Upsilon _{\ \ 4+2k}^{4+2k}=\Upsilon _{\ \ 5+2k}^{5+2k}=\
^{k}\Upsilon _{2}(u^{\ ^{k-1}\alpha },\ ^{k}v),  \notag
\end{eqnarray}%
for $\ y^{4}=v,$ $y^{6}=\ ^{1}v,y^{8}=\ ^{2}v,...,y^{4+2k}=\ ^{k}v$ (where $k
$ labels the shell's number), can be solved in general form by metrics of
type
\begin{eqnarray}
\ ^{k}\mathbf{g} &\mathbf{=}&\epsilon _{1}{dx^{1}\otimes dx^{1}}+g_{\widehat{%
i}}(x^{\widehat{k}}){dx^{\widehat{i}}\otimes dx^{\widehat{i}}}+\omega
^{2}(x^{j},y^{b})h_{a}(x^{k},v)\mathbf{e}^{a}{\otimes }\mathbf{e}^{a}  \notag
\\
&&+\ ^{1}\omega ^{2}(u^{\alpha },y^{\ ^{1}b})h_{\ ^{1}a}(u^{\alpha },\
^{1}v)\ \mathbf{e}^{\ ^{1}a}{\otimes }\ \mathbf{e}^{\ ^{1}a}
\label{ansgensol} \\
&&+\ ^{2}\omega ^{2}(u^{\ ^{1}\alpha },y^{\ ^{2}b})h_{\ ^{2}a}(u^{\
^{1}\alpha },\ ^{2}v)\ \mathbf{e}^{\ ^{2}a}{\otimes }\ \mathbf{e}^{\
^{2}a}+...  \notag \\
&&+\ ^{k}\omega ^{2}(u^{\ ^{k-1}\alpha },y^{\ ^{k}b})h_{\ ^{2}a}(u^{\
^{k-1}\alpha },\ ^{k}v)\ \mathbf{e}^{\ ^{k}a}{\otimes }\ \mathbf{e}^{\
^{k}a},  \notag
\end{eqnarray}
\begin{eqnarray}
\mbox{ for }\mathbf{e}^{4} &=&dy^{4}+w_{i}(x^{k},v)dx^{i},\mathbf{e}%
^{5}=dy^{5}+n_{i}(x^{k},v)dx^{i},  \notag \\
\mathbf{e}^{6} &=&dy^{6}+w_{\beta }(u^{\alpha },\ ^{1}v)du^{\beta },\mathbf{e%
}^{7}=dy^{7}+n_{\beta }(u^{\alpha },\ ^{1}v)du^{\beta },  \notag \\
\mathbf{e}^{8} &=&dy^{8}+w_{\ ^{1}\beta }(u^{\ ^{1}\alpha },\ ^{2}v)du^{\
^{1}\beta },\mathbf{e}^{9}=dy^{9}+n_{\ ^{1}\beta }(u^{\ ^{1}\alpha },\
^{2}v)du^{\ ^{1}\beta },  \notag \\
&&......  \notag \\
\mathbf{e}^{4+2k} &=&dy^{4+2k}+w_{\ ^{k-1}\beta }(u^{\ ^{k-1}\alpha },\
^{k}v)du^{\ ^{k-1}\beta },  \notag \\
\mathbf{e}^{5+2k} &=&dy^{5+2k}+n_{\ ^{k-1}\beta }(u^{\ ^{k-1}\alpha },\
^{k}v)du^{\ ^{k-1}\beta },  \notag
\end{eqnarray}%
where coefficients are defined by generating functions $f(x^{i},v),\partial
f/\partial v\neq 0,$ $...$ $,$ $\ ^{k}f(u^{\ ^{k-1}\alpha },\ ^{k}v),$ $%
\partial \ ^{k}f/\partial \ ^{k}v\neq 0$ and $\omega (x^{j},y^{b}),...,\
^{k}\omega (u^{\ ^{k-1}\alpha },\ y^{\ ^{k}b})\neq 0$ and integration
functions $^{0}f(x^{i}),$ $...,$ $\ _{k}^{0}f(u^{\ ^{k-1}\alpha }),$ $\
^{0}h(x^{i}),$ $...,$ $_{k}^{0}h(u^{\ ^{k-1}\alpha }),$ $\
^{1}n_{j}(x^{i}),...,\ \ n_{\ ^{k-1}\beta }(u^{\ ^{k-1}\alpha }),$ $\
^{2}n_{j}(x^{i}),...,\ _{^{k-1}\beta }^{2}n_{j}(u^{\ ^{k-1}\alpha })$ \
following recurrent formulas (when a next ''shell'' extends in a compatible
form the previous ones; i.e. containing the previous constructions), being
computed as
\begin{eqnarray}
g_{\widehat{i}} &=&\epsilon _{\widehat{i}}e^{\psi (x^{\widehat{k}})},%
\mbox{\
for }\epsilon _{2}\psi ^{\bullet \bullet }+\epsilon _{3}\psi ^{\prime \prime
}=\Upsilon _{4};  \label{coeff} \\
h_{4} &=&\epsilon _{4}\ ^{0}h(x^{i})\ [\partial
_{v}f(x^{i},v)]^{2}|\varsigma (x^{i},v)|,h_{5}=\epsilon _{5}[f(x^{i},v)-\
^{0}f(x^{i})]^{2};  \notag
\end{eqnarray}%
\begin{eqnarray}
w_{i} &=&-\partial _{i}\varsigma (x^{i},v)/\partial _{v}\varsigma (x^{i},v),
\notag \\
n_{k} &=&\ ^{1}n_{k}(x^{i})+\ ^{2}n_{k}(x^{i})\int dv\ \varsigma (x^{i},v)
\notag \\
&&[\partial _{v}f(x^{i},v)]^{2}/[f(x^{i},v)-\ ^{0}f(x^{i})]^{3},  \notag \\
\mbox{\ for }\varsigma  &=&\ ^{0}\varsigma (x^{i})-\frac{\epsilon _{4}}{8}\
^{0}h(x^{i})\int dv\ \Upsilon _{2}(x^{k},v)\   \notag \\
&&\partial _{v}f(x^{i},v)[f(x^{i},v)-\ ^{0}f(x^{i})];  \notag
\end{eqnarray}%
\begin{eqnarray*}
h_{6} &=&\epsilon _{6}\ _{1}^{0}h(u^{\alpha })\ [\partial _{\ ^{1}v}\
^{1}f(u^{\ \alpha },\ ^{1}v)]^{2}|\ ^{1}\varsigma (u^{\ \alpha },\ ^{1}v)|,
\\
h_{7} &=&\epsilon _{7}[\ ^{1}f(u^{\ \alpha },\ ^{1}v)-\ _{1}^{0}f(u^{\
\alpha })]^{2}; \\
w_{\beta } &=&-\partial _{\beta }\ ^{1}\varsigma (u^{\ \alpha },\
^{1}v)/\partial _{\ ^{1}v}\ ^{1}\varsigma (u^{\ \alpha },\ ^{1}v), \\
n_{\beta } &=&\ ^{1}n_{\beta }(u^{\ \alpha })+\ ^{2}n_{\beta }(u^{\ \alpha
})\int d\ ^{1}v\ \ ^{1}\varsigma (u^{\ \alpha },\ ^{1}v) \\
&&[\partial _{\ ^{1}v}\ ^{1}f(u^{\ \alpha },\ ^{1}v)]^{2}/[\ ^{1}f(u^{\
\alpha },\ ^{1}v)-\ _{1}^{0}f(u^{\ \alpha })]^{3}, \\
\mbox{\ for }\ ^{1}\varsigma  &=&\ _{1}^{0}\varsigma (u^{\ \alpha })-\frac{%
\epsilon _{6}}{8}\ _{1}^{0}h(u^{\ \alpha })\int d\ ^{1}v\ \ ^{1}\Upsilon
_{2}(u^{\ \alpha },\ ^{1}v)\  \\
&&[\partial _{\ ^{1}v}\ ^{1}f(u^{\ \alpha },\ ^{1}v)][\ ^{1}f(u^{\ \alpha
},\ ^{1}v)-\ _{1}^{0}f(u^{\ \alpha })];
\end{eqnarray*}%
\begin{eqnarray*}
h_{8} &=&\epsilon _{8}\ _{2}^{0}h(u^{\ ^{1}\alpha })\ [\partial _{\ ^{2}v}\
^{2}f(u^{\ ^{1}\alpha },\ ^{2}v)]^{2}|\ ^{2}\varsigma (u^{\ ^{1}\alpha },\
^{2}v)|, \\
h_{9} &=&\epsilon _{9}[\ ^{2}f(u^{\ ^{1}\alpha },\ ^{2}v)-\ _{2}^{0}f(u^{\
^{1}\alpha })]^{2}; \\
w_{\ ^{1}\beta } &=&-\partial _{\ ^{1}\beta }\ ^{2}\varsigma (u^{\
^{1}\alpha },\ ^{2}v)/\partial _{\ ^{2}v}\ ^{2}\varsigma (u^{\ ^{1}\alpha
},\ ^{2}v), \\
n_{\ ^{1}\beta } &=&\ ^{1}n_{\ ^{1}\beta }(u^{\ ^{1}\alpha })+\ ^{2}n_{\
^{1}\beta }(u^{\ ^{1}\alpha })\int d\ ^{2}v\ \ ^{2}\varsigma (u^{\
^{1}\alpha },\ ^{2}v) \\
&&[\partial _{\ ^{2}v}\ ^{2}f(u^{\ ^{1}\alpha },\ ^{2}v)]^{2}/[\ ^{2}f(u^{\
^{1}\alpha },\ ^{2}v)-\ _{2}^{0}f(u^{\ ^{1}\alpha })]^{3}, \\
\mbox{\ for }\ ^{2}\varsigma  &=&\ _{2}^{0}\varsigma (u^{\ ^{1}\alpha })-%
\frac{\epsilon _{8}}{8}\ _{1}^{0}h(u^{\ ^{1}\alpha })\int d\ ^{2}v\ \
^{2}\Upsilon _{2}(u^{\ ^{1}\alpha },\ ^{2}v)\  \\
&&[\partial _{\ ^{2}v}\ ^{2}f(u^{\ ^{1}\alpha },\ ^{2}v)][\ ^{2}f(u^{\
^{1}\alpha },\ ^{2}v)-\ _{2}^{0}f(u^{\ ^{1}\alpha })];
\end{eqnarray*}%
\begin{equation*}
.........
\end{equation*}%
\begin{eqnarray*}
h_{4+2k} &=&\epsilon _{4+2k}\ _{k}^{0}h(u^{\ ^{k-1}\alpha })\ [\partial _{\
^{k}v}\ ^{k}f(u^{\ ^{k-1}\alpha },\ ^{k}v)]^{2}|\ ^{k}\varsigma (u^{\
^{k-1}\alpha },\ ^{k}v)|, \\
h_{5+2k} &=&\epsilon _{5+2k}[\ ^{k}f(u^{\ ^{k-1}\alpha },\ ^{k}v)-\
_{k}^{0}f(u^{\ ^{k-1}\alpha })]^{2};
\end{eqnarray*}%
\begin{eqnarray*}
w_{\ ^{k-1}\beta } &=&-\partial _{\ ^{k-1}\beta }\ ^{k}\varsigma (u^{\
^{k-1}\alpha },\ ^{k}v)/\partial _{\ ^{k}v}\ ^{k}\varsigma (u^{\
^{k-1}\alpha },\ ^{k}v), \\
n_{\ ^{k-1}\beta } &=&\ ^{1}n_{\ ^{k-1}\beta }(u^{\ ^{k-1}\alpha })+\
^{2}n_{\ ^{k-1}\beta }(u^{\ ^{k-1}\alpha })\int d\ ^{k}v\ \ ^{k}\varsigma
(u^{\ ^{k-1}\alpha },\ ^{k}v) \\
&&[\partial _{\ ^{k}v}\ ^{k}f(u^{\ ^{k-1}\alpha },\ ^{k}v)]^{2}/[\
^{k}f(u^{\ ^{k-1}\alpha },\ ^{k}v)-\ _{k}^{0}f(u^{\ ^{k-1}\alpha })]^{3}, \\
\mbox{\ for }\ ^{k}\varsigma  &=&\ _{k}^{0}\varsigma (u^{\ ^{k-1}\alpha })-%
\frac{\epsilon _{4+2k}}{8}\ _{k}^{0}h(u^{\ ^{k-1}\alpha })\int d\ ^{k}v\ \
^{k}\Upsilon _{2}(u^{\ ^{k-1}\alpha },\ ^{k}v)\  \\
&&[\partial _{\ ^{k}v}\ ^{k}f(u^{\ ^{k-1}\alpha },\ ^{k}v)][\ ^{k}f(u^{\
^{k-1}\alpha },\ ^{k}v)-\ _{k}^{0}f(u^{\ ^{k-1}\alpha })];
\end{eqnarray*}%
and
\begin{eqnarray}
\mathbf{e}_{k}\omega  &=&\partial _{k}\omega +w_{k}\partial _{v}\omega
+n_{k}\partial \omega /\partial y^{5}=0,  \label{confcondk} \\
\mathbf{e}_{\alpha }\ ^{1}\omega  &=&\partial _{\alpha }\ ^{1}\omega
+w_{\alpha }\partial \ ^{1}\omega /\partial \ ^{1}v+n_{\alpha }\partial \
^{1}\omega /\partial y^{5}=0,  \notag \\
\mathbf{e}_{\ ^{1}\alpha }\ ^{2}\omega  &=&\partial _{\ ^{1}\alpha }\
^{2}\omega +w_{\ ^{1}\alpha }\partial \ ^{2}\omega /\partial \ ^{2}v+n_{\
^{1}\alpha }\partial \ ^{2}\omega /\partial y^{7}=0,  \notag \\
&&......  \notag \\
\mathbf{e}_{\ ^{k-1}\alpha }\ ^{k}\omega  &=&\partial _{\ ^{k-1}\alpha }\
^{k}\omega +w_{\ ^{k-1}\alpha }\partial \ ^{k}\omega /\partial \ ^{k}v+n_{\
^{k-1}\alpha }\partial \ ^{k}\omega /\partial y^{5+2k}=0,  \notag
\end{eqnarray}%
when the solutions for the Levi--Civita connection are selected by
additional constraints%
\begin{eqnarray}
\partial _{v}w_{i}&=&\mathbf{e}_{i}\ln |h_{4}|,\mathbf{e}_{k}w_{i}=\mathbf{e}%
_{i}w_{k},\  \partial _{v}n_{i}=0,\ \partial _{i}n_{k}=\partial _{k}n_{i};
\label{lccond} \\
\partial _{\ ^{1}v}w_{\alpha }&=&\mathbf{e}_{\alpha }\ln |h_{6}|,\mathbf{e}%
_{\alpha }w_{\beta }=\mathbf{e}_{\beta }w_{\alpha }, \partial _{\
^{1}v}n_{\alpha }=0,\ \partial _{\alpha }n_{\beta }=\partial _{\beta
}n_{\alpha };  \notag \\
\partial _{\ ^{2}v}w_{\ ^{1}\alpha } &=&\mathbf{e}_{\ ^{1}\alpha }\ln
|h_{8}|,\mathbf{e}_{\ ^{1}\alpha }w_{\ ^{1}\beta }=\mathbf{e}_{\ ^{1}\beta
}w_{\ ^{1}\alpha },  \notag \\
&&\partial _{\ ^{2}v}n_{\ ^{1}\alpha }=0,\ \partial _{\ ^{1}\alpha }n_{\
^{1}\beta }=\partial _{\ ^{1}\beta }n_{\ ^{1}\alpha };  \notag \\
&&.....  \notag \\
\partial _{\ ^{k}v}w_{\ ^{k-1}\alpha } &=&\mathbf{e}_{\ ^{1}\alpha }\ln
|h_{4+2k}|,\mathbf{e}_{\ ^{k-1}\alpha }w_{\ ^{k-1}\beta }=\mathbf{e}_{\
^{k-1}\beta }w_{\ ^{k-1}\alpha },  \notag \\
&&\partial _{\ ^{k}v}n_{\ ^{k-1}\alpha }=0,\ \partial _{\ ^{k-1}\alpha }n_{\
^{k-1}\beta }=\partial _{\ ^{k-1}\beta }n_{\ ^{k-1}\alpha }.  \notag
\end{eqnarray}
\end{theorem}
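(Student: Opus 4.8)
The plan is to integrate the field equations shell by shell, organizing the argument as an induction on the number $k$ of anisotropic levels, so that the general claim reduces to the four/five dimensional case of Ref.~\cite{vgseg} together with one uniform shell--extension step. To set up, I would equip $^{k}\mathbf{V}$ with the nonholonomic $(n+m+{}^{1}m+\dots+{}^{k}m)$--splitting whose N--connection coefficients are $N_{i}^{4}=w_{i},\ N_{i}^{5}=n_{i},\ N_{\beta}^{6}=w_{\beta},\ N_{\beta}^{7}=n_{\beta},\dots$, with the N--adapted frames $\mathbf{e}_{\alpha}$ and coframes $\mathbf{e}^{\alpha}$ exactly as in (\ref{ansgensol}); relative to this splitting the metric (\ref{metrml}) takes the distinguished form (\ref{ansgensol}). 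On such a manifold there is a canonical distinguished connection $\widehat{\mathbf{D}}$ --- the unique metric compatible d--connection whose pure horizontal and pure vertical torsion components vanish --- with coefficients algebraic in $g_{\widehat{i}},h_{a},w_{i},n_{i}$ and their shell analogues together with the $\mathbf{e}_{\alpha}$--derivatives of these. Rewritten for $\widehat{\mathbf{D}}$, the Einstein equations (\ref{einsteq}) become the nonholonomic deformation $\widehat{R}_{\ \ ^{k}\beta}^{\ ^{k}\alpha}=\Upsilon_{\ \ ^{k}\beta}^{\ ^{k}\alpha}$ with the trace--reversed source packaged as in (\ref{source}); and \emph{if} the constraints (\ref{lccond}) are imposed, the torsion of $\widehat{\mathbf{D}}$ vanishes, $\widehat{\mathbf{D}}={}^{k}\nabla$ on the solution, and integrating the deformed system subject to (\ref{lccond}) yields genuine solutions of (\ref{einst1}).

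The heart of the proof is the N--adapted computation of $\widehat{R}ic$ for the ansatz (\ref{ansgensol}). Since $\mathbf{g}$ is block diagonal in the shells and the source obeys the dependence pattern (\ref{source}), the nonzero components of $\widehat{R}ic-\Upsilon$ split into a \emph{triangular hierarchy}: (i) a two--dimensional linear equation $\epsilon_{2}\psi^{\bullet\bullet}+\epsilon_{3}\psi^{\prime\prime}=\Upsilon_{4}$ (Laplace-- or wave--type according to signature) for the horizontal conformal potential in $g_{\widehat{i}}=\epsilon_{\widehat{i}}e^{\psi}$; and, for each shell $s=0,1,\dots,k$ with $(u^{{}^{s-1}\alpha},{}^{s}v)$ serving as base--plus--one--fibre coordinates, (ii) a single second--order nonlinear equation linking $h_{4+2s}$ and $h_{5+2s}$ to the source, which \emph{linearizes} under the substitution $h_{5+2s}=\epsilon_{5+2s}({}^{s}f-{}_{s}^{0}f)^{2}$, $h_{4+2s}=\epsilon_{4+2s}\,{}_{s}^{0}h\,(\partial_{{}^{s}v}{}^{s}f)^{2}|{}^{s}\varsigma|$ with $\partial_{{}^{s}v}{}^{s}f\neq0$, becoming a first--order linear ODE in ${}^{s}v$ for ${}^{s}\varsigma$ (this produces the quadratures for ${}^{s}\varsigma$ in (\ref{coeff})); (iii) a linear system for the $w$--coefficients, algebraic once the $h$'s are known, solved by $w=-\partial(\cdot)\,{}^{s}\varsigma/\partial_{{}^{s}v}{}^{s}\varsigma$; and (iv) a second--order linear ODE in the fibre variable for the $n$--coefficients, integrated twice to give the $\int d{}^{s}v(\dots)$ expressions. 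Assembling (i)--(iv) reconstructs precisely the coefficient formulas (\ref{coeff}).

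The recursion enters through (\ref{source}): the shell--$s$ block involves the lower shells only through their coordinates, so it is \emph{literally} the $s=0$ (four/five dimensional) system of Ref.~\cite{vgseg} with the lower variables as passive parameters; hence (i)--(iv) need to be established once and then transported up the tower, which is the induction step. The conformal factors ${}^{s}\omega$ are reinstated last: replacing the shell--$s$ vertical part of $\mathbf{g}$ by $({}^{s}\omega)^{2}$ times itself changes $\widehat{R}ic$ only through terms proportional to the N--adapted derivatives $\mathbf{e}_{{}^{s-1}\alpha}\,{}^{s}\omega$, so the solution survives exactly when (\ref{confcondk}) holds, which I would check by direct substitution. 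Finally, writing out the torsion of $\widehat{\mathbf{D}}$ for (\ref{ansgensol}) shows that its vanishing is equivalent to (\ref{lccond}); these conditions restrict the generating and integration data (they force the $w$'s to be gradient--like, $\partial_{{}^{s}v}n=0$, and the $n$'s closed in the base indices) but still leave an infinite--dimensional functional family, so that solutions of (\ref{einst1}) for ${}^{k}\nabla$ are indeed obtained.

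\emph{The main obstacle} is steps (ii)--(iv): certifying that the single nonlinear vertical equation genuinely linearizes under the $f$--substitution and that, together with the $w$-- and $n$--equations, each shell block decouples in the stated triangular fashion in arbitrary dimension. This is a lengthy but mechanical tensor calculation; what makes it tractable is the shell induction, which confines the computation to a single level rather than forcing it to be redone at every step of the tower.
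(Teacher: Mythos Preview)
Your proposal follows essentially the same architecture as the paper: introduce the canonical d--connection $\widehat{\mathbf{D}}$ adapted to the shell splitting (the paper's Theorem~\ref{th1}), compute its Ricci tensor for the ansatz to obtain the decoupled triangular system (Theorem~\ref{th2}), integrate shell by shell with the lower levels treated as parameters (Theorem~\ref{th3}), restore the conformal factors ${}^{s}\omega$ (Lemma~\ref{lm1}), and finally impose the zero--distortion constraints (\ref{lccond}) to pass to the Levi--Civita connection (Corollary~\ref{cor1}). The inductive framing you give is exactly how the paper handles the higher shells (``higher order constructions being similar'').

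There is one point where your sketch is incomplete. You assert that under $h_{\,^{s}a}\mapsto({}^{s}\omega)^{2}h_{\,^{s}a}$ the Ricci tensor changes \emph{only} through terms proportional to $\mathbf{e}_{\,^{s-1}\alpha}\,{}^{s}\omega$, so that (\ref{confcondk}) alone suffices. In fact the deformation of $\widehat{R}_{\,^{s}a\,^{s}b}$ also contains purely vertical second--order terms in ${}^{s}\omega$ (see the paper's formula (\ref{riccivc})); for two--dimensional shells these collapse to $\delta^{\,^{s}a}_{\ \,^{s}b}\,{}^{s}\widehat{\square}\ln|{}^{s}\omega|$ with ${}^{s}\widehat{\square}=h^{\,^{s}c\,^{s}d}\widehat{D}_{\,^{s}c}\widehat{D}_{\,^{s}d}$. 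Thus, beyond (\ref{confcondk}), one must either impose ${}^{s}\widehat{\square}\ln|{}^{s}\omega|=0$ or absorb this term into the shell source ${}^{s}\Upsilon_{2}$ as in the paper's Remark following Proposition~4.1. This does not alter your strategy, but your ``direct substitution'' check in the $\omega$--step needs to account for it.

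A smaller remark: the constraints (\ref{lccond}) do not literally force $\widehat{\mathbf{D}}={}^{k}\nabla$ as operators; rather, they make the distortion tensor $Z$ vanish so that the N--adapted \emph{coefficients} coincide, $\Gamma=\widehat{\mathbf{\Gamma}}$, which is what is needed for the Einstein tensors to agree. The paper is careful about this distinction (proof of Condition~3.1).
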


Following above Theorem,\footnote{%
a similar Theorem was formulated in Ref. \cite{vgseg} for four and five
dimensions; some formulas and conditions have to be repeated in this work
because they are used for high dimension generalizations} we express the
solutions of Einstein equations in high dimensional gravity in a most
general form, presenting in formulas all classes of generating and
integration functions and stating all constraints selecting Einstein spaces.
We choose a ''two by two' increasing of spacetime dimensions in formulas
because this provides us a simplest way for generating ''non--Killing''
solutions characterized by certain types of two dimensional conformal
factors depending on two ''anisotropic'' coordinates and the rest ones being
considered as parameters. This allows us to associate to the constructed
classes of solutions certain hierarchies of two--dimensional conformal
symmetries with corresponding invariants and to derive associated solitonic
hierarchies and bi--Hamiltonian structures as we elaborated for four
dimensional spaces in Refs. \cite{vaam,avcf}. The length of this article
does not give us a possibility to present such nonlinear wave developments
for high dimensional gravity.

We have to state certain boundary/ symmetry/ topology conditions and define
in explicit form the integration functions and systems of first order
partial differential equations of type (\ref{lccond}). This is necessary
when we are interested to construct some explicit classes of exact solutions
of Einstein equations (\ref{einst1}) which are related to some physically
important four and higher dimensional metrics. For instance, such high
dimension solutions can be constructed to contain wormholes \cite%
{vsing2,vsing3} and/or to model (non) holonomic Ricci flows of various types
of gravitational solitonic pp--wave, ellipsoid etc configurations \cite%
{vrfs1,vrfs2,vrfs3,vrfs4,vrfs5}. Perhaps all classes of exact solutions
presented in the above mentioned references and (for instance, reviewed in)
Refs. \cite{kramer,bic,ijgmmp,vncg,vsgg,vrflg} can be found as certain
particular cases of metrics (\ref{ansgensol}) or certain equivalently
redefined ones. In this article, we shall emphasize the geometric background
of the anholonomic deformation method for construction high dimensional
exact solutions in gravity and refer readers to cited works, for details and
physical applications.

Any (pseudo) Riemannian metric $\mathbf{g}_{\ ^{k}\alpha ^{\prime }\
^{k}\beta ^{\prime }}(u^{\ ^{k}\gamma ^{\prime }})$ (\ref{metrml}) depending
in general on all $5+2k$ local coordinates on $\ \mathbf{V}$ can be
parametrized in a form $\mathbf{g}_{\ ^{k}\alpha \ ^{k}\beta }(u^{\
^{k}\gamma })$ (\ref{ansgensol}) using transforms of coefficients of metric
 $\mathbf{g}_{\ ^{k}\alpha \ ^{k}\beta }(u^{\ ^{k}\gamma })=e_{\ \ ^{k}\alpha
}^{\ ^{k}\alpha ^{\prime }}e_{\ \ ^{k}\beta }^{\ ^{k}\beta ^{\prime }}%
\mathbf{g}_{\ ^{k}\alpha ^{\prime }\ ^{k}\beta ^{\prime }}(u^{\ ^{k}\gamma
^{\prime }})$
 under vielbein transforms $e_{\ ^{k}\alpha }=e_{\ \ ^{k}\alpha }^{\
^{k}\alpha ^{\prime }}e_{\ ^{k}\alpha ^{\prime }}$ preserving a chosen shell
structure.\footnote{%
We have to solve certain systems of quadratic algebraic equations and define
some $e_{\ ^{\ k}\alpha }^{\ ^{k}\alpha ^{\prime }}(u^{\ ^{k}\beta }),$ for
given coefficients of any (\ref{ansgensol}) and (\ref{metrml}), choosing a
convenient system of coordinates $u^{\ ^{k}\alpha ^{\prime }}=u^{\
^{k}\alpha ^{\prime }}(u^{\ ^{k}\beta });$ to present the a ''shell
structure'' is convenient for purposes to simplify proofs of theorems; in
general, under arbitrary frame/coordinate transform, all ''shells'' mix each
with others.} If the coefficients of such metrics satisfy the conditions of
Main Theorem, they define general solutions of Einstein equations for any
type of sources which can be parametrized in a formally diagonalized (with
shell conditions) form (\ref{source}), with respect to certain classes of
nonholonomic frames of reference. By frame transforms such parametrizations
can be defined for various types of physically important energy--momentum
tensors, cosmological constants (in general, with anisotropic
polariziations), and for vacuum configurations.

For the case $k=0,$ the proof of Theorem \ref{mth} is outlined in Ref. \cite%
{vgseg} using the anholonomic deformation method which was originally
proposed in Refs. \cite{vpoland,vapny,vhep2}. There were published a series
of reviews and generalizations of the method, see \cite%
{ijgmmp,vncg,vsgg,vrflg}, when the solutions of gravitational field
equations in different types of commutative and noncommutative gravity and
Ricci flow theories contain at least one Killing vector symmetry. Such
higher dimension Einstein metrics with Killing symmetries, are generated if
in the conditions of Theorem \ref{mth} there are considered $\omega ,\
^{1}\omega ,\ldots ,\ ^{k}\omega =1.$

Summarizing the results provided in Sections \ref{s2}--\ref{s4} (they may be
considered  also as a review for the 'higher dimension' version of the
anholonomic deformation method of constructing exact solutions in gravity%
\footnote{%
this work is organized as following: In Section \ref{s2}, there are outlined
some necessary results from the geometry of nigher order nonholonomic
manifolds. Section \ref{s3} contains the system of partial differential
equations (PDE) to which the Einstein equations can be transformed under
nonholonomic frame transforms and deformations. In Section \ref{s4}, we
prove that it is possible to construct very general classes of exact
solutions with Killing symmetries for such PDE in high dimensional gravity.
We also show that it is possible to generate ''non--Killing'' solutions in
most general forms if we consider nonholonomically deformed conformal
symmetries. We conclude and discuss the results in Section \ref{s5}.}), we
get a proof of Main Theorem \ref{mth}.

Finally, we note that even we shall present a number of key results and some
technical details, we shall not repeat explicit computations for
coefficients of tensors and connections presented for the ''Killing case'' $%
k=0$ in our previous works \cite{vmth1,vmth2,vmth3}, see also reviews and
generalizations in \cite{ijgmmp,vncg}.

\section{Higher Order Nonholonomic Manifolds}

\label{s2} In this section, we outline the geometry of higher order
nonholonomic manifolds which, for simplicity, \ will be modelled as (pseudo)
Riemannian manifolds with higher order ''shell'' structure of dimensions $\
^{k}n=n+\ m+\ ^{1}m+...+\ ^{k}m.$ Certain geometric ideas and constructions
originate from the geometry of higher order Lagrange--Finsler and
Hamilton--Cartan spaces defined on higher order (co) tangent classical and
quantum bundles \cite{m1,m2,m3,av}. Such nonholonomic structures were
investigated for models of (super) strings in higher order anisotropic
(super) spaces \cite{vnp1} and for anholonomic higher order Clifford/ spinor
bundles \cite{vsp1,vv1,vst}. In explicit form, the (super) gravitational
gauge field equations and conservations laws were analyzed in Refs. \cite%
{vgauge,vvc,vd1,vd2}.

\subsection{Higher order N--adapted frames and metrics}

Our geometric spacetime arena is defined by (pseudo) Riemannian manifolds $\
^{k}\mathbf{V}$ enabled with nonholonomic distributions (which can be
prescribed in any convenient for our purposes form like we can fix any
system of reference/ coordinates).

\begin{definition}
A manifold $\ ^{k}\mathbf{V,}$ $\dim \ ^{k}\mathbf{V=}\ \ ^{k}n,$ is higher
order nonholonomic (equivalently, $k$--anholonomic) if its tangent bundle $%
T\ ^{k}\mathbf{V}$ is enabled with a Whitney sum of type
\begin{equation}
T\ \ ^{k}\mathbf{V=h}\ \ ^{k}\mathbf{V\oplus }\ v\ \ ^{k}\mathbf{V\mathbf{%
\oplus }\ \ }^{1}v\ \ ^{k}\mathbf{\mathbf{V\oplus }\ }\ ^{2}v\ \ ^{k}\mathbf{%
\mathbf{\mathbf{V}\oplus }\ \ldots \mathbf{\oplus }}\ ^{k}v\ \ ^{k}\mathbf{%
\mathbf{\mathbf{\mathbf{V}}}.}  \label{whitney}
\end{equation}%
For $k=0,$ we get a usual nonholonomic manifold (or, in this case,
N--anholonomic) enabled with nonlinear connection (N--connection) structure.
We say that a distribution (\ref{whitney}) defines a higher order
N--connection (equivalently, $\ ^{k}\mathbf{N}$--connection) structure.
\end{definition}

Locally, a $\ ^{k}\mathbf{N}$--connection is defined by its coefficients
$\ ^{k}\mathbf{N}=\{N_{i}^{a},N_{\alpha }^{\ ^{1}a},$ $N_{\ ^{1}\alpha }^{\
^{2}a}, \ldots , N_{\ ^{k-1}\alpha }^{\ ^{k}a}\},$ 
with 
$\{N_{i}^{a}\}\subset \{N_{\alpha }^{\ ^{1}a}\}\subset \{N_{\ ^{1}\alpha
}^{\ ^{2}a}\}\subset \ldots \subset \ N_{\ ^{k-1}\alpha }^{\ ^{k}a}\},$
when
\begin{eqnarray*}
\ ^{0}\mathbf{N} &=&N_{i}^{a}(u^{\alpha })dx^{i}\otimes \frac{\partial }{%
\partial y^{a}},\ ^{1}\mathbf{N}=N_{\beta }^{\ ^{1}a}(u^{\ ^{1}\alpha
})du^{\beta }\otimes \frac{\partial }{\partial y^{\ ^{1}a}}, \\
\ ^{2}\mathbf{N} &=&N_{\ ^{1}\beta }^{\ ^{2}a}(u^{\ ^{2}\alpha })du^{\
^{1}\beta }\otimes \frac{\partial }{\partial y^{\ ^{2}a}},\mathbf{\ldots ,}\
^{k}\mathbf{N}=N_{\ ^{k-1}\beta }^{\ ^{k}a}(u^{\ ^{k}\alpha })du^{\
^{k-1}\beta }\otimes \frac{\partial }{\partial y^{\ ^{k}a}}.
\end{eqnarray*}

It should be noted that for general coordinate transforms on $\ ^{k}\mathbf{V%
},$ there is a mixing of coefficients and coordinates. \footnote{%
We use boldface symbols for spaces (and geometric objects on such spaces)
enabled with a structure of N--coefficients.} For simplicity, we can work
with adapted coordinates when some sets of coordinates on a shell of lower
order are contained in a subset of coordinates on shells of higher order by
trivial extensions like $u^{\ ^{k-s}\alpha }\rightarrow u^{\ ^{k-s+1}\alpha
}=(u^{\ ^{k-s}\alpha },y^{\ ^{k-s+1}a}).$

\begin{proposition}
There is a class of N--adapted frames and dual (co-) frames (equivalently,
vielbeins) on $\ ^{k}\mathbf{V}$ which depend linearly on coefficients of $\
^{k}\mathbf{N}$--connection.
\end{proposition}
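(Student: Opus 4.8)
The plan is to write down the natural N--elongated frame and coframe on $\ ^{k}\mathbf{V}$ explicitly, level by level, and then to verify three things by inspection: that each member of the frame lies in the correct summand of the Whitney decomposition (\ref{whitney}), that the frame and coframe are mutually dual, and that every coefficient occurring in them enters linearly (at most affinely) in the components of the $\ ^{k}\mathbf{N}$--connection. The phrase ``a class of'' in the statement reflects the residual freedom of composing with arbitrary $GL$--valued vielbein transforms that preserve the shell structure; the distinguished representative of this class is the one constructed below.

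For the base shell $k=0$ I would set $\mathbf{e}_{i}=\partial/\partial x^{i}-N_{i}^{a}(u)\,\partial/\partial y^{a}$ and $\mathbf{e}_{a}=\partial/\partial y^{a}$, with dual $\mathbf{e}^{i}=dx^{i}$, $\mathbf{e}^{a}=dy^{a}+N_{i}^{a}(u)\,dx^{i}$. Linearity in the $N_{i}^{a}$ is manifest, the identity $\mathbf{e}^{\alpha}(\mathbf{e}_{\beta})=\delta_{\beta}^{\alpha}$ is a one--line check, and $\{\mathbf{e}_{i}\}$, $\{\mathbf{e}_{a}\}$ span $h\,\ ^{0}\mathbf{V}$ and $v\,\ ^{0}\mathbf{V}$ respectively, i.e.\ they are adapted to $T\,\ ^{0}\mathbf{V}=h\,\ ^{0}\mathbf{V}\oplus v\,\ ^{0}\mathbf{V}$. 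I would then induct on the shell index $s$: given an N--adapted (co)frame on $\ ^{s-1}\mathbf{V}$, extend it to $\ ^{s}\mathbf{V}$ by elongating every already--defined frame vector $\mathbf{e}_{\ ^{s-1}\alpha}$ with the single new term $-N_{\ ^{s-1}\alpha}^{\ ^{s}a}\,\partial/\partial y^{\ ^{s}a}$, adjoining $\mathbf{e}_{\ ^{s}a}=\partial/\partial y^{\ ^{s}a}$, and dually appending $\mathbf{e}^{\ ^{s}a}=dy^{\ ^{s}a}+N_{\ ^{s-1}\alpha}^{\ ^{s}a}\,du^{\ ^{s-1}\alpha}$ while leaving the lower--order covectors untouched. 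The nested inclusions $\{N_{i}^{a}\}\subset\{N_{\alpha}^{\ ^{1}a}\}\subset\cdots\subset\{N_{\ ^{k-1}\alpha}^{\ ^{k}a}\}$ guarantee that this step only \emph{adds} terms and never disturbs the previously built elongations (in adapted coordinates), so linearity in the full collection of N--coefficients and adaptedness to the finer splitting are both preserved.

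The remaining verifications are routine: (i) mutual duality telescopes through the shells because, in adapted coordinates, each new coframe element pairs to $\delta$ with the new vertical vector and to zero with all others; (ii) computing the commutators $[\mathbf{e}_{\ ^{k}\alpha},\mathbf{e}_{\ ^{k}\beta}]=W_{\ ^{k}\alpha\ ^{k}\beta}^{\ ^{k}\gamma}\,\mathbf{e}_{\ ^{k}\gamma}$ exhibits the anholonomy coefficients $W$ as expressions built from the $\partial N$'s and the curvatures of the $\ ^{k}\mathbf{N}$--connection, confirming that the frame is genuinely non--holonomic. The one place needing real care, and which I expect to be the main obstacle, is the behaviour under \emph{non}--adapted coordinate changes on $\ ^{k}\mathbf{V}$ (where, as noted after the definition, all shells mix): one must check that the transformation law forced on the $N$'s by demanding that the elongated frame transform as an honest frame is precisely the affine law inherited from the direct--sum structure (\ref{whitney}). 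Once that compatibility is recorded, passing to adapted coordinates — legitimate since the statement only asserts existence of a class of such frames — finishes the proof.
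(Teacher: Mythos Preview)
Your proposal is correct and follows essentially the same approach as the paper: an explicit recursive construction of the N--elongated frames and coframes, shell by shell, exactly as in the formulas you wrote down. The paper's own proof is in fact far terser than yours --- it simply displays the recurrent formulas (your $\mathbf{e}_{\ ^{s-1}\alpha}\mapsto \mathbf{e}_{\ ^{s-1}\alpha}-N_{\ ^{s-1}\alpha}^{\ ^{s}a}\partial_{\ ^{s}a}$, etc.) and ends the proof, leaving duality and linearity as evident from inspection; your worry about behaviour under non--adapted coordinate changes is unnecessary here, since the statement only asserts existence of such a class and the construction in adapted coordinates already suffices.
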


\begin{proof}
We construct such frames following recurrent formulas for $k=0,1,...,$ when
\begin{eqnarray}
\mathbf{e}_{\ ^{k}\nu } &=&\left( \mathbf{e}_{\ ^{k-1}v}=\frac{\partial }{%
\partial u^{\ ^{k-1}v}}-N_{\ ^{k-1}v}^{\ ^{k}a}\frac{\partial }{\partial
y^{\ ^{k}a}},e_{\ ^{k}a}=\partial _{^{k}a}=\frac{\partial }{\partial y^{\
^{k}a}}\right) ,  \label{dder} \\
\mathbf{e}^{\ ^{k}\mu } &=&\left( e^{^{k-1}\mu }=du^{^{k-1}\mu },\mathbf{e}%
^{\ ^{k}a}=dy^{\ ^{k}a}+N_{^{k-1}v}^{\ ^{k}a}du^{^{k-1}v}\right) .
\label{ddif}
\end{eqnarray}
$\ \square $ (End proof.)
\end{proof}

\vskip5pt

The vielbeins (\ref{dder}) satisfy the nonholonomy relations
\begin{equation}
\lbrack \mathbf{e}_{\ ^{k}\alpha },\mathbf{e}_{\ ^{k}\beta }]=\mathbf{e}_{\
^{k}\alpha }\mathbf{e}_{\ ^{k}\beta }-\mathbf{e}_{\ ^{k}\beta }\mathbf{e}_{\
^{k}\alpha }=w_{\ ^{k}\alpha \ ^{k}\beta }^{\ ^{k}\gamma }\mathbf{e}_{\
^{k}\gamma }  \label{anhrel}
\end{equation}%
with (antisymmetric) nontrivial anholonomy coefficients\newline
$w_{\ ^{k-1}\alpha \ ^{k}a}^{\ ^{k}b}=\partial N_{^{k-1}\alpha }^{\
^{k}b}/\partial u^{\ ^{k}a}$ and $w_{\ ^{k-1}\alpha \ ^{k-1}\beta }^{\
^{k}b}=\Omega _{\ ^{k-1}\alpha \ ^{k-1}\beta }^{\ ^{k}b},$ where
\begin{equation}
\Omega _{\ ^{k-1}\alpha \ ^{k-1}\beta }^{\ ^{k}b}=\mathbf{e}_{^{k-1}\beta
}\left( N_{^{k-1}\alpha }^{^{k}b}\right) -\mathbf{e}_{^{k-1}\alpha }\left(
N_{^{k-1}\beta }^{^{k}b}\right)  \label{ncurv}
\end{equation}%
are the coefficients of curvature $\ ^{k}\mathbf{\Omega }$ of \
N--connection $\ ^{k}\mathbf{N.}$ The particular holonomic/ integrable case
is selected by the integrability conditions $w_{\ ^{k-1}\alpha \ ^{k}a}^{\
^{k}b}=0.$

Any (pseudo) Riemannian metric $\mathbf{g}$ on $\ ^{k}\mathbf{V}$ can be
written in N--adapted form, we shall write in brief that $\mathbf{g}=\ ^{k}%
\mathbf{g=\{\mathbf{g}}_{\ ^{k}\beta \ ^{k}\gamma }\mathbf{\}},$ for \
\begin{eqnarray}
\ ^{k}\mathbf{g} &=&g_{\ ^{k-1}\beta \ ^{k-1}\gamma }(u^{\ ^{k-1}\alpha
})e^{^{k-1}\beta }\otimes e^{^{k-1}\gamma }+h_{\ ^{k}a\ ^{k}b}(u^{\
^{k}\alpha })\mathbf{e}^{\ ^{k}a}\otimes \mathbf{e}^{\ ^{k}b}  \label{gendm}
\\
&=&g_{ij}(x^{k})e^{i}\otimes e^{j}+h_{ab}(u^{\alpha })\mathbf{e}^{a}\otimes
\mathbf{e}^{b}+h_{\ ^{1}a\ ^{1}b}(u^{\ ^{1}\alpha })\mathbf{e}^{\
^{1}a}\otimes \mathbf{e}^{\ ^{1}b}  \notag \\
&&+h_{\ ^{2}a\ ^{2}b}(u^{\ ^{2}\alpha })\mathbf{e}^{\ ^{2}a}\otimes \mathbf{e%
}^{\ ^{2}b}+\ldots +h_{\ ^{k}a\ ^{k}b}(u^{\ ^{k}\alpha })\mathbf{e}^{\
^{k}a}\otimes \mathbf{e}^{\ ^{k}b},  \notag
\end{eqnarray}%
for some N--adapted coefficients $\mathbf{g}_{\ ^{k}\beta \ ^{k}\gamma }%
\mathbf{=}\left[ g_{ij},h_{ab},h_{\ ^{1}a\ ^{1}b},\ldots ,h_{\ ^{k}a\ ^{k}b}%
\right] $ and $N_{\ ^{k-1}\alpha }^{\ ^{k}a}.$ For constructing exact
solutions in high dimensional gravity, it is convenient to work with such
N--adapted formulas for tensors' and connections' coefficients.

For instance, we get from (\ref{gendm}) a metric with a parametrization of
type (\ref{ansgensol}) when all $\ ^{k}\omega =1$ if we choose
\begin{eqnarray}
g_{ij} &=&diag[\epsilon _{1},g_{\widehat{i}}(x^{\widehat{k}%
})],h_{ab}=diag[h_{a}(x^{i},v)],\  \notag \\
 &&N_{k}^{4}=w_{k}(x^{i},v),N_{k}^{5}=n_{k}(x^{i},v);  \label{data1}  \\
h_{\ ^{1}a\ ^{1}b} &=&diag[h_{\ ^{1}a}(u^{\alpha },\ ^{1}v)],  \notag \\
&&N_{\beta }^{6}=w_{\beta }(u^{\alpha },\ ^{1}v),N_{\beta }^{7}=n_{\beta
}(u^{\alpha },\ ^{1}v);  \notag \\
h_{\ ^{2}a\ ^{2}b} &=&diag[h_{\ ^{2}a}(u^{\ ^{1}\alpha },\ ^{2}v)],  \notag
\\
&&N_{\ ^{1}\beta }^{8}=w_{\ ^{1}\beta }(u^{\ ^{1}\alpha },\ ^{2}v),N_{\
^{1}\beta }^{9}=n_{\ ^{1}\beta }(u^{\ ^{1}\alpha },\ ^{2}v);  \notag \\
&&...........  \notag \\
h_{\ ^{k}a\ ^{k}b} &=&diag[h_{\ ^{k}a}(u^{\ ^{k-1}\alpha },\ ^{k}v)],  \notag
\\
&&N_{\ ^{k-1}\beta }^{4+2k}=w_{\ ^{k-1}\beta }(u^{\ ^{k-1}\alpha },\
^{k}v),N_{\ ^{k-1}\beta }^{5+2k}=n_{\ ^{k-1}\beta }(u^{\ ^{k-1}\alpha },\
^{k}v).  \notag
\end{eqnarray}%
Such a metric has symmetries of $k+1$ Killing vectors, $e_{5}=\partial
/\partial y^{5},e_{7}=\partial /\partial y^{7},...,e_{5+2k}=\partial
/\partial y^{5+2k},$ because its coefficients do not depend on $y^{5},y^{7},$
$...y^{5+2k}.$ Introducing nontrivial $\ ^{k}\omega ^{2}(u^{\ ^{k}\alpha })$
depending also on $y^{5+2k},$ as multiples before $h_{\ ^{k}a},$ we get
N--adapted parametrizations, up to certain frame/ coordinate transforms, for
all metrics on $\ ^{k}\mathbf{V}.$ In Section \ref{s3}, we shall define the
equations which must satisfy the coefficients of N--adapted metrics when (%
\ref{data1}) will generate exact solutions of Einstein equations.

\subsection{N--adapted deformations of the Levi--Civita connection}

By straightforward computations, it is a cumbersome task to prove using the
Levi--Civita connection\footnote{%
a unique one, which is metric compatible and with zero torsion, and
completely defined by the metric structure} $\ ^{k}\nabla $) that the
Einstein equations (\ref{einst1}) on higher dimensional spacetimes are
solved by metrics of type (\ref{ansgensol}). We are going to show explicitly
that general solutions for $\ ^{k}\nabla $ can be constructed passing three
steps:\ 1) to adapt our constructions to N--adapted frames of type $\mathbf{e%
}_{\alpha }$ (\ref{dder}) \ and $\mathbf{e}^{\mu }$ (\ref{ddif}); 2) to use
as an auxiliary tool (we emphasize, in Einstein gravity and its
generalizations on high dimensional (pseudo) Riemannian manifolds) a new
type of linear connection $\ ^{k}\widehat{\mathbf{D}}=\{\ \mathbf{\hat{\Gamma%
}}_{\ ^{k}\beta \ ^{k}\gamma }^{\ ^{k}\alpha }\},$ also uniquely defined by
the metric structure; 3) To constrain the integral varieties of general
solutions in such a form that $\ ^{k}\widehat{\mathbf{D}}\rightarrow \
^{k}\nabla .$

\begin{definition}
A distinguished connection $\ ^{k}\mathbf{D}$ (in brief, d--connection) on $%
\ ^{k}\mathbf{V}$ is a linear connection preserving under parallelism a
conventional horizontal and $k$--vertical splitting (in brief, h-- and
v--splitting) induced by \ $\ ^{k}\mathbf{N}$--connection structure (\ref%
{whitney}).
\end{definition}

We note that the Levi--Civita connection $\ ^{k}\nabla ,$ for which $\
^{k}\nabla \ ^{k}\mathbf{g=0}$ and $\ ^{k}\mathcal{T}^{\ ^{k}\alpha
}\doteqdot \ ^{k}\nabla \mathbf{e}^{\ ^{k}\alpha }=0,$ is not a
d--connection because, in general, it is not adapted to a N--splitting
defined by a Whitney sum (\ref{whitney}). So, in order to elaborate
self--consistent geometric/physical models adapted to a N--connection it is
necessary to work with d--connections.

\begin{theorem}
\label{th1}There is a unique canonical d--connection $\ ^{k}\widehat{\mathbf{%
D}}$ satisfying the condition $\ ^{k}\widehat{\mathbf{D}}\ ^{k}\mathbf{g=}0$
and with vanishing ''pure'' horizontal and vertical torsion coefficients, i.
e. $\widehat{T}_{\ jk}^{i}=0$ and $\widehat{T}_{\ \ ^{k}b\ ^{k}c}^{\
^{k}a}=0,$ see (below) formulas (\ref{dtor}).
\end{theorem}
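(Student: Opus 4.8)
The plan is to proceed by an explicit N-adapted construction organized shell by shell, mimicking the classical derivation of the canonical d-connection in Finsler--Lagrange geometry. First I would write the most general d-connection $\ ^{k}\widehat{\mathbf{D}}=\{\widehat{\mathbf{\Gamma}}_{\ ^{k}\beta \ ^{k}\gamma}^{\ ^{k}\alpha}\}$ with respect to the N-adapted frame $\mathbf{e}_{\ ^{k}\alpha}$ (\ref{dder}) and coframe $\mathbf{e}^{\ ^{k}\mu}$ (\ref{ddif}). Preserving under parallelism the horizontal and ($k$-fold) vertical splitting (\ref{whitney}) forces the coefficient array to be block-structured: on each shell $s=0,1,\ldots,k$ one gets a ``horizontal-type'' piece $\widehat{L}^{(s)}$ differentiating along $u^{\ ^{s-1}\alpha}$ and a ``vertical-type'' piece $\widehat{C}^{(s)}$ differentiating along $y^{\ ^{s}a}$, with no components mixing the two subspaces. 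The unknowns are thus the arrays $\widehat{L}_{\ ^{s-1}\beta \ ^{s-1}\gamma}^{\ ^{s}a}$ (together with their lower-order analogues embedded by the shell inclusions) and $\widehat{C}_{\ ^{s}b\ ^{s-1}\gamma}^{\ ^{s}a}$.

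Second, I would impose metric compatibility $\ ^{k}\widehat{\mathbf{D}}\ ^{k}\mathbf{g}=0$. Since $\ ^{k}\mathbf{g}$ in the parametrization (\ref{gendm}) is block-diagonal in the N-adapted coframe, this condition separates into independent equations for $\widehat{\mathbf{D}}g_{\ ^{s-1}\beta \ ^{s-1}\gamma}$ and $\widehat{\mathbf{D}}h_{\ ^{s}a\ ^{s}b}$ on each shell. In parallel I would compute the torsion $\ ^{k}\widehat{\mathcal{T}}$ from formulas (\ref{dtor}) and set to zero the ``pure'' components $\widehat{T}_{\ jk}^{i}=0$ and, recurrently, $\widehat{T}_{\ \ ^{s}b\ ^{s}c}^{\ ^{s}a}=0$ for every shell (in particular $\widehat{T}_{\ \ ^{k}b\ ^{k}c}^{\ ^{k}a}=0$). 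The remaining mixed torsion components, e.g. $\widehat{T}^{\ ^{s}a}_{\ ^{s-1}\alpha \ ^{s-1}\beta}=\Omega^{\ ^{s}a}_{\ ^{s-1}\alpha \ ^{s-1}\beta}$ from (\ref{ncurv}) and $\widehat{T}^{\ ^{s}a}_{\ ^{s-1}\alpha \ ^{s}b}=\partial N^{\ ^{s}a}_{\ ^{s-1}\alpha}/\partial y^{\ ^{s}b}$, are left unconstrained; they survive but are completely fixed by $\mathbf{g}$ and the $\ ^{k}\mathbf{N}$-coefficients, so they do not enter as free data.

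Third comes the crucial algebraic step: solving this linear system for the coefficients. For the h-part on shell $s$ this is the standard Koszul/Christoffel manipulation --- write $\mathbf{e}_{\ ^{s-1}\mu}g_{\ ^{s-1}\nu \ ^{s-1}\lambda}=\ldots$, cyclically permute the three base indices, add and subtract, and use the symmetry $\widehat{L}^{(s)}_{[\ ^{s-1}\beta \ ^{s-1}\gamma]}=0$ (vanishing pure h-torsion) to isolate $\widehat{L}^{(s)}$ uniquely as a combination of $\mathbf{e}$-derivatives of $g$; for the v-part an identical manipulation in the vertical indices, using $\widehat{C}^{(s)}_{[\ ^{s}b\ ^{s}c]}=0$, isolates $\widehat{C}^{(s)}$. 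Running this argument by induction on the shell number $s$ --- each shell depending on the lower ones only through already-determined data --- produces closed recurrent formulas for all coefficients, and the fact that each coefficient is written as a \emph{unique} such combination is exactly the uniqueness claim. The main obstacle I expect is not conceptual but bookkeeping: keeping the multi-shell index conventions consistent so that the level-$s$ ``partial'' frame derivatives $\mathbf{e}_{\ ^{s-1}\nu}$ genuinely involve only $N^{\ ^{s}a}_{\ ^{s-1}\nu}$ and not higher-shell N-coefficients, and checking that the surviving mixed torsions drop out of the combined metric-compatibility-plus-zero-pure-torsion system rather than over-determining it. Once the induction hypothesis is set up cleanly, each shell is a verbatim repetition of the $k=0$ computation recorded in Refs. \cite{ijgmmp,vncg}.
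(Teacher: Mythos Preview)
Your proposal is correct and follows the classical derivation route: start from the general N-adapted d-connection, impose metric compatibility and vanishing pure torsions, and solve by a Koszul/Christoffel cyclic-permutation argument, shell by shell. This is the standard way the canonical d-connection is obtained in Finsler--Lagrange geometry and it yields both existence and uniqueness in one stroke.

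The paper takes a different, more direct route: it simply \emph{writes down} the explicit coefficient formulas (\ref{candcon}) for $\widehat{\mathbf{\Gamma}}_{\ \ ^{k}\alpha \ ^{k}\beta}^{\ ^{k}\gamma}$ and then \emph{verifies} by substitution that (i) the metric-compatibility relations (\ref{metcomp}) hold and (ii) plugging (\ref{candcon}) into the torsion formulas (\ref{dtor}) gives $\widehat{T}_{\ jk}^{i}=0$ and $\widehat{T}_{\ \ ^{k}b\ ^{k}c}^{\ ^{k}a}=0$. Uniqueness is not argued separately; it is taken as inherited from the $k=0$ case in the cited literature. Your approach buys a self-contained uniqueness proof and explains \emph{why} the formulas look the way they do; the paper's approach buys brevity, since the answer is already known from \cite{ma1987,ma,ijgmmp,vncg} and only needs to be checked in the higher-shell notation. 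One small point: your list of unknown coefficient blocks is slightly incomplete compared with the paper's parametrization --- besides $\widehat{L}^{(s)}$ and the pure-vertical $\widehat{C}^{(s)}$, there are also the mixed pieces $\widehat{C}_{\ ^{s-1}\beta\ ^{s}c}^{\ ^{s-1}\alpha}$ and $\widehat{L}_{\ ^{s}b\ ^{s-1}\alpha}^{\ ^{s}a}$ to solve for; the Koszul manipulation handles them too, but you should track them explicitly when carrying out the induction.
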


\begin{proof}
Let us define $\ ^{k}\widehat{\mathbf{D}}$ as a 1--form
\begin{equation}
\widehat{\mathbf{\Gamma }}_{\ \ ^{k}\beta }^{\ ^{k}\alpha }=\widehat{\mathbf{%
\Gamma }}_{\ \ ^{k}\beta \ ^{k}\gamma }^{\ ^{k}\alpha }\mathbf{e}^{\
^{k}\gamma }  \label{cdcf}
\end{equation}%
with $\widehat{\mathbf{\Gamma }}_{\ \ ^{k}\alpha \ ^{k}\beta }^{\ ^{k}\gamma
}= ( \widehat{L}_{jk}^{i},\widehat{L}_{bk}^{a},\widehat{C}_{jc}^{i},\widehat{%
C}_{bc}^{a};\widehat{L}_{\ ^{1}b\alpha }^{\ ^{1}a},\widehat{C}_{\beta \
^{1}c}^{\alpha },\widehat{C}_{\ ^{1}b\ ^{1}c}^{\ ^{1}a};\ldots ;\widehat{L}%
_{\ ^{k}b\ ^{k-1}\alpha }^{\ ^{k}a},$\newline
$\widehat{C}_{\ ^{k-1}\beta \ ^{k}c}^{\ ^{k-1}\alpha },\widehat{C}_{\ ^{k}b\
^{k}c}^{\ ^{k}a}),$  where%
 $\widehat{L}_{\ ^{1}\beta \ ^{1}\gamma }^{\ ^{1}\alpha } =\widehat{\mathbf{%
\Gamma }}_{\beta \gamma }^{\alpha }=\left( \widehat{L}_{jk}^{i},\widehat{L}%
_{bk}^{a},\widehat{C}_{jc}^{i},\widehat{C}_{bc}^{a}\right);\
\widehat{L}_{\ ^{2}\beta \ ^{2}\gamma }^{\ ^{2}\alpha } =\widehat{\mathbf{%
\Gamma }}_{\ ^{1}\beta \ ^{1}\gamma }^{\ ^{1}\alpha };\ldots ;\widehat{L}_{\
^{k}\beta \ ^{k}\gamma }^{\ ^{k}\alpha }=\widehat{\mathbf{\Gamma }}_{\
^{k-1}\beta \ ^{k-1}\gamma }^{\ ^{k-1}\alpha },$
for%
\begin{eqnarray*}
\mathbf{g}_{\alpha \beta } &=&[g_{ij}(x^{k}),h_{ab}(u^{\gamma })],\mathbf{g}%
_{\ ^{1}\alpha \ ^{1}\beta }=[\mathbf{g}_{\alpha \beta }(u^{\gamma }),h_{\
^{1}a\ ^{1}b}(u^{\ ^{1}\gamma })], \\
\mathbf{g}_{\ ^{2}\alpha \ ^{2}\beta } &=&[\mathbf{g}_{\ ^{1}\alpha \
^{1}\beta }(u^{\ ^{1}\gamma }),h_{\ ^{2}a\ ^{2}b}(u^{\ ^{2}\gamma })],\ldots
, \\
\mathbf{g}_{\ ^{k}\alpha \ ^{k}\beta } &=&[\mathbf{g}_{\ ^{k-1}\alpha \
^{k-1}\beta }(u^{\ ^{k-1}\gamma }),h_{\ ^{k}a\ ^{k}b}(u^{\ ^{k}\gamma })],
\end{eqnarray*}%
\begin{eqnarray}
\mbox{where } &&\widehat{L}_{jk}^{i} =
\frac{1}{2}g^{ir}\left( \mathbf{e}_{k}g_{jr}+\mathbf{%
e}_{j}g_{kr}-\mathbf{e}_{r}g_{jk}\right) ,  \label{candcon} \\
&&\widehat{L}_{bk}^{a} = e_{b}(N_{k}^{a})+\frac{1}{2}h^{ac}\left( \mathbf{e}%
_{k}h_{bc}-h_{dc}\ e_{b}N_{k}^{d}-h_{db}\ e_{c}N_{k}^{d}\right) ,  \notag \\
&&\widehat{C}_{jc}^{i} = \frac{1}{2}g^{ik}e_{c}g_{jk},\
\widehat{C}_{bc}^{a} = \frac{1}{2}h^{ad}\left(
e_{c}h_{bd}+e_{c}h_{cd}-e_{d}h_{bc}\right) ,  \notag
\end{eqnarray}%
\begin{eqnarray*}
\widehat{L}_{\ ^{1}b\alpha }^{\ ^{1}a} &=&e_{\ ^{1}b}(N_{\ \alpha }^{\
^{1}a})+\frac{1}{2}h^{\ ^{1}a\ ^{1}c}(\mathbf{e}_{\alpha }h_{\ ^{1}b\
^{1}c}-h_{\ ^{1}d\ ^{1}c}\ e_{\ ^{1}b}N_{\ \alpha }^{\ ^{1}d} \\
&&-h_{\ ^{1}d\ ^{1}b}\ e_{\ ^{1}c}N_{\ \alpha }^{\ ^{1}d}),\
\widehat{C}_{\beta \ ^{1}c}^{\alpha } = \frac{1}{2}g^{\alpha \gamma }e_{\
^{1}c}g_{\beta \gamma },\  \\
\widehat{C}_{\ ^{1}b\ ^{1}c}^{\ ^{1}a} &=& \frac{1}{2}h^{\ ^{1}a\
^{1}d}\left( e_{\ ^{1}c}h_{\ ^{1}b\ ^{1}d}+e_{\ ^{1}c}h_{\ ^{1}c\
^{1}d}-e_{\ ^{1}d}h_{\ ^{1}b\ ^{1}c}\right) ,
\end{eqnarray*}%
\begin{eqnarray*}
\widehat{L}_{\ ^{2}b\ ^{1}\alpha }^{\ ^{2}a} &=&e_{\ ^{2}b}(N_{\ \
^{1}\alpha }^{\ ^{2}a})+\frac{1}{2}h^{\ ^{2}a\ ^{2}c}(\mathbf{e}_{\
^{1}\alpha }h_{\ ^{2}b\ ^{2}c}-h_{\ ^{2}d\ ^{2}c}\ e_{\ ^{2}b}N_{\ \
^{1}\alpha }^{\ ^{2}d} \\
&&-h_{\ ^{2}d\ ^{2}b}\ e_{\ ^{2}c}N_{\ \ ^{1}\alpha }^{\ ^{2}d}),\
\widehat{C}_{\ ^{1}\beta \ ^{2}c}^{\ ^{1}\alpha } =\frac{1}{2}g^{\
^{1}\alpha \ ^{1}\gamma }e_{\ ^{2}c}g_{\ ^{1}\beta \ ^{1}\gamma }, \\
\ \widehat{C}_{\ ^{2}b\ ^{2}c}^{\ ^{2}a} &=&\frac{1}{2}h^{\ ^{2}a\
^{2}d}\left( e_{\ ^{2}c}h_{\ ^{2}b\ ^{2}d}+e_{\ ^{2}c}h_{\ ^{2}c\
^{2}d}-e_{\ ^{2}d}h_{\ ^{2}b\ ^{2}c}\right) ,
\end{eqnarray*}%
\begin{equation*}
.......
\end{equation*}%
\begin{eqnarray*}
\widehat{L}_{\ ^{k}b\ ^{k-1}\alpha }^{\ ^{k}a} &=&e_{\ ^{k}b}(N_{\
^{k-1}\alpha }^{\ ^{k}a})+\frac{1}{2}h^{\ ^{k}a\ ^{k}c}(\mathbf{e}_{\
^{k-1}\alpha }h_{\ ^{k}b\ ^{k}c}-h_{\ ^{k}d\ ^{k}c}\ e_{\ ^{k}b}N_{\ \
^{k-1}\alpha }^{\ ^{k}d}- \\
&&h_{\ ^{k}d\ ^{k}b}\ e_{\ ^{k}c}N_{\ \ ^{k-1}\alpha }^{\ ^{k}d}),\
\widehat{C}_{\ ^{k-1}\beta \ ^{k}c}^{\ ^{k-1}\alpha } = \frac{1}{2}g^{\
^{k-1}\alpha\ ^{k-1}\gamma }e_{\ ^{k}c}g_{\ ^{k-1}\beta \ ^{k-1}\gamma},\
\\
\widehat{C}_{\ ^{k}b\ ^{k}c}^{\ ^{k}a} &=&\frac{1}{2}h^{\ ^{k}a\
^{k}d}\left( e_{\ ^{k}c}h_{\ ^{k}b\ ^{k}d}+e_{\ ^{k}c}h_{\ ^{k}c\
^{k}d}-e_{\ ^{k}d}h_{\ ^{k}b\ ^{k}c}\right) .
\end{eqnarray*}%
It follows by straightforward verifications that $\ ^{k}\widehat{\mathbf{D}}%
\ ^{k}\mathbf{g}=0,$ where this N--adapted metric compatibility condition
splits into
\begin{eqnarray}
\widehat{D}_{j}g_{kl} &=&0,\widehat{D}_{a}g_{kl}=0,\widehat{D}_{j}h_{ab}=0,%
\widehat{D}_{a}h_{bc}=0,  \label{metcomp} \\
\widehat{D}_{\gamma }g_{\alpha \beta } &=&0,\widehat{D}_{\ ^{1}a}g_{\alpha
\beta }=0,\widehat{D}_{\gamma }h_{\ ^{1}a\ ^{1}b}=0,\widehat{D}_{\
^{1}a}h_{\ ^{1}b\ ^{1}c}=0,  \notag \\
\widehat{D}_{\ ^{1}\gamma }g_{\ ^{1}\alpha \ ^{1}\beta } &=&0,\widehat{D}_{\
^{2}a}g_{\ ^{1}\alpha \ ^{1}\beta }=0,\widehat{D}_{\ ^{1}\gamma }h_{\ ^{2}a\
^{2}b}=0,\widehat{D}_{\ ^{2}a}h_{\ ^{2}b\ ^{2}c}=0,  \notag \\
&&......  \notag \\
\widehat{D}_{\ ^{k-1}\gamma }g_{\ ^{k-1}\alpha \ ^{k-1}\beta } &=&0,\widehat{%
D}_{\ ^{k}a}g_{\ ^{k-1}\alpha \ ^{k-1}\beta }=0,  \notag \\
\widehat{D}_{\ ^{k-1}\gamma }h_{\ ^{k}a\ ^{k}b} &=&0,\widehat{D}_{\
^{k}a}h_{\ ^{k}b\ ^{k}c}=0,  \notag
\end{eqnarray}%
where the covariant derivatives are computed using corresponding
coefficients, step by step, on every shell. The canonical d--connection
contains an induced torsion (completely determined by the coefficients of
metric, and respective N--connection coefficients)
\begin{equation}
\widehat{\mathcal{T}}^{\ ^{k}\alpha }=\widehat{\mathbf{T}}_{\ \ ^{k}\beta \
^{k}\gamma }^{\ ^{k}\alpha }\mathbf{e}^{\ ^{k}\beta }\wedge \mathbf{e}^{\
^{k}\gamma }\doteqdot \ ^{k}\widehat{\mathbf{D}}\mathbf{e}^{\ ^{k}\alpha }=d%
\mathbf{e}^{\ ^{k}\alpha }+\widehat{\mathbf{\Gamma }}_{\ \ ^{k}\beta }^{\
^{k}\alpha }\wedge \mathbf{e}^{\ ^{k}\beta },  \label{tors}
\end{equation}%
with coefficients
\begin{eqnarray}
\widehat{T}_{\ jk}^{i} &=&\widehat{L}_{\ jk}^{i}-\widehat{L}_{\ kj}^{i},\
\widehat{T}_{\ ja}^{i}=-\widehat{T}_{\ aj}^{i}=\widehat{C}_{\ ja}^{i},\ T_{\
ji}^{a}=-\Omega _{\ ji}^{a},\   \label{dtor} \\
\widehat{T}_{\ bi}^{a} &=&-\widehat{T}_{\ ib}^{a}=\frac{\partial N_{i}^{a}}{%
\partial y^{b}}-\widehat{L}_{\ bi}^{a},\ \widehat{T}_{\ bc}^{a}=\widehat{C}%
_{\ bc}^{a}-\widehat{C}_{\ cb}^{a};  \notag
\end{eqnarray}%
\begin{eqnarray*}
\widehat{T}_{\ \beta \gamma }^{\alpha } &=&\widehat{L}_{\ \ \beta \gamma
}^{\alpha }-\widehat{L}_{\ \ \gamma \beta }^{\alpha },\ \widehat{T}_{\ \beta
\ ^{1}a}^{\alpha }=-\widehat{T}_{\ ^{1}a\beta }^{\alpha }=\widehat{C}_{\
\beta \ ^{1}a}^{\alpha },\ T_{\ \beta \alpha }^{\ ^{1}a}=-\Omega _{\ \beta
\alpha }^{\ ^{1}a},\  \\
\widehat{T}_{\ \ ^{1}b\alpha }^{\ ^{1}a} &=&-\widehat{T}_{\ \alpha \
^{1}b}^{\ ^{1}a}=\frac{\partial N_{\alpha }^{\ ^{1}a}}{\partial y^{\ ^{1}b}}-%
\widehat{L}_{\ \ ^{1}b\alpha }^{\ ^{1}a},\ \widehat{T}_{\ \ ^{1}b\ ^{1}c}^{\
^{1}a}=\widehat{C}_{\ \ ^{1}b\ ^{1}c}^{\ ^{1}a}-\widehat{C}_{\ \ ^{1}c\
^{1}b}^{\ ^{1}a};
\end{eqnarray*}%
\begin{eqnarray*}
\widehat{T}_{\ \ ^{1}\beta \ ^{1}\gamma }^{\ ^{1}\alpha } &=&\widehat{L}_{\
\ \ ^{1}\beta \ ^{1}\gamma }^{\ ^{1}\alpha }-\widehat{L}_{\ \ \ ^{1}\gamma \
^{1}\beta }^{\ ^{1}\alpha },\ \widehat{T}_{\ \ ^{1}\beta \ ^{2}a}^{\
^{1}\alpha }=-\widehat{T}_{\ ^{2}a\ ^{1}\beta }^{\ ^{1}\alpha }=\widehat{C}%
_{\ \ ^{1}\beta \ ^{2}a}^{\ ^{1}\alpha },\  \\
T_{\ \ ^{1}\beta \ ^{1}\alpha }^{\ ^{2}a} &=&-\Omega _{\ \ ^{1}\beta \
^{1}\alpha }^{\ ^{2}a},\ \widehat{T}_{\ \ ^{2}b\ ^{1}\alpha }^{\ ^{2}a}=-%
\widehat{T}_{\ \ ^{1}\alpha \ ^{2}b}^{\ ^{2}a}=\frac{\partial N_{\
^{1}\alpha }^{\ ^{2}a}}{\partial y^{\ ^{2}b}}-\widehat{L}_{\ \ ^{2}b\
^{1}\alpha }^{\ ^{2}a}, \\
\ \widehat{T}_{\ \ ^{2}b\ ^{2}c}^{\ \ ^{2}a} &=&\widehat{C}_{\ \ ^{2}b\
^{2}c}^{\ ^{2}a}-\widehat{C}_{\ \ ^{2}c\ ^{2}b}^{\ ^{2}a};
\end{eqnarray*}%
\begin{equation*}
.....
\end{equation*}%
\begin{eqnarray*}
\widehat{T}_{\ \ ^{k-1}\beta \ ^{k-1}\gamma }^{\ ^{k-1}\alpha } &=&\widehat{L%
}_{\ \ \ ^{k-1}\beta \ ^{k-1}\gamma }^{\ ^{k-1}\alpha }-\widehat{L}_{\ \ \
^{k-1}\gamma \ ^{k-1}\beta }^{\ ^{k-1}\alpha }, \\
\ \widehat{T}_{\ \ ^{k-1}\beta \ ^{k}a}^{\ ^{k-1}\alpha } &=&-\widehat{T}_{\
^{k}a\ ^{k-1}\beta }^{\ ^{k-1}\alpha }=\widehat{C}_{\ \ ^{k-1}\beta \
^{k}a}^{\ ^{k-1}\alpha },\ T_{\ \ ^{k-1}\beta \ ^{k-1}\alpha }^{\
^{k}a}=-\Omega _{\ \ ^{k-1}\beta \ ^{k-1}\alpha }^{\ ^{k}a},\  \\
\widehat{T}_{\ \ ^{k}b\ ^{k-1}\alpha }^{\ ^{k}a} &=&-\widehat{T}_{\ \
^{k-1}\alpha \ ^{k}b}^{\ ^{k}a}=\frac{\partial N_{\ ^{k-1}\alpha }^{\ ^{k}a}%
}{\partial y^{\ ^{k}b}}-\widehat{L}_{\ \ ^{k}b\ ^{k-1}\alpha }^{\ ^{k}a},\
\\
\widehat{T}_{\ \ ^{k}b\ ^{k}c}^{\ \ \ ^{k}a} &=&\widehat{C}_{\ \ ^{k}b\
^{k}c}^{\ ^{k}a}-\widehat{C}_{\ \ ^{k}c\ ^{k}b}^{\ ^{k}a}.
\end{eqnarray*}%
Introducing values (\ref{candcon}) into (\ref{dtor}) we get that $\widehat{T}%
_{\ jk}^{i}=0$ and $\widehat{C}_{\ \ ^{k}b\ ^{k}c}^{\ ^{k}a}=0$ which
satisfy the conditions of this theorem. In general, other N--adapted torsion
coefficients (for instance, $\widehat{T}_{\ ja}^{i},\widehat{T}_{\ ji}^{a}$
and $\widehat{T}_{\ bi}^{a})$ are not zero. $\ \square $
\end{proof}

\vskip5pt

The torsion (\ref{tors}) is very different from that, for instance, in
Einstein--Cartan, string, or gauge gravity because we do not consider
additional field equations (algebraic or dynamical ones), see discussions in %
\cite{vncg,vsgg}. In our case, the nontrivial torsion coefficients are
related to anholonomy coefficients $w_{\ ^{k}\alpha \ ^{k}\beta }^{\
^{k}\gamma }$ in (\ref{anhrel}).

We can distinguish the covariant derivative $\ ^{k}\widehat{\mathbf{D}}$
determined by formulas (\ref{cdcf}) and (\ref{candcon}), in N--adapted to (%
\ref{whitney}) form, as $\widehat{\mathbf{D}}_{\ ^{k}\alpha }=\left(
\widehat{D}_{i},\widehat{D}_{a},\widehat{D}_{\ ^{1}a},...,\widehat{D}_{\
^{k}a}\right),$ where $\widehat{D}_{\ ^{k}a}$ are shell operators.

From Theorem \ref{th1}, we get:

\begin{corollary}
Any geometric construction for the canonical d--connection $\ \ ^{k}\widehat{%
\mathbf{D}}=\{\widehat{\mathbf{\Gamma }}_{\ \ ^{k}\alpha \ ^{k}\beta }^{\
^{k}\gamma }\}$ can be re--defined equivalently into a similar one with the
Levi--Civita connection $\ \ ^{k}\nabla =\{\Gamma _{\ \ ^{k}\alpha \
^{k}\beta }^{\ ^{k}\gamma }\}$ following formulas
\begin{equation}
\ \Gamma _{\ \ ^{k}\alpha \ ^{k}\beta }^{\ ^{k}\gamma }=\widehat{\mathbf{%
\Gamma }}_{\ \ ^{k}\alpha \ ^{k}\beta }^{\ ^{k}\gamma }+\ Z_{\ \ ^{k}\alpha
\ ^{k}\beta }^{\ ^{k}\gamma },  \label{deflc}
\end{equation}%
where the N--adapted coefficients $\ $of linear connections, $\Gamma _{\ \
^{k}\alpha \ ^{k}\beta }^{\ ^{k}\gamma },\ \widehat{\mathbf{\Gamma }}_{\ \
^{k}\alpha \ ^{k}\beta }^{\ ^{k}\gamma },$ and the distortion tensor $\ \
Z_{\ \ ^{k}\alpha \ ^{k}\beta }^{\ ^{k}\gamma }$ are determined in unique
forms by the coefficients of a metric $\mathbf{g}_{\ ^{k}\alpha \ ^{k}\beta
}.$
\end{corollary}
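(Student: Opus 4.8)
The plan is to exploit two facts: the difference of two linear connections on the same manifold is a tensor field, and here both connections involved, $\ ^{k}\nabla $ and $\ ^{k}\widehat{\mathbf{D}}$, are fixed once and for all by $\mathbf{g}_{\ ^{k}\alpha \ ^{k}\beta }$. First I would recall the elementary fact that under a change of (co)frame the inhomogeneous (Christoffel--type) terms in the transformation law of connection coefficients are identical for every linear connection; hence $Z_{\ \ ^{k}\alpha \ ^{k}\beta }^{\ ^{k}\gamma }\doteqdot \Gamma _{\ \ ^{k}\alpha \ ^{k}\beta }^{\ ^{k}\gamma }-\widehat{\mathbf{\Gamma }}_{\ \ ^{k}\alpha \ ^{k}\beta }^{\ ^{k}\gamma }$ transforms as the components of a $(1,2)$--tensor, the distortion tensor, and (\ref{deflc}) is merely the definition of $Z$. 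The real content of the Corollary is therefore the existence, uniqueness and metric--determinacy of $Z$.

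Uniqueness and metric--determinacy of $\widehat{\mathbf{\Gamma }}_{\ \ ^{k}\alpha \ ^{k}\beta }^{\ ^{k}\gamma }$ is precisely Theorem \ref{th1}, while uniqueness and metric--determinacy of $\Gamma _{\ \ ^{k}\alpha \ ^{k}\beta }^{\ ^{k}\gamma }$ is the classical fundamental lemma of (pseudo) Riemannian geometry; so $Z$, being their difference, is automatically unique and completely determined by $\mathbf{g}_{\ ^{k}\alpha \ ^{k}\beta }$. To exhibit the dependence explicitly I would derive a Koszul--type formula: subtracting the metricity identities $\ ^{k}\nabla \ ^{k}\mathbf{g}=0$ and $\ ^{k}\widehat{\mathbf{D}}\ ^{k}\mathbf{g}=0$ (the latter given by (\ref{metcomp})) shows that $Z_{\ ^{k}\gamma \ ^{k}\alpha \ ^{k}\beta }$, indices now lowered with $\mathbf{g}$, is antisymmetric in the pair $(\ ^{k}\alpha ,\ ^{k}\beta )$; comparing torsions --- $\ ^{k}\nabla $ torsion--free and $\ ^{k}\widehat{\mathbf{D}}$ carrying the explicit torsion $\widehat{\mathbf{T}}_{\ \ ^{k}\beta \ ^{k}\gamma }^{\ ^{k}\alpha }$ of (\ref{tors})--(\ref{dtor}) --- fixes the antisymmetrization of $Z$ in its first two slots. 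These two linear--algebraic conditions have the unique solution $Z_{\ ^{k}\gamma \ ^{k}\alpha \ ^{k}\beta }=\tfrac{1}{2}\big(\widehat{T}_{\ ^{k}\alpha \ ^{k}\beta \ ^{k}\gamma }+\widehat{T}_{\ ^{k}\beta \ ^{k}\gamma \ ^{k}\alpha }-\widehat{T}_{\ ^{k}\gamma \ ^{k}\alpha \ ^{k}\beta }\big)$ (up to the sign/ordering conventions of the paper), in which $\widehat{T}$ and the index operations are built only from $\mathbf{g}_{\ ^{k}\alpha \ ^{k}\beta }$ and the N--connection coefficients $N_{\ ^{k-1}\alpha }^{\ ^{k}a}$, which in the parametrization (\ref{gendm}) are part of the metric data.

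Finally I would organize the check "shell by shell", mirroring the recursion already used for Theorem \ref{th1}: one verifies that the formula for $Z$ splits consistently into h-- and $v$--pieces on each level $s=0,1,\dots ,k$, the computation on shell $s$ using only $g_{ij},h_{\ ^{1}a\ ^{1}b},\dots ,h_{\ ^{s}a\ ^{s}b}$ and $N_{\ ^{s-1}\alpha }^{\ ^{s}a}$ already introduced. The main obstacle is exactly this bookkeeping --- tracking the left--up shell labels and keeping straight which torsion components of (\ref{dtor}) survive (the pure ones $\widehat{T}_{\ jk}^{i}$ and $\widehat{T}_{\ ^{k}b\ ^{k}c}^{\ ^{k}a}$ vanish, whereas the mixed $\widehat{T}_{\ ja}^{i},\widehat{T}_{\ ji}^{a},\widehat{T}_{\ bi}^{a}$ and their higher--shell analogues do not), so that $Z$ genuinely mixes the distinguished structure while remaining a well--defined tensor determined by $\mathbf{g}$. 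No analytic difficulty arises beyond this combinatorial care, since all the substantive work is already done in Theorem \ref{th1} and in the classical existence--uniqueness of $\ ^{k}\nabla $.
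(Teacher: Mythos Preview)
Your argument is correct and complete: since both $\ ^{k}\nabla$ and $\ ^{k}\widehat{\mathbf{D}}$ are metric compatible and uniquely determined by $\mathbf{g}_{\ ^{k}\alpha\ ^{k}\beta}$, their difference is a tensor, and the standard contorsion identity you derive (from subtracting the two metricity conditions and comparing torsions) expresses $Z$ entirely through $\widehat{\mathbf{T}}$ and hence through the metric data. The shell--by--shell organization is just bookkeeping, as you note.

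The paper takes a different route. Rather than invoking the abstract contorsion formula, it simply writes down the explicit N--adapted components $Z^{i}_{jk},Z^{a}_{jk},Z^{i}_{bk},\dots,Z^{\ ^{k-1}\alpha}_{\ ^{k}a\ ^{k}b}$ in terms of $\widehat{C}^{i}_{jb}$, $\Omega^{a}_{jk}$, $\widehat{T}^{c}_{ka}$ and their higher--shell analogues, together with auxiliary projectors $\Xi$ and $^{\pm}\Xi$, citing earlier references for the derivation. Your approach is cleaner and conceptually self--contained; the paper's explicit component list, on the other hand, is what is actually used downstream, since the later Condition on when $\ ^{k}\nabla$ and $\ ^{k}\widehat{\mathbf{D}}$ coincide is read off directly by inspecting which of those explicit $Z$--components vanish under the constraints $\widehat{C}^{i}_{jb}=0$, $\Omega^{a}_{ji}=0$, $\widehat{T}^{c}_{ja}=0$ (and shell analogues). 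If you wished to match the paper's utility, you would need to unpack your contorsion formula into those N--adapted pieces; conversely, the paper's list can be recovered from your formula by substituting the torsion components (\ref{dtor}).
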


\begin{proof}
It is similar to that presented for vector bundles in Refs. \cite{ma1987,ma}
but in our case adapted for (pseudo) Riemannian nonholonomic manifolds, see
details in \cite{ijgmmp,vsgg,vrflg} and, in higher order form, in \cite%
{vnp1,vsp1,vst}. Here we present the N--adapted components of the distortion
tensor $Z_{\ \ ^{k}\alpha \ ^{k}\beta }^{\ ^{k}\gamma }$ computed as
\begin{eqnarray}
\ Z_{jk}^{a} &=&-\widehat{C}_{jb}^{i}g_{ik}h^{ab}-\frac{1}{2}\Omega
_{jk}^{a},~Z_{bk}^{i}=\frac{1}{2}\Omega _{jk}^{c}h_{cb}g^{ji}-\Xi _{jk}^{ih}~%
\widehat{C}_{hb}^{j},  \notag \\
Z_{bk}^{a} &=&~^{+}\Xi _{cd}^{ab}~\widehat{T}_{kb}^{c},\ Z_{kb}^{i}=\frac{1}{%
2}\Omega _{jk}^{a}h_{cb}g^{ji}+\Xi _{jk}^{ih}~\widehat{C}_{hb}^{j},\
Z_{jk}^{i}=0,  \label{deft} \\
\ Z_{jb}^{a} &=&-~^{-}\Xi _{cb}^{ad}~\widehat{T}_{jd}^{c},\ Z_{bc}^{a}=0,\
Z_{ab}^{i}=-\frac{g^{ij}}{2}\left[ \widehat{T}_{ja}^{c}h_{cb}+\widehat{T}%
_{jb}^{c}h_{ca}\right] ,  \notag \\
\ Z_{\ \beta \gamma }^{\ ^{1}a} &=&-\widehat{C}_{\beta \ ^{1}b}^{\alpha
}g_{\alpha \gamma }h^{\ ^{1}a\ ^{1}b}-\frac{1}{2}\Omega _{\beta \gamma }^{\
^{1}a},\
Z_{\ ^{1}b\gamma }^{\alpha } = \frac{1}{2}\Omega _{\beta \gamma }^{\
^{1}c}h_{\ ^{1}c\ ^{1}b}g^{\beta \alpha }-\Xi _{\beta \gamma }^{\alpha \tau
}~\widehat{C}_{\tau \ ^{1}b}^{\beta }, \notag \\
Z_{\ ^{1}b\gamma }^{\ ^{1}a} &=&~^{+}\Xi _{\ ^{1}c\ ^{1}d}^{\ ^{1}a\ ^{1}b}~%
\widehat{T}_{\gamma \ ^{1}b}^{\ ^{1}c},\ Z_{\beta \ ^{1}b}^{\alpha }=\frac{1%
}{2}\Omega _{\beta \gamma }^{\ ^{1}a}h_{\ ^{1}c\ ^{1}b}g^{\beta \alpha }+\Xi
_{\beta \gamma }^{\alpha \tau }~\widehat{C}_{\tau \ ^{1}b}^{\beta }, \notag \\
\ Z_{\beta \gamma }^{\alpha } &=&0,\ Z_{\beta \ ^{1}b}^{\ ^{1}a}=-~^{-}\Xi
_{\ ^{1}c\ ^{1}b}^{\ ^{1}a\ ^{1}d}~\widehat{T}_{\beta \ ^{1}d}^{\ ^{1}c},\
Z_{\ ^{1}b\ ^{1}c}^{\ ^{1}a}=0,\ \notag \\
Z_{\ ^{1}a\ ^{1}b}^{\alpha } &=&-\frac{g^{\alpha \beta }}{2}\left[ \widehat{T%
}_{\beta \ ^{1}a}^{\ ^{1}c}h_{\ ^{1}c\ ^{1}b}+\widehat{T}_{\beta \ ^{1}b}^{\
^{1}c}h_{\ ^{1}c\ ^{1}a}\right], \notag
\end{eqnarray}%
\begin{eqnarray*}
&&\ Z_{\ \ ^{1}\beta \ ^{1}\gamma }^{\ ^{2}a} =-\widehat{C}_{\ ^{1}\beta \
^{2}b}^{\ ^{1}\alpha }g_{\ ^{1}\alpha \ ^{1}\gamma }h^{\ ^{2}a\ ^{2}b}-\frac{%
1}{2}\Omega _{\ ^{1}\beta \ ^{1}\gamma }^{\ ^{2}a},\
Z_{\ ^{2}b\ ^{1}\gamma }^{\ ^{1}\alpha } = \widehat{L}_{4j}^{5}=\frac{1}{2}\partial _{v}n_{j},\\ && \frac{1}{2}\Omega _{\ ^{1}\beta
\ ^{1}\gamma }^{\ ^{2}c}h_{\ ^{2}c\ ^{2}b}g^{\ ^{1}\beta \ ^{1}\alpha }-\Xi
_{\ ^{1}\beta \ ^{1}\gamma }^{\ ^{1}\alpha \ ^{1}\tau }~\widehat{C}_{\
^{1}\tau \ ^{2}b}^{\ ^{1}\beta },\
Z_{\ ^{2}b\ ^{1}\gamma }^{\ ^{2}a} =\ ^{+}\Xi _{\ ^{2}c\ ^{2}d}^{\ ^{2}a\
^{2}b}~\widehat{T}_{\ ^{1}\gamma \ ^{2}b}^{\ ^{2}c},\  \\
&&Z_{\ ^{1}\beta \ ^{2}b}^{\ ^{1}\alpha } =\frac{1}{2}\Omega _{\ ^{1}\beta \
^{1}\gamma }^{\ ^{2}a}h_{\ ^{2}c\ ^{2}b}g^{\ ^{1}\beta \ ^{1}\alpha }+\Xi
_{\ ^{1}\beta \ ^{1}\gamma }^{\ ^{1}\alpha \ ^{1}\tau }~\widehat{C}_{\
^{1}\tau \ ^{2}b}^{\ ^{1}\beta }, \\
&&\ Z_{\ ^{1}\beta \ ^{1}\gamma }^{\ ^{1}\alpha } =0,\ Z_{\ ^{1}\beta \
^{2}b}^{\ ^{2}a}=-~^{-}\Xi _{\ ^{2}c\ ^{2}b}^{\ ^{2}a\ ^{2}d}~\widehat{T}_{\
^{1}\beta \ ^{2}d}^{\ ^{2}c},\ Z_{\ ^{2}b\ ^{2}c}^{\ ^{2}a}=0,\  \\
&&Z_{\ ^{2}a\ ^{2}b}^{\ ^{1}\alpha } =-\frac{g^{\ ^{1}\alpha \ ^{1}\beta }}{2%
}\left[ \widehat{T}_{\ ^{1}\beta \ ^{2}a}^{\ ^{2}c}h_{\ ^{2}c\ ^{2}b}+%
\widehat{T}_{\ ^{1}\beta \ ^{2}b}^{\ ^{2}c}h_{\ ^{2}c\ ^{2}a}\right] ,\\
&&..... \\
&&\ Z_{\ \ ^{k-1}\beta \ ^{k-1}\gamma }^{\ ^{k}a} =-\widehat{C}_{\
^{k-1}\beta \ ^{k}b}^{\ ^{k-1}\alpha }g_{\ ^{k-1}\alpha \ ^{k-1}\gamma }h^{\
^{k}a\ ^{k}b}-\frac{1}{2}\Omega _{\ ^{k-1}\beta \ ^{k-1}\gamma }^{\ ^{k}a},~
\\
&&Z_{\ ^{k}b\ ^{k-1}\gamma }^{\ ^{k-1}\alpha } =\frac{1}{2}\Omega _{\
^{k-1}\beta \ ^{k-1}\gamma }^{\ ^{k}c}h_{\ ^{k}c\ ^{k}b}g^{\ ^{k-1}\beta \
^{k-1}\alpha }-\Xi _{\ ^{k-1}\beta \ ^{k-1}\gamma }^{\ ^{k-1}\alpha \
^{k-1}\tau }~\widehat{C}_{\ ^{k-1}\tau \ ^{k}b}^{\ ^{k-1}\beta }, \\
&&Z_{\ ^{k}b\ ^{k-1}\gamma }^{\ ^{k}a} =\ ^{+}\Xi _{\ ^{k}c\ ^{k}d}^{\ ^{k}a\
^{k}b}~\widehat{T}_{\ ^{k-1}\gamma \ ^{k}b}^{\ ^{k}c},\  \\
&&Z_{\ ^{k-1}\beta \ ^{k}b}^{\ ^{k-1}\alpha } =\frac{1}{2}\Omega _{\
^{k-1}\beta \ ^{k-1}\gamma }^{\ ^{k}a}h_{\ ^{k}c\ ^{k}b}g^{\ ^{k-1}\beta \
^{k-1}\alpha }+\Xi _{\ ^{k-1}\beta \ ^{k-1}\gamma }^{\ ^{k-1}\alpha \
^{k-1}\tau }~\widehat{C}_{\ ^{k-1}\tau \ ^{k}b}^{\ ^{k-1}\beta },
\end{eqnarray*}%
\begin{eqnarray*}
\ Z_{\ ^{k-1}\beta \ ^{k-1}\gamma }^{\ ^{k-1}\alpha } &=&0,\ Z_{\
^{k-1}\beta \ ^{k}b}^{\ ^{k}a}=-~^{-}\Xi _{\ ^{k}c\ ^{k}b}^{\ ^{k}a\ ^{k}d}~%
\widehat{T}_{\ ^{k-1}\beta \ ^{k}d}^{\ ^{k}c},\ Z_{\ ^{k}b\ ^{k}c}^{\
^{k}a}=0,\\
Z_{\ ^{k}a\ ^{k}b}^{\ ^{k-1}\alpha } &=&-\frac{g^{\ ^{k-1}\alpha \
^{k-1}\beta }}{2}\left[ \widehat{T}_{\ ^{k-1}\beta \ ^{k}a}^{\ ^{k}c}h_{\
^{k}c\ ^{k}b}+\widehat{T}_{\ ^{k-1}\beta \ ^{k}b}^{\ ^{k}c}h_{\ ^{k}c\ ^{k}a}%
\right] ,
\end{eqnarray*}%
\begin{eqnarray*}
\mbox{ for } \Xi _{jk}^{ih} &=&\frac{1}{2}(\delta _{j}^{i}\delta
_{k}^{h}-g_{jk}g^{ih}),~^{\pm }\Xi _{cd}^{ab}=\frac{1}{2}(\delta
_{c}^{a}\delta _{d}^{b}\pm h_{cd}h^{ab}), \\
~^{\pm }\Xi _{\ ^{1}c\ ^{1}d}^{\ ^{1}a\ ^{1}b} &=&\frac{1}{2}(\delta _{\
^{1}c}^{\ ^{1}a}\delta _{\ ^{1}d}^{\ ^{1}b}\pm h_{\ ^{1}c\ ^{1}d}h^{\ ^{1}a\
^{1}b}),\ldots , \\
~^{\pm }\Xi _{\ ^{k}c\ ^{k}d}^{\ ^{k}a\ ^{k}b} &=&\frac{1}{2}(\delta _{\
^{k}c}^{\ ^{k}a}\delta _{\ ^{k}d}^{\ ^{k}b}\pm h_{\ ^{k}c\ ^{k}d}h^{\ ^{k}a\
^{k}b}),
\end{eqnarray*}%
where the necessary torsion coefficients are computed as in (\ref{dtor}).$\
\square $
\end{proof}

\vskip5pt

\begin{remark}
Hereafter, we shall omit certain details on shell components of formulas and
computations if that will not result in ambiguities. Such constructions are
similar to those presented in above Theorems and in Refs. \cite%
{ijgmmp,vncg,vsgg,vrflg,vmth1,vmth2,vmth3,vnp1,vsp1,vv1,vst,vvc,vd1,vd2}.
Some additional necessary formulas are given in Appendices.
\end{remark}

In four dimensions, the Einstein gravity can be equivalently formulated in
the so--called almost K\"{a}hler and Lagrange--Finsler variables, as we
considered in Refs. \cite{vrflg,vbrane,vkvnsst,vdqak}. Similarly, for higher
dimensions, we can use the canonical d--connection $\ ^{k}\widehat{\mathbf{D}%
}$ and its nonholonomic deformations for equivalent reformulations of extra
dimension gravity theories and as tools for generating constructing exact
solutions. In particular, imposing necessary type constraints, it is
possible to generate exact solutions of the Einstein equations for the
Levi--Civita connection $\ ^{k}\nabla .$

\section{N--adapted Einstein Equations}

\label{s3} In this section, we define the Riemannian, Ricci and Einstein
tensors for the canonical d--connection $\ ^{k}\widehat{\mathbf{D}}$ and
metric $\ ^{k}\mathbf{g}$ (\ref{gendm}) and derive the corresponding
gravitational field equations. We also formulate the general conditions when
the Einstein tensor for $\ ^{k}\widehat{\mathbf{D}}$ is equal to that for $\
^{k}\nabla .$

\subsection{Curvature of the canonical d--connection}

As for any linear connection, we can introduce:

\begin{definition}
The curvature of $\widehat{\mathbf{D}}$ is a 2--form 
$\widehat{\mathcal{R}}\doteqdot \widehat{\mathbf{D}}\widehat{\mathbf{\Gamma }%
}=d\widehat{\mathbf{\Gamma }}-\widehat{\mathbf{\Gamma }}\wedge \widehat{%
\mathbf{\Gamma }}.$ 
\end{definition}

In explicit form, the N--adapted coefficients can be computed using the
1--form (\ref{cdcf}),
\begin{equation}
\widehat{\mathcal{R}}_{~\quad ^{k}\beta }^{\ ^{k}\alpha }\doteqdot \widehat{%
\mathbf{D}}\widehat{\mathbf{\Gamma }}_{\ \ ^{k}\beta }^{^{k}\alpha }=d%
\widehat{\mathbf{\Gamma }}_{\ \ ^{k}\beta }^{^{k}\alpha }-\widehat{\mathbf{%
\Gamma }}_{\ \ ^{k}\beta }^{^{k}\gamma }\wedge \widehat{\mathbf{\Gamma }}_{\
\ ^{k}\gamma }^{^{k}\alpha }=\widehat{\mathbf{R}}_{\ \ ^{k}\beta \
^{k}\gamma \ ^{k}\tau }^{^{k}\alpha }\mathbf{e}^{\ ^{k}\gamma }\wedge
\mathbf{e}^{^{k}\tau },  \label{curv}
\end{equation}%
The N--adapted coefficients of curvature are parametrized in the form:%
\begin{eqnarray}
\widehat{\mathbf{R}}_{\ \ ^{k}\beta \ ^{k}\gamma \ ^{k}\tau }^{^{k}\alpha }
&=&\langle \widehat{R}_{\ \beta \gamma \tau }^{\alpha }=\{\widehat{R}_{\
hjk}^{i},\widehat{R}_{\ bjk}^{a},\widehat{R}_{\ jka}^{i},\widehat{R}_{\
bka}^{c},\widehat{R}_{\ jbc}^{i},\widehat{R}_{\ bcd}^{a}\};  \label{curvnc}
\\
\widehat{R}_{\ \ ^{1}\beta \ ^{1}\gamma \ ^{1}\tau }^{\ ^{1}\alpha } &=&\{%
\widehat{R}_{\ \beta \gamma \tau }^{\alpha },\widehat{R}_{\ ^{1}b\gamma \tau
}^{^{1}a},\widehat{R}_{\ \beta \gamma \ ^{1}a}^{\alpha },\widehat{R}_{\ \
^{1}b\gamma \ ^{1}a}^{\ ^{1}c},\widehat{R}_{\ \beta \ ^{1}b\ ^{1}c}^{\alpha
},\widehat{R}_{\ \ ^{1}b\ ^{1}c\ ^{1}d}^{\ ^{1}a};\};  \notag \\
\widehat{R}_{\ \ ^{2}\beta \ ^{2}\gamma \ ^{2}\tau }^{\ ^{2}\alpha } &=&\{%
\widehat{R}_{\ \ ^{1}\beta \ ^{1}\gamma \ ^{1}\tau }^{\ ^{1}\alpha },%
\widehat{R}_{\ ^{2}b\ ^{1}\gamma \ ^{1}\tau }^{^{2}a},\widehat{R}_{\ \
^{1}\beta \ ^{1}\gamma \ ^{2}a}^{\ ^{1}\alpha },\widehat{R}_{\ \ ^{2}b\
^{1}\gamma \ ^{2}a}^{\ ^{2}c},  \notag \\
&&\widehat{R}_{\ \ ^{1}\beta \ ^{2}b\ ^{2}c}^{\ ^{1}\alpha },\widehat{R}_{\
\ ^{2}b\ ^{2}c\ ^{2}d}^{\ ^{2}a}\};  \notag
\end{eqnarray}
\begin{eqnarray*}
&&.....  \notag \\
\widehat{R}_{\ \ ^{k}\beta \ ^{k}\gamma \ ^{k}\tau }^{\ ^{k}\alpha } &=&\{%
\widehat{R}_{\ \ ^{k-1}\beta \ ^{k-1}\gamma \ ^{k-1}\tau }^{\ ^{k-1}\alpha },%
\widehat{R}_{\ ^{k}b\ ^{k-1}\gamma \ ^{k-1}\tau }^{^{k}a},\widehat{R}_{\ \
^{k-1}\beta \ ^{k-1}\gamma \ ^{k}a}^{\ ^{k-1}\alpha },  \notag \\
&&\widehat{R}_{\ \ ^{k}b\ ^{k-1}\gamma \ ^{k}a}^{\ ^{k}c},\widehat{R}_{\ \
^{k-1}\beta \ ^{k}b\ ^{k}c}^{\ ^{k-1}\alpha },\widehat{R}_{\ \ ^{k}b\ ^{k}c\
^{k}d}^{\ ^{k}a}\}\rangle,  \notag
\end{eqnarray*}
where the values of such coefficients are provided in Appendix \ref{asa},
see Theorem \ref{thasa1} and formulas (\ref{dcurv}).

\begin{definition}
The Ricci tensor $Ric(\widehat{\mathbf{D}})=\{\widehat{\mathbf{R}}_{\alpha
\beta }\}$ of a canonical d--connec\-ti\-on $\widehat{\mathbf{D}}$ is
defined by contracting respectively the N--adapted coefficients of $\widehat{%
\mathbf{R}}_{\ \beta \gamma \delta }^{\alpha }$ (\ref{curv}), when $\widehat{%
\mathbf{R}}_{\alpha \beta }\doteqdot \widehat{\mathbf{R}}_{\ \alpha \beta
\tau }^{\tau }.$
\end{definition}

We formulate:

\begin{corollary}
The Ricci tensor of $\widehat{\mathbf{D}}$ is characterized by N--adapted
coefficients%
\begin{eqnarray}
\widehat{\mathbf{R}}_{\ ^{k}\alpha \ ^{k}\beta } &=&\{\widehat{R}_{ij},%
\widehat{R}_{ia},\ \widehat{R}_{ai},\ \widehat{R}_{ab};\widehat{R}_{\alpha
\beta },\widehat{R}_{\alpha \ ^{1}a},\ \widehat{R}_{\ ^{1}a\beta },\
\widehat{R}_{\ ^{1}a\ ^{1}b};  \label{driccia} \\
&&\widehat{R}_{\ ^{1}\alpha \ ^{1}\beta },\widehat{R}_{\ ^{1}\alpha \
^{2}a},\ \widehat{R}_{\ ^{2}a\ ^{1}\beta },\ \widehat{R}_{\ ^{2}a\
^{2}b};...;  \notag \\
&&\widehat{R}_{\ ^{k-1}\alpha \ ^{k-1}\beta },\widehat{R}_{\ ^{k-1}\alpha \
^{k}a},\ \widehat{R}_{\ ^{k}a\ ^{k-1}\beta },\ \widehat{R}_{\ ^{k}a\
^{k}b}\},  \notag
\end{eqnarray}%
\begin{eqnarray}
\mbox{where }
 \widehat{R}_{ij} &\doteqdot &\widehat{R}_{\ ijk}^{k},\ \ \widehat{R}%
_{ia}\doteqdot -\widehat{R}_{\ ika}^{k},\ \widehat{R}_{ai}\doteqdot \widehat{%
R}_{\ aib}^{b},\ \widehat{R}_{ab}\doteqdot \widehat{R}_{\ abc}^{c};
\label{dricci} \\
\widehat{R}_{\alpha \beta } &\doteqdot &\widehat{R}_{\ \alpha \beta \gamma
}^{\gamma },\ \ \widehat{R}_{\alpha \ ^{1}a}\doteqdot -\widehat{R}_{\ \alpha
\gamma \ ^{1}a}^{\gamma },\ \widehat{R}_{\ ^{1}a\alpha }\doteqdot \widehat{R}%
_{\ \ ^{1}a\alpha \ ^{1}b}^{\ ^{1}b},  \notag \\
&&\widehat{R}_{\ ^{1}a\ ^{1}b}\doteqdot \widehat{R}_{\ \ ^{1}a\ ^{1}b\
^{1}c}^{\ ^{1}c};  \notag
\end{eqnarray}%
\begin{eqnarray*}
\widehat{R}_{\ ^{1}\alpha \ ^{1}\beta } &\doteqdot &\widehat{R}_{\ \
^{1}\alpha \ ^{1}\beta \ ^{1}\gamma }^{\ ^{1}\gamma },\ \ \widehat{R}_{\
^{1}\alpha \ ^{2}a}\doteqdot -\widehat{R}_{\ \ ^{1}\alpha \ ^{1}\gamma \
^{2}a}^{\ ^{1}\gamma },  \notag \\
&&\widehat{R}_{\ ^{2}a\ ^{1}\alpha }\doteqdot \widehat{R}_{\ \ ^{2}a\
^{1}\alpha \ ^{2}b}^{\ ^{2}b},\ \widehat{R}_{\ ^{2}a\ ^{2}b}\doteqdot
\widehat{R}_{\ \ ^{2}a\ ^{2}b\ ^{2}c}^{\ ^{2}c};  \notag \\
&&........  \notag \\
\widehat{R}_{\ ^{k-1}\alpha \ ^{k-1}\beta } &\doteqdot &\widehat{R}_{\ \
^{k-1}\alpha \ ^{k-1}\beta \ ^{k-1}\gamma }^{\ ^{k-1}\gamma },\ \ \widehat{R}%
_{\ ^{k-1}\alpha \ ^{k}a}\doteqdot -\widehat{R}_{\ \ ^{k-1}\alpha \
^{k-1}\gamma \ ^{k}a}^{\ ^{k-1}\gamma },  \notag \\
\ \widehat{R}_{\ ^{k}a\ ^{k-1}\alpha } &\doteqdot &\widehat{R}_{\ \ ^{k}a\
^{k-1}\alpha \ ^{k}b}^{\ ^{k}b},\ \widehat{R}_{\ ^{k}a\ ^{k}b}\doteqdot
\widehat{R}_{\ \ ^{k}a\ ^{k}b\ ^{k}c}^{\ ^{k}c}.  \notag
\end{eqnarray*}
\end{corollary}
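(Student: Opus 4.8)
The plan is to obtain the Ricci d--tensor by the single prescribed contraction $\widehat{\mathbf{R}}_{\ ^{k}\alpha \ ^{k}\beta }\doteqdot \widehat{\mathbf{R}}_{\ \ ^{k}\alpha \ ^{k}\beta \ ^{k}\tau }^{\ ^{k}\tau }$ applied to the N--adapted curvature coefficients of $\ ^{k}\widehat{\mathbf{D}}$. First I would import the explicit list of curvature components from Appendix \ref{asa} (Theorem \ref{thasa1} and formulas (\ref{dcurv})): on the base shell these are the six families $\widehat{R}_{\ hjk}^{i},\widehat{R}_{\ bjk}^{a},\widehat{R}_{\ jka}^{i},\widehat{R}_{\ bka}^{c},\widehat{R}_{\ jbc}^{i},\widehat{R}_{\ bcd}^{a}$ displayed in (\ref{curvnc}), and on every higher shell the analogous families with the shell--$(s-1)$ block embedded compatibly inside the shell--$s$ block. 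Because $\ ^{k}\widehat{\mathbf{D}}$ is a d--connection, it preserves the h-- and v--splitting (\ref{whitney}) under parallel transport, so the summation index $\ ^{k}\tau $ decomposes into its $h,v,\ ^{1}v,\ldots ,\ ^{k}v$ parts and the contraction splits into a sum over these blocks, computed step by step with the coefficients (\ref{candcon}) on each shell exactly as in (\ref{metcomp}).

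On the base level this blockwise contraction produces precisely the four independent contractions $\widehat{R}_{ij}\doteqdot \widehat{R}_{\ ijk}^{k},\ \widehat{R}_{ia}\doteqdot -\widehat{R}_{\ ika}^{k},\ \widehat{R}_{ai}\doteqdot \widehat{R}_{\ aib}^{b},\ \widehat{R}_{ab}\doteqdot \widehat{R}_{\ abc}^{c}$ of (\ref{dricci}); the relative sign in the $\widehat{R}_{ia}$ term is forced by the antisymmetry of the curvature $2$--form in its last two arguments together with the ordering convention fixed in (\ref{curv}). Since $\ ^{k}\widehat{\mathbf{D}}$ carries the induced torsion (\ref{dtor}), the resulting Ricci d--tensor is in general nonsymmetric, which is why $\widehat{R}_{ia}$ and $\widehat{R}_{ai}$ are genuinely distinct components rather than being identified.

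Next I would run the recursion. At shell $s$ the frame index $\ ^{s}\alpha =(\ ^{s-1}\alpha ,\ ^{s}a)$, and by the compatible--extension structure of the curvature coefficients noted after (\ref{curvnc}) the shell--$s$ curvature is the shell--$(s-1)$ curvature together with the new families $\widehat{R}_{\ ^{s}b\ ^{s-1}\gamma \ ^{s-1}\tau }^{\ ^{s}a},\ \widehat{R}_{\ \ ^{s-1}\beta \ ^{s-1}\gamma \ ^{s}a}^{\ ^{s-1}\alpha },\ \widehat{R}_{\ \ ^{s}b\ ^{s-1}\gamma \ ^{s}a}^{\ ^{s}c},\ \widehat{R}_{\ \ ^{s-1}\beta \ ^{s}b\ ^{s}c}^{\ ^{s-1}\alpha },\ \widehat{R}_{\ \ ^{s}b\ ^{s}c\ ^{s}d}^{\ ^{s}a}$. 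Contracting the last upper index against a lower $h\oplus v\oplus \ldots \oplus \ ^{s}v$ index and collecting terms shell by shell yields the four new contractions $\widehat{R}_{\ ^{s-1}\alpha \ ^{s-1}\beta },\ \widehat{R}_{\ ^{s-1}\alpha \ ^{s}a},\ \widehat{R}_{\ ^{s}a\ ^{s-1}\alpha },\ \widehat{R}_{\ ^{s}a\ ^{s}b}$ exactly as listed in (\ref{driccia})--(\ref{dricci}), with the shell--$(s-1)$ Ricci components sitting inside the first slot; iterating from $s=1$ to $s=k$ produces the full list (\ref{driccia}). The computation is otherwise routine bookkeeping; the one point requiring care --- hence the main obstacle --- is ensuring that the contraction is taken in the N--adapted sense (only within the h/v blocks, using the step--by--step shell operators $\widehat{D}_{\ ^{s}a}$) so that the output is again a d--tensor, and that the sign and ordering conventions in the mixed components $\widehat{R}_{\ ^{s-1}\alpha \ ^{s}a}$ and $\widehat{R}_{\ ^{s}a\ ^{s-1}\alpha }$ agree with those pinned down in (\ref{curv}) and (\ref{dtor}); once these are fixed, the formulas (\ref{dricci}) follow directly from the Appendix curvature coefficients. $\ \square $
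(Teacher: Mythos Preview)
Your proposal is correct and follows essentially the same approach as the paper: the paper's own proof consists of two sentences stating that the formulas (\ref{dricci}) follow from contractions of (\ref{curvnc}) and that explicit computation requires the curvature coefficients (\ref{dcurv}) from the Appendix. You have simply unpacked this terse remark into an explicit shell--by--shell bookkeeping of the contraction $\widehat{\mathbf{R}}_{\ ^{k}\alpha \ ^{k}\beta }\doteqdot \widehat{\mathbf{R}}_{\ \ ^{k}\alpha \ ^{k}\beta \ ^{k}\tau }^{\ ^{k}\tau }$, together with the sign and ordering remarks, which is exactly what the paper leaves implicit.
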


\begin{proof}
The formulas (\ref{dricci}) follow from contractions of (\ref{curvnc}). To
compute the N--adapted coefficients of the Ricci tensor $Ric(\widehat{%
\mathbf{D}})$ for metric $\ ^{k}\mathbf{g}$ (\ref{gendm}) we have to
construct correspondingly the formulas (\ref{dcurv}). $\square $
\end{proof}

\vskip5pt

\begin{definition}
The scalar curvature $\ ^{s}\widehat{R}$ of $\widehat{\mathbf{D}}$ is by
definition
\begin{eqnarray}
\ ^{s}\widehat{R} &\doteqdot &\mathbf{g}^{\ ^{k}\alpha \ ^{k}\beta }\widehat{%
\mathbf{R}}_{\ ^{k}\alpha \ ^{k}\beta }  \label{sdccurv} \\
&=&g^{ij}\widehat{R}_{ij}+h^{ab}\widehat{R}_{ab}+h^{\ ^{1}a\ ^{1}b}\widehat{R%
}_{\ ^{1}a\ ^{1}b}+...+h^{\ ^{k}a\ ^{k}b}\widehat{R}_{\ ^{k}a\ ^{k}b}.
\notag
\end{eqnarray}
\end{definition}

Using values (\ref{dricci}) and (\ref{sdccurv}), we can compute the Einstein
tensor $\widehat{\mathbf{E}}_{\ ^{k}\alpha \ ^{k}\beta }$ of $\widehat{%
\mathbf{D}},$
\begin{equation}
\widehat{\mathbf{E}}_{\ ^{k}\alpha \ ^{k}\beta }\doteqdot \widehat{\mathbf{R}%
}_{\ ^{k}\alpha \ ^{k}\beta }-\frac{1}{2}\mathbf{g}_{\ ^{k}\alpha \
^{k}\beta }\ ^{s}\widehat{R}.  \label{enstdt}
\end{equation}%
In explicit form, for N--adapted coefficients, we get a proof for

\begin{corollary}
The Einstein tensor $\widehat{\mathbf{E}}_{\ ^{k}\alpha \ ^{k}\beta }$
splits into h- and $\ ^{k}v$--components %
 $\widehat{\mathbf{E}}_{\ ^{k}\alpha \ ^{k}\beta } \doteqdot \{\widehat{E}%
_{ij}=\widehat{R}_{ij}-\frac{1}{2}g_{ij}\ ^{s}\widehat{R},\widehat{E}_{ia}=%
\widehat{R}_{ia},\widehat{E}_{ai}=\widehat{R}_{ai},\
\widehat{E}_{ab} =\widehat{R}_{ab}-\frac{1}{2}h_{ab}\ ^{s}\widehat{R};\
\widehat{E}_{\alpha \ ^{1}a} =\widehat{R}_{\alpha \ ^{1}a},\ \widehat{E}%
_{\ ^{1}a\beta }=\ \widehat{R}_{\ ^{1}a\beta },\
\widehat{E}_{\ ^{1}a\ ^{1}b} =\ \widehat{R}_{\ ^{1}a\ ^{1}b}-\frac{1}{2}%
h_{\ ^{1}a\ ^{1}b}\ ^{s}\widehat{R};\\
\widehat{E}_{\ ^{1}\alpha \ ^{2}a} =\widehat{R}_{\ ^{1}\alpha \ ^{2}a},\
\widehat{E}_{\ ^{2}a\ ^{1}\beta }=\ \widehat{R}_{\ ^{2}a\ ^{1}\beta },\
\widehat{E}_{\ ^{2}a\ ^{2}b} =\ \widehat{R}_{\ ^{2}a\ ^{2}b}-\frac{1}{2}%
h_{\ ^{2}a\ ^{2}b}\ ^{s}\widehat{R};\\
\widehat{E}_{\ ^{k-1}\alpha \ ^{k}a} =\widehat{R}_{\ ^{k-1}\alpha \ ^{k}a},\
\widehat{E}_{\ ^{k}a\ ^{k-1}\beta }=\ \widehat{R}_{\ ^{k}a\ ^{k-1}\beta },
 \widehat{E}_{\ ^{k}a\ ^{k}b} =\ \widehat{R}_{\ ^{k}a\ ^{k}b}\\
 -\frac{1}{2}%
h_{\ ^{k}a\ ^{k}b}\ ^{s}\widehat{R}\}.$
\end{corollary}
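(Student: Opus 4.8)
This Corollary is a bookkeeping consequence of the definition (\ref{enstdt}) of $\widehat{\mathbf{E}}_{\ ^{k}\alpha \ ^{k}\beta }$ once one uses two facts already in place: the metric $\ ^{k}\mathbf{g}$ of (\ref{gendm}) is block (N--adapted) with respect to the h-- and $\ ^{k}v$--splitting, and the curvature/Ricci coefficients of $\widehat{\mathbf{D}}$ respect the same splitting (this is the parametrization (\ref{curvnc}) and the contractions (\ref{driccia})--(\ref{dricci})). First I would record that, in the frame $\{\mathbf{e}_{\ ^{k}\alpha }\}$ of (\ref{dder}), the metric (\ref{gendm}) has vanishing mixed components, $\mathbf{g}_{ia}=0$, $\mathbf{g}_{\alpha \ ^{1}a}=0,\ \ldots,\ \mathbf{g}_{\ ^{k-1}\alpha \ ^{k}a}=0$, and that the inverse $\mathbf{g}^{\ ^{k}\alpha \ ^{k}\beta }$ inherits the same shell--by--shell block--diagonal shape. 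In particular the trace in (\ref{sdccurv}) picks up only the diagonal Ricci blocks $g^{ij}\widehat{R}_{ij}$, $h^{ab}\widehat{R}_{ab}$, \dots, $h^{\ ^{k}a\ ^{k}b}\widehat{R}_{\ ^{k}a\ ^{k}b}$, with no cross terms.

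\textbf{Slot--by--slot substitution.} With this in hand I would insert the Ricci coefficients (\ref{driccia})--(\ref{dricci}) and the scalar $\ ^{s}\widehat{R}$ into $\widehat{\mathbf{E}}_{\ ^{k}\alpha \ ^{k}\beta }=\widehat{\mathbf{R}}_{\ ^{k}\alpha \ ^{k}\beta }-\tfrac{1}{2}\mathbf{g}_{\ ^{k}\alpha \ ^{k}\beta }\ ^{s}\widehat{R}$ one slot at a time. On the ``diagonal'' slots $(ij)$, $(ab)$, $(\ ^{1}a\ ^{1}b)$, \dots, $(\ ^{k}a\ ^{k}b)$ the coefficient $\mathbf{g}_{\ ^{k}\alpha \ ^{k}\beta }$ is nonzero and one gets $\widehat{E}_{ij}=\widehat{R}_{ij}-\tfrac12 g_{ij}\ ^{s}\widehat{R}$, $\widehat{E}_{ab}=\widehat{R}_{ab}-\tfrac12 h_{ab}\ ^{s}\widehat{R}$, and likewise on each higher shell. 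On every mixed slot, $(ia)$, $(ai)$, $(\alpha \ ^{1}a)$, $(\ ^{1}a\,\beta )$, \dots, $(\ ^{k-1}\alpha \ ^{k}a)$, $(\ ^{k}a\ ^{k-1}\beta )$, the term $\tfrac12 \mathbf{g}_{\ ^{k}\alpha \ ^{k}\beta }\ ^{s}\widehat{R}$ vanishes, so $\widehat{E}$ equals the corresponding Ricci component $\widehat{R}$ verbatim. Organising the computation recursively, exactly as the shell--by--shell verification in the proof of Theorem \ref{th1}, reproduces the full list in the statement.

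\textbf{Where the (mild) care is needed.} There is no conceptual obstacle; the one point not to fumble is that $Ric(\widehat{\mathbf{D}})$ is not symmetric in general, so $\widehat{R}_{ia}\neq \widehat{R}_{ai}$, $\widehat{R}_{\alpha \ ^{1}a}\neq \widehat{R}_{\ ^{1}a\,\alpha }$, etc., and both orderings must be retained rather than symmetrised as in the Levi--Civita case. A secondary, purely notational chore is to keep track of which shell each vertical index belongs to, so that both the contraction defining $\ ^{s}\widehat{R}$ and the trace term in $\widehat{\mathbf{E}}$ use the correct block of $\mathbf{g}^{\ ^{k}\alpha \ ^{k}\beta }$; this is handled by the recurrent structure fixed in Section \ref{s2} together with the explicit curvature formulas collected in Appendix \ref{asa}. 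I would close by noting that, since the prescribed source in (\ref{source}) is itself block--diagonal, this split is precisely what will allow the field equations to be reduced, in the following section, to the N--adapted Ricci system.
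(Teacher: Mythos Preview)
Your proposal is correct and follows the same approach as the paper: the Corollary is obtained by reading off the N--adapted components of the definition (\ref{enstdt}), using that $\mathbf{g}_{\ ^{k}\alpha\ ^{k}\beta}$ is block--diagonal in the frames (\ref{dder})--(\ref{ddif}) so the trace term drops out on mixed slots. The paper states this in one line (``In explicit form, for N--adapted coefficients, we get a proof for\dots''), and your slot--by--slot unpacking, including the remark that $Ric(\widehat{\mathbf{D}})$ is not symmetric, is exactly the content behind that line.
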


In different theories of (string/brane/gauge etc) gravity, we can consider
nonholonomically modified gravitational field equations
\begin{equation}
\widehat{\mathbf{E}}_{\ ^{k}\alpha \ ^{k}\beta }=\varkappa \widehat{\mathbf{T%
}}_{\ ^{k}\alpha \ ^{k}\beta },  \label{deinst}
\end{equation}%
for a source $\widehat{\mathbf{T}}_{\ ^{k}\alpha \ ^{k}\beta }$ defined by
certain classical or quantum corrections and/or constraints on dynamics of
fields to usual energy--momentum tensors. Such equations are not equivalent,
in general, to the usual Einstein equations (\ref{einsteq}) for the
Levi--Civita connection $\ ^{k}\nabla $ .\footnote{%
As we noted in Refs. \cite{vrflg,vbrane,vkvnsst,vdqak}, an equivalence of
both types of filed equations would be possible, for instance, if we
introduce a generalized source $\widehat{\mathbf{T}}_{\ ^{k}\beta \
^{k}\delta }$ containing contributions of the distortion tensor (\ref{deft}).%
}

\begin{condition}
A class of metrics $\ ^{k}\mathbf{g}$ (\ref{gendm}) defining solutions of
the gravitational field equations the canonical d--connection (\ref{deinst})
are also solutions for the Einstein equations for the Levi--Civita
connection (\ref{einsteq}), if with respect to certain N--adapted frames (%
\ref{dder}) and (\ref{ddif}) there are satisfied the conditions
\begin{eqnarray}
\widehat{C}_{jb}^{i} &=&0,\Omega _{\ ji}^{a}=0,\widehat{T}_{ja}^{c}=0;%
\widehat{C}_{\beta \ ^{1}b}^{\alpha }=0,\Omega _{\ \beta \alpha }^{\
^{1}a}=0,\widehat{T}_{\beta \ ^{1}a}^{\ ^{1}c}=0;  \notag \\
\widehat{C}_{\ ^{1}\beta \ ^{2}b}^{\ ^{1}\alpha } &=&0,\Omega _{\ \
^{1}\beta \ ^{1}\alpha }^{\ ^{2}a}=0,\widehat{T}_{\ ^{1}\beta \ ^{2}a}^{\
^{2}c}=0;...;  \label{lccondg} \\
\widehat{C}_{\ ^{k-1}\beta \ ^{k}b}^{\ ^{k-1}\alpha }&=&0,\Omega _{\ \
^{k-1}\beta \ ^{k-1}\alpha }^{\ ^{k}a}=0,\widehat{T}_{\ ^{k-1}\beta \
^{k}a}^{\ ^{k}c}=0;  \notag
\end{eqnarray}%
and $\varkappa \widehat{\mathbf{T}}_{\ ^{k}\alpha \ ^{k}\beta }$ includes
the energy--momentum tensor for matter field in usual gravity and
distortions of the Einstein tensor determined by distortions of linear
connections.
\end{condition}

\begin{proof}
We can see that both the torsion (\ref{dtor}) and distortion tensor, see
formulas (\ref{deft}), became zero if \ the conditions (\ref{lccondg}) are
satisfied. In such a case, the distortion relations (\ref{deflc}) transform
into $\ \Gamma _{\ \ ^{k}\alpha \ ^{k}\beta }^{\ ^{k}\gamma }=\widehat{%
\mathbf{\Gamma }}_{\ \ ^{k}\alpha \ ^{k}\beta }^{\ ^{k}\gamma }$ (even, in
general, $\widehat{\mathbf{D}}\neq \nabla ).$\footnote{%
This is possible because the laws of transforms for d--connections, for the
Levi--Civita connection and different types of tensors being adapted, or
not, to a N--splitting (\ref{whitney}) are very different.} Even such
additional constraints are imposed, the geometric constructions are with
nonholonomic variables because the anholonomy coefficients are not
obligatory zero(for instance, $w_{ia}^{b}=\partial _{a}N_{i}^{b}$ etc, see
formulas (\ref{anhrel})). $\square $
\end{proof}

\subsection{The system of N--adapted Einstein equations}

The goal of this work is to prove that we can solve in a very general form
any system of gravitational filed equations in high dimensional gravity, for
instance, for the canonical d--connection and/or the Levi--Civita connection
if such equations can be written as a variant of equations (\ref{deinst}),
\begin{equation}
\widehat{\mathbf{R}}_{\ \ \ ^{k}\beta }^{\ ^{k}\alpha }=\Upsilon _{\ \ \
^{k}\beta }^{\ ^{k}\alpha },  \label{deinst1}
\end{equation}%
with a general source parametrize in the form (\ref{source}), $\Upsilon _{\
\ \ ^{k}\beta }^{\ ^{k}\alpha }=diag[\Upsilon _{\ ^{k}\gamma }],$ including
possible contributions from energy--momentum and/or distortion tensors.

Let us denote partial derivatives in the form
\begin{equation*}
\partial _{2}=\partial /\partial x^{2},...,\partial _{v}=\partial /\partial
v,\partial _{^{k}v}=\partial /\partial \ ^{k}v,...,\partial _{\alpha
}=\partial /\partial u^{\alpha },...,\partial _{\ ^{k}\alpha }=\partial
/\partial u^{\ ^{k}\alpha }.
\end{equation*}

\begin{theorem}
\label{th2}The gravitational field equations (\ref{deinst1}) constructed for
$\ ^{k}\widehat{\mathbf{D}}=\{\widehat{\mathbf{\Gamma }}_{\ \ ^{k}\alpha \
^{k}\beta }^{\ ^{k}\gamma }\}$ with coefficients (\ref{candcon}) and
computed for a metric $\ ^{k}\mathbf{g=\mathbf{\{\mathbf{g}}}_{\ ^{k}\beta \
^{k}\gamma }\mathbf{\mathbf{\}}}$ (\ref{gendm}) with coefficients (\ref%
{data1}) are equivalent to this system of partial differential equations:
\begin{eqnarray}
\widehat{R}_{2}^{2} &=&\widehat{R}_{3}^{3} =\frac{1}{2g_{2}g_{3}}[\frac{%
\partial _{2}g_{2}\cdot \partial _{2}g_{3}}{2g_{2}}+\frac{(\partial
_{2}g_{3})^{2}}{2g_{3}}-\partial _{2}^{\ 2}g_{3}  \label{4ep1a} \\
&&+\frac{\partial _{3}g_{2}\cdot \partial _{3}g_{3}}{2g_{3}}+\frac{(\partial
_{3}g_{2})^{2}}{2g_{2}}-\partial _{3}^{\ 2}g_{2}]=-\Upsilon _{4}(x^{\widehat{%
i}}),  \notag \\
\widehat{R}_{4}^{4} &=&\widehat{R}_{5}^{5}=\frac{\partial _{v}h_{5}}{%
2h_{4}h_{5}}\partial _{v}\left( \ln \left| \frac{\sqrt{|h_{4}h_{5}|}}{%
\partial _{v}h_{5}}\right| \right) =\ -\Upsilon _{2}(x^{i},v),  \label{4ep2a}
\\
\widehat{R}_{4i} &=&-w_{i}\frac{\beta }{2h_{4}}-\frac{\alpha _{i}}{2h_{4}}=0,
\label{4ep3a} \\
\widehat{R}_{5i} &=&-\frac{h_{5}}{2h_{4}}\left[ \partial _{v}^{\
2}n_{i}+\gamma \partial _{v}n_{i}\right] =0,  \label{4ep4a}
\end{eqnarray}%
\begin{eqnarray*}
\widehat{R}_{6}^{6} &=&\widehat{R}_{7}^{7}=\frac{\partial _{\ ^{1}v}h_{7}}{%
2h_{6}h_{7}}\partial _{\ ^{1}v}\left( \ln \left| \frac{\sqrt{|h_{6}h_{7}|}}{%
\partial _{\ ^{1}v}h_{6}}\right| \right) =\ -\ ^{1}\Upsilon _{2}(u^{\alpha
},\ ^{1}v), \\
\widehat{R}_{6\mu } &=&-w_{\mu }\frac{\ ^{1}\beta }{2h_{6}}-\frac{\alpha
_{\mu }}{2h_{6}}=0,\
\widehat{R}_{7\mu } = -\frac{h_{7}}{2h_{6}}\left[ \partial _{\ ^{1}v}^{\
2}n_{\mu }+\ ^{1}\gamma \partial _{\ ^{1}v}n_{\mu }\right] =0,
\end{eqnarray*}%
\begin{eqnarray*}
\widehat{R}_{8}^{8} &=&\widehat{R}_{9}^{9}=\frac{\partial _{\ ^{2}v}h_{9}}{%
2h_{8}h_{9}}\partial _{\ ^{2}v}\left( \ln \left| \frac{\sqrt{|h_{8}h_{9}|}}{%
\partial _{\ ^{2}v}h_{8}}\right| \right) =\ -\ ^{2}\Upsilon _{2}(u^{\
^{1}\alpha },\ ^{2}v), \\
\widehat{R}_{8\ ^{1}\mu } &=&-w_{\ ^{1}\mu }\frac{\ ^{2}\beta }{2h_{8}}-%
\frac{\alpha _{\ ^{1}\mu }}{2h_{8}}=0,\\
\widehat{R}_{9\ ^{1}\mu } &=& -\frac{h_{9}}{2h_{8}}\left[ \partial _{\
^{2}v}^{\ 2}n_{\ ^{1}\mu }+\ ^{2}\gamma \partial _{\ ^{2}v}n_{\ ^{1}\mu }%
\right] =0,
\end{eqnarray*}%
\begin{equation*}
.....,
\end{equation*}%
\begin{eqnarray*}
\widehat{R}_{4+2k}^{4+2k} &=&\widehat{R}_{5+2k}^{5+2k}=\frac{\partial _{\
^{k}v}h_{5+2k}}{2h_{4+2k}h_{5+2k}}\partial _{\ ^{k}v}\left( \ln \left| \frac{%
\sqrt{|h_{4+2k}h_{5+2k}|}}{\ }{\partial _{\ ^{k}v}h_{4+2k}}\right| \right) \\
&=&\ -\ ^{k}\Upsilon _{2}(u^{\ ^{k-1}\alpha },\ ^{k}v),
\end{eqnarray*}%
\begin{eqnarray*}
\widehat{R}_{4+2k\ ^{k-1}\mu } &=&-w_{\ ^{k-1}\mu }\frac{\ ^{k}\beta }{%
2h_{4+2k}}-\frac{\alpha _{\ ^{k-1}\mu }}{2h_{4+2k}}=0, \\
\widehat{R}_{5+2k\ ^{k-1}\mu } &=&-\frac{h_{5+2k}}{2h_{4+2k}}\left[ \partial
_{\ ^{k}v}^{\ 2}n_{\ ^{k-1}\mu }+\ ^{k}\gamma \partial _{\ ^{k}v}n_{\
^{k-1}\mu }\right] =0,
\end{eqnarray*}%
where, for $\partial _{v}h_{4}\neq 0$ and $\partial _{v}h_{5}\neq 0;\partial
_{^{1}v}h_{6}\neq 0$ and $\partial _{^{1}v}h_{7}\neq 0;\partial
_{^{2}v}h_{8}\neq 0$ and $\partial _{^{2}v}h_{9}\neq 0;...;\partial
_{^{k}v}h_{4+2k}\neq 0$ and $\partial _{^{k}v}h_{5+2k}\neq 0;$\footnote{%
solutions, for instance, with $\partial _{v}h_{4}=0$ and/or $\partial
_{v}h_{5}=0,$ should be analyzed as some special cases (for simplicity, we
omit such considerations in this work)}%
\begin{eqnarray}
~\phi &=&\ln |\frac{\partial _{v}h_{5}}{\sqrt{|h_{4}h_{5}|}}|,\ \alpha
_{i}=\partial _{v}h_{5}\cdot \partial _{i}\phi ,\   \label{auxphi} \\
\beta &=&\partial _{v}h_{4}\ \cdot \partial _{v}\phi ,\ \gamma =\partial
_{v}\left( \ln |h_{5}|^{3/2}/|h_{4}|\right) ;  \notag
\end{eqnarray}%
\begin{eqnarray*}
~\ ^{1}\phi &=&\ln |\frac{\partial _{\ ^{1}v}h_{7}}{\sqrt{|h_{6}h_{7}|}}|,\
\alpha _{\mu }=\partial _{\ ^{1}v}h_{7}\ \cdot \partial _{\mu }\ ^{1}\phi ,\
\\
\ ^{1}\beta &=&\partial _{\ ^{1}v}h_{6}\ \cdot \partial _{\ ^{1}v}\ ^{1}\phi
,\ \ ^{1}\gamma =\partial _{\ ^{1}v}\left( \ln |h_{7}|^{3/2}/|h_{6}|\right) ;
\end{eqnarray*}%
\begin{eqnarray*}
~\ ^{2}\phi &=&\ln |\frac{\partial _{\ ^{2}v}h_{9}}{\sqrt{|h_{8}h_{9}|}}|,\
\alpha _{\ ^{1}\mu }=\partial _{\ ^{2}v}h_{9}\ \cdot \partial _{\ ^{1}\mu }\
^{2}\phi ,\  \\
\ ^{2}\beta &=&\partial _{\ ^{2}v}h_{8}\ \cdot \partial _{\ ^{2}v}\ ^{2}\phi
,\ \ ^{2}\gamma =\partial _{\ ^{2}v}\left( \ln |h_{9}|^{3/2}/|h_{8}|\right) ;
\end{eqnarray*}%
\begin{equation*}
......;
\end{equation*}%
\begin{eqnarray*}
~\ ^{k}\phi &=&\ln |\frac{\partial _{\ ^{k}v}h_{5+2k}}{\sqrt{%
|h_{4+2k}h_{5+2k}|}}|,\ \alpha _{\ ^{k-1}\mu }=\partial _{\ ^{k}v}h_{5+2k}\
\cdot \partial _{\ ^{k-1}\mu }\ ^{k}\phi ,\  \\
\ ^{k}\beta &=&\partial _{\ ^{k}v}h_{4+2k}\ \cdot \partial _{\ ^{k}v}\
^{k}\phi ,\ \ ^{k}\gamma =\partial _{\ ^{k}v}\left( \ln
|h_{5+2k}|^{3/2}/|h_{4+2k}|\right) .
\end{eqnarray*}
\end{theorem}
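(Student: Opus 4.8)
The plan is to verify Theorem~\ref{th2} by direct substitution of the N-adapted metric ansatz (\ref{data1}) (the ``all $\omega=1$'' case of (\ref{ansgensol})) into the canonical d-connection coefficients (\ref{candcon}), then into the curvature formulas of Appendix~\ref{asa} (Theorem~\ref{thasa1}, formulas (\ref{dcurv})), and finally contracting as in (\ref{dricci}) to obtain the N-adapted Ricci coefficients. The key structural input is that (\ref{data1}) is block-diagonal with respect to the shell splitting (\ref{whitney}) and that on shell $s$ the N-coefficients $N_{\ ^{s-1}\beta}^{\ ^{s}a}=(w_{\ ^{s-1}\beta},n_{\ ^{s-1}\beta})$ depend only on $u^{\ ^{s-1}\alpha}$ and on the single vertical coordinate $\ ^{s}v=y^{2s+2}$, with no dependence on $y^{2s+3}$. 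Together with the N-adapted metric compatibility (\ref{metcomp}) of Theorem~\ref{th1}, this forces $\widehat C_{jc}^{i}=0$ and, more generally, all cross-shell curvature contractions $\widehat R_{\ ^{s}a\ ^{s'}b}$ with $s\neq s'$ to vanish identically, so the system (\ref{deinst1}) decouples shell by shell.

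Next I would observe that, once the cross-shell terms are gone, the computation on shell $s$ is \emph{formally identical} to the base ``$(3+2)$-splitting'' case $k=0$: under $(x^{i},v,y^{5})\mapsto(u^{\ ^{s-1}\alpha},\ ^{s}v,y^{2s+3})$, $(h_{4},h_{5})\mapsto(h_{2s+2},h_{2s+3})$, $(w_{i},n_{i})\mapsto(w_{\ ^{s-1}\mu},n_{\ ^{s-1}\mu})$ and $\Upsilon_{2}\mapsto\ ^{s}\Upsilon_{2}$, with the lower-shell metric $g_{\ ^{s-1}\alpha\ ^{s-1}\beta}$ playing the inert role of the ``base metric'' $g_{ij}$, the curvature and Ricci formulas go over verbatim. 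Hence it suffices to carry out the $k=0$ computation once; this is exactly the content worked out in Ref.~\cite{vgseg} and in \cite{vmth1,vmth2,vmth3}, which I would recall rather than re-derive. For the base one finds: (i) since $g_{11}=\epsilon_{1}$ is constant and $g_{\widehat i}$ depends only on $x^{\widehat k}$, all Christoffel-type coefficients carrying an index $1$ vanish, so $\widehat R_{11}=0$; (ii) the $2$-$3$ block of $\widehat R_{ij}$ collapses to the single relation (\ref{4ep1a}), which upon $g_{\widehat i}=\epsilon_{\widehat i}e^{\psi}$ becomes the two-dimensional equation $\epsilon_{2}\psi^{\bullet\bullet}+\epsilon_{3}\psi^{\prime\prime}=\Upsilon_{4}$ of (\ref{coeff}); (iii) evaluating $\widehat C_{bc}^{a},\widehat L_{bk}^{a}$ and the relevant pieces of $\widehat R_{\ bcd}^{a}$ gives $\widehat R_{4}^{4}=\widehat R_{5}^{5}$ equal to the expression in (\ref{4ep2a}); (iv) the mixed component $\widehat R_{4i}$ reduces to the algebraic relation $-w_{i}\beta/(2h_{4})-\alpha_{i}/(2h_{4})=0$ of (\ref{4ep3a}) and $\widehat R_{5i}$ to the linear second-order ordinary differential equation in $v$ of (\ref{4ep4a}), with the abbreviations $\phi,\alpha_{i},\beta,\gamma$ as in (\ref{auxphi}); (v) the remaining off-diagonal Ricci coefficients ($\widehat R_{ia}$, $\widehat R_{1i}$, and their shell analogues) are checked to vanish identically for this ansatz.

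Finally I would assemble the pieces: since the source $\Upsilon_{\ \ ^{k}\beta}^{\ ^{k}\alpha}=\mathrm{diag}[\Upsilon_{\ ^{k}\gamma}]$ is diagonal, every off-diagonal Ricci coefficient must vanish --- those vanishing automatically are discarded, and the surviving ones on each shell are precisely $\widehat R_{2s+2,\ ^{s-1}\mu}=0$ and $\widehat R_{2s+3,\ ^{s-1}\mu}=0$ --- while the diagonal equations yield $\widehat R_{2}^{2}=\widehat R_{3}^{3}=-\Upsilon_{4}$ and, on shell $s$, $\widehat R_{2s+2}^{2s+2}=\widehat R_{2s+3}^{2s+3}=-\ ^{s}\Upsilon_{2}$. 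Running $s=0,1,\dots,k$ reproduces the full displayed system, proving the stated equivalence. The main obstacle is not conceptual but organizational: one must verify with care that no nonzero cross-shell curvature term has been overlooked (so that the shell decoupling is genuine) and that the ``previous shell as inert base metric'' reduction is exact --- i.e.\ that derivatives along the new fibre coordinates $y^{\ ^{s}a}$ of the lower-shell data do not re-enter --- which is where the explicit structure of (\ref{dcurv}) and the shell-adapted coordinates of Section~\ref{s2} do the real work.
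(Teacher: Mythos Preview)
Your proposal is correct and follows essentially the same route as the paper's own proof in Appendix~B: compute the canonical d--connection coefficients for the ansatz (\ref{data1}), observe that $\widehat C^{i}_{jc}=0$ (and its shell analogues), carry out the $k=0$ Ricci computation explicitly to obtain (\ref{4ep1a})--(\ref{4ep4a}), and then note that higher shells are formally identical under the substitution you describe. The only difference is one of presentation: the paper actually writes out the intermediate $k=0$ steps (the nontrivial $\widehat L^{i}_{jk}$, $\widehat L^{a}_{bk}$, $\widehat C^{a}_{bc}$, then the decomposition $\widehat R_{bk}={}_{[1]}R_{bk}+{}_{[2]}R_{bk}+{}_{[3]}R_{bk}$, etc.), whereas you defer these to \cite{vgseg,vmth1,vmth2,vmth3} and instead spend more words justifying the shell decoupling; both are acceptable for a result whose content is, as both you and the paper acknowledge, a lengthy but mechanical verification.
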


Proof of this theorem is sketched in section \ref{proofth2}.

\vskip5pt Finally, we emphasize that the system of equations constructed in
Theorem \ref{th2} can be integrated in very general forms. For instance, for
any given $\Upsilon _{4}$ and $\Upsilon _{4},$ the equation (\ref{4ep1a})
relates an un--known function $g_{2}(x^{2},x^{3})$ to a prescribed $%
g_{3}(x^{2},x^{3}),$ or inversely. The equation (\ref{4ep2a}) contains only
derivatives on $y^{4}=v$ and allows us to define $h_{4}(x^{i},v)$ for a
given $h_{5}(x^{i},v),$ or inversely, for $h_{4,5}^{\ast }\neq 0.$ Having
defined $h_{4}$ and $h_{5},$ we can compute the coefficients (\ref{auxphi}),
which allows us to find $w_{i}$ from algebraic equations (\ref{4ep3a}) and
to compute $n_{i}$ $\ $by integrating two times on $v$ as follow from
equations (\ref{4ep4a}). Similar properties hold true for equations on
higher order shells.

\section{General Solutions for Einstein Equations with Extra Dimensions}

\label{s4}In this section, we show how general solutions of the
gravitational field equations can be constructed in explicit form. There are
three key steps: The first one is to generate exact solutions with Killing
symmetries for the canonical d--connection. At the second one, we shall
analyze the constraints selecting solutions for the Levi--Civita
connections. The final (third) step will be in generalizing the
constructions by eliminating Killing symmetries.

\subsection{Exact solutions with Killing symmetries}

We formulate for the Einstein equations for the canonical d--connection:

\begin{theorem}
\label{th3}The general class of solutions of nonholonomic gravitational
equations (\ref{deinst1}) with Killing symmetries on $e_{5+2k}=\partial
/\partial y^{5+2k}$ is defined by ansatz of type (\ref{ansgensol}) with $\
^{k}\omega ^{2}=1$ and coefficients $g_{\widehat{i}},h_{\ ^{k}a},$ $w_{\
^{k-1}\alpha },n_{\ ^{k-1}\alpha }$ computed for $k=0,1,2,...$ following
formulas (\ref{coeff}).
\end{theorem}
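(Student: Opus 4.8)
The plan is to reduce the full $\ ^{k}n$--dimensional system from Theorem~\ref{th2} to a recurrent family of decoupled blocks, one per shell, each of which has the same algebraic/differential structure as the four-- and five--dimensional case treated in Ref.~\cite{vgseg}, and then to integrate each block explicitly. First I would observe that, because the ansatz (\ref{data1}) has block--diagonal $h$--metric with $h_{\ ^{s}a\ ^{s}b}=\mathrm{diag}[h_{\ ^{s}a}(u^{\ ^{s-1}\alpha},\ ^{s}v)]$ and the N--connection coefficients $w_{\ ^{s-1}\beta},n_{\ ^{s-1}\beta}$ depend only on $u^{\ ^{s-1}\alpha}$ and $\ ^{s}v$, the canonical d--connection coefficients (\ref{candcon}) and hence the Ricci coefficients (\ref{driccia})--(\ref{dricci}) split shell by shell: on shell $s$ the equations $\widehat{R}_{4+2s}^{4+2s}=\widehat{R}_{5+2s}^{5+2s}=-\ ^{s}\Upsilon_{2}$, $\widehat{R}_{4+2s\ ^{s-1}\mu}=0$, $\widehat{R}_{5+2s\ ^{s-1}\mu}=0$ form a closed subsystem in the unknowns $h_{4+2s},h_{5+2s},w_{\ ^{s-1}\mu},n_{\ ^{s-1}\mu}$ once the lower shells' data are treated as known background functions of the ``base'' coordinates $u^{\ ^{s-1}\alpha}$. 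This is precisely the content of the footnote in the statement about ``a next shell extending in a compatible form the previous ones''. So the structure of the proof is an induction on $s$ from $s=0$ (the base $h$-- and $g$--sector plus the $(3{+}2)$--splitting) up to $s=k$.

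Next, for the base step and for each inductive step I would solve the block exactly as in the $k=0$ case. The two--dimensional equation (\ref{4ep1a}), $\epsilon_{2}\psi^{\bullet\bullet}+\epsilon_{3}\psi^{\prime\prime}=\Upsilon_{4}$, gives $g_{\widehat i}=\epsilon_{\widehat i}e^{\psi}$. The key vertical equation (\ref{4ep2a}) is a single ODE in $\ ^{s}v$ relating $h_{4+2s}$ and $h_{5+2s}$; introducing the generating function $\ ^{s}f$ with $\partial_{\ ^{s}v}\ ^{s}f\neq 0$ and the auxiliary $\ ^{s}\varsigma$ one checks by direct substitution that the stated formulas for $h_{4+2s},h_{5+2s}$ solve it (this is the one genuine integration, reducing a second--order nonlinear ODE to quadratures via the substitution $h_{5+2s}\sim(\ ^{s}f-\ _{s}^{0}f)^{2}$). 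Then (\ref{auxphi}) lets one compute $\ ^{s}\phi,\alpha_{\ ^{s-1}\mu},\ ^{s}\beta,\ ^{s}\gamma$ from the now--known $h$'s; the equations $\widehat{R}_{4+2s\ ^{s-1}\mu}=0$ become \emph{algebraic} in $w_{\ ^{s-1}\mu}$, giving $w_{\ ^{s-1}\mu}=-\partial_{\ ^{s-1}\mu}\ ^{s}\varsigma/\partial_{\ ^{s}v}\ ^{s}\varsigma$; and $\widehat{R}_{5+2s\ ^{s-1}\mu}=0$ is linear second order in $\ ^{s}v$ for $n_{\ ^{s-1}\mu}$, integrated twice to give the formula with integration functions $\ ^{1}n_{\ ^{s-1}\mu},\ ^{2}n_{\ ^{s-1}\mu}$. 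I would note that all of this is verbatim the $k=0$ computation of Refs.~\cite{vmth1,vmth2,vmth3,vgseg} with the coordinates relabeled, so I would cite those rather than redo the tensor algebra, as the statement itself permits.

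Finally I would record that the solution just constructed has $\ ^{s}\omega=1$ and Killing symmetry along $\partial/\partial y^{5+2s}$ on each shell, because none of the coefficients depend on $y^{5+2s}$; this is the assertion of Theorem~\ref{th3} as stated. The ``non--Killing'' generalization with nontrivial $\ ^{s}\omega$ and the Levi--Civita constraints (\ref{lccond}) would then be handled in the subsequent steps of the main proof, not here. The main obstacle I anticipate is not any single integration — each block is elementary — but rather verifying cleanly that the shell decoupling actually holds, i.e. that the cross--shell components of $\widehat{\mathbf{R}}_{\ ^{k}\alpha\ ^{k}\beta}$ either vanish or are automatically satisfied on the ansatz, so that the induction closes; this requires checking that the distortion/mixed curvature terms coupling shell $s$ to shells $<s$ enter only through the background quantities and do not impose extra constraints. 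Once that book--keeping is done, the explicit formulas (\ref{coeff}) follow by assembling the per--shell solutions.
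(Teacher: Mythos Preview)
Your proposal is correct and follows essentially the same route as the paper: solve the $k=0$ block explicitly (2D conformal equation for $g_{\widehat i}$, then the $h$--equation as an ODE in $v$, then algebraic $w_i$, then two integrations for $n_i$), and observe that higher shells repeat the same structure with relabeled coordinates. The one point you flag as the ``main obstacle'' --- verifying the shell decoupling --- is in fact already discharged by Theorem~\ref{th2}, whose statement gives the Ricci components in explicitly block--separated form on the ansatz (\ref{data1}); so in the proof of Theorem~\ref{th3} you may take that decoupling as input rather than as something to re--establish.
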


\begin{proof}
We sketch the proof giving more details for the shell $k=0$ (higher order
constructions being similar):
\begin{itemize}
\item The general solution of equation (\ref{4ep1a}) can be written in the form
$ \varpi =g_{[0]}\exp [a_{2}\widetilde{x}^{2}\left( x^{2},x^{3}\right) +a_{3}%
\widetilde{x}^{3}\left( x^{2},x^{3}\right) ],$
were $g_{[0]},a_{2}$ and $a_{3}$ are some constants and the functions $%
\widetilde{x}^{2,3}\left( x^{2},x^{3}\right) $ define any coordinate
transforms $x^{2,3}\rightarrow \widetilde{x}^{2,3}$ for which the 2D line
element becomes conformally flat, i. e.
\begin{equation*}
g_{2}(x^{2},x^{3})(dx^{2})^{2}+g_{3}(x^{2},x^{3})(dx^{3})^{2}\rightarrow
\varpi (x^{2},x^{3})\left[ (d\widetilde{x}^{2})^{2}+\epsilon (d\widetilde{x}%
^{3})^{2}\right] ,  \label{con10}
\end{equation*}%
where $\epsilon =\pm 1$ for a corresponding signature. It is convenient to
write some partial derivatives, in brief, in the form $\partial
_{2}g=g^{\bullet },\partial _{3}g=g^{^{\prime }},\partial _{4}g=g^{\ast }.$
In coordinates $\widetilde{x}^{2,3},$ the equation (\ref{4ep1a}) transform
into %
 $\varpi \left( \ddot{\varpi}+\varpi ^{\prime \prime }\right) -\dot{\varpi}%
-\varpi ^{\prime }=2\varpi ^{2}\Upsilon _{4}(\tilde{x}^{2},\tilde{x}^{3})$
 or%
\begin{equation}
\ddot{\psi}+\psi ^{\prime \prime }=2\Upsilon _{4}(\tilde{x}^{2},\tilde{x}%
^{3}),  \label{auxeq01}
\end{equation}%
for $\psi =\ln |\varpi |.$ The integrals of (\ref{auxeq01}) depends on the
source $\Upsilon _{4}.$ As a particular case we can consider that $\Upsilon
_{4}=0.$

\item For $\ ^{k}\Upsilon _{2}(u^{\ ^{k-1}\alpha },\ ^{k}v)=0,$ the equation
(\ref{4ep2a}), and its higher shell analogs, relates two functions $%
h_{4+2k}(u^{\ ^{k-1}\alpha },\ ^{k}v)$ and $h_{5+2k}(u^{\ ^{k-1}\alpha },\
^{k}v)$ following two possibilities:

a) to compute
\begin{eqnarray}
&&\sqrt{|h_{5+2k}|} =\ _{1}h_{5+2k}\left( u^{\ ^{k-1}\alpha }\right) +\
_{2}h_{5+2k}\left( u^{\ ^{k-1}\alpha }\right) \times  \notag \\
&& \int \sqrt{|h_{4+2k}(u^{\ ^{k-1}\alpha },\ ^{k}v)|}dv,  \mbox{\ for \ } \partial _{\ ^{k}v}h_{4+2k}(u^{\ ^{k-1}\alpha },\ ^{k}v)
\neq 0;  \notag \\
&&=\ _{1}h_{5+2k}\left( u^{\ ^{k-1}\alpha }\right) +\ _{2}h_{5+2k}\left(
u^{\ ^{k-1}\alpha }\right) \ ^{k}v,  \label{p2} \\
&& \mbox{\ for \ } \partial _{\ ^{k}v}h_{4+2k}(u^{\ ^{k-1}\alpha },\ ^{k}v)
=0,  \notag
\end{eqnarray}%
for some functions $\ _{1}h_{5+2k}\left( u^{\ ^{k-1}\alpha }\right) $ and $\
_{2}h_{5+2k}\left( u^{\ ^{k-1}\alpha }\right) $ stated by boundary
conditions;

b) or, inversely, to compute $h_{4+2k}$ for respectively given $h_{5+2k},$
with $~\partial _{\ ^{k}v}h_{5+2k}\neq 0,$%
\begin{equation}
\sqrt{|h_{4+2k}|}=\ _{k}^{0}h\left( u^{\ ^{k-1}\alpha }\right) ~\partial _{\
^{k}v}\sqrt{|h_{5+2k}\left( u^{\ ^{k-1}\alpha },\ ^{k}v\right) |},
\label{p1}
\end{equation}%
with $\ _{k}^{0}h\left( u^{\ ^{k-1}\alpha }\right) $ given by boundary
conditions. We note that the (\ref{4ep2a}) with zero source is satisfied by
arbitrary pairs of coefficients $h_{4+2k}(u^{\ ^{k-1}\alpha },\ ^{k}v)$ and $%
\ _{0}h_{5+2k}\left( u^{\ ^{k-1}\alpha }\right) .$ Solutions with $\
^{k}\Upsilon _{2}\neq 0$ can be found by ansatz of type
\begin{equation}
h_{5+2k}[\ ^{k}\Upsilon _{2}]=h_{5+2k},h_{4}[\ ^{k}\Upsilon _{2}]=\varsigma
_{4+2k}\left( u^{\ ^{k-1}\alpha },\ ^{k}v\right) h_{4+2k},  \label{aux2f}
\end{equation}%
where $h_{4+2k}$ and $h_{5+2k}$ are related by formula (\ref{p2}), or (\ref%
{p1}). Substituting (\ref{aux2f}), we obtain%
\begin{equation}
\varsigma _{4+2k}\left( u^{\ ^{k-1}\alpha },\ ^{k}v\right) =\ ^{0}\varsigma
_{4+2k}\left( u^{\ ^{k-1}\alpha }\right) -\int \ ^{k}\Upsilon _{2} \frac{%
h_{4+2k}h_{5+2k}}{4\partial _{\ ^{k}v}h_{5+2k}}d\ ^{k}v,  \label{aux3f}
\end{equation}%
where $\ ^{0}\varsigma _{4+2k}\left( u^{\ ^{k-1}\alpha }\right) $ are
arbitrary functions.

\item The exact solutions of (\ref{4ep3a}) for $\beta \neq 0$ are defined
from an algebraic equation, $w_{i}\beta +\alpha _{i}=0,$ where the
coefficients $\beta $ and $\alpha _{i}$ are computed as in formulas (\ref%
{auxphi}) by using the solutions for (\ref{4ep1a}) and (\ref{4ep2a}). The
general solution is
\begin{equation}
w_{k}=\partial _{k}\ln [\sqrt{|h_{4}h_{5}|}/|h_{5}^{\ast }|]/\partial
_{v}\ln [\sqrt{|h_{4}h_{5}|}/|h_{5}^{\ast }|],  \label{w1}
\end{equation}%
with $\partial _{v}=\partial /\partial v$ and $h_{5}^{\ast }\neq 0.$ If $%
h_{5}^{\ast }=0,$ or even $h_{5}^{\ast }\neq 0$ but $\beta =0,$ the
coefficients $w_{k}$ could be arbitrary functions on $\left( x^{i},v\right)
. $ \ For the vacuum Einstein equations this is a degenerated case imposing
the the compatibility conditions $\beta =\alpha _{i}=0,$ which are
satisfied, for instance, if the $h_{4}$ and $h_{5}$ are related as in the
formula (\ref{p1}) but with $h_{[0]}\left( x^{i}\right) =const.$

\item Having defined $h_{4}$ and $h_{5}$ and computed $\gamma $ from (\ref%
{auxphi}), we can solve the equation (\ref{4ep4a}) by integrating on
variable ''$v$'' the equation $n_{i}^{\ast \ast }+\gamma n_{i}^{\ast }=0.$
The exact solution is
\begin{eqnarray}
n_{k} &=&n_{k[1]}\left( x^{i}\right) +n_{k[2]}\left( x^{i}\right) \int
[h_{4}/(\sqrt{|h_{5}|})^{3}]dv,~h_{5}^{\ast }\neq 0;  \notag \\
&=&n_{k[1]}\left( x^{i}\right) +n_{k[2]}\left( x^{i}\right) \int
h_{4}dv,\qquad ~h_{5}^{\ast }=0;  \label{n1} \\
&=&n_{k[1]}\left( x^{i}\right) +n_{k[2]}\left( x^{i}\right) \int [1/(\sqrt{%
|h_{5}|})^{3}]dv,~h_{4}^{\ast }=0,  \notag
\end{eqnarray}%
for some functions $n_{k[1,2]}\left( x^{i}\right) $ stated by boundary
conditions.

\item The generating and integration formulas in higher order formulas (\ref%
{aux2f}), (\ref{aux3f}), (\ref{w1}), (\ref{n1}) etc are redefined in a form
as it was considered for $k=0$ in review articles \cite{vrflg,ijgmmp} which
result in formulas (\ref{coeff}) for $\ ^{k}\omega ^{2}=1.\ \square $
\end{itemize}
\end{proof}

\vskip5pt

We note that the solutions constructed in Theorem \ref{th3} are very general
ones and contain as particular cases all known exact solutions for (non)
holonomic Einstein spaces with Killing symmetries. They also can be
generalized to include arbitrary finite sets of parameters, see Ref. \cite%
{ijgmmp}.

\begin{corollary}
\label{cor1}An ansatz (\ref{ansgensol}) with $\ ^{k}\omega ^{2}=1$ and
coefficients $g_{\widehat{i}},h_{\ ^{k}a},$ $w_{\ ^{k-1}\beta },$ $n_{\
^{k-1}\beta }$ computed following formulas (\ref{coeff}) define solutions
with Killing symmetries on $e_{5+2k}=\partial /\partial y^{5+2k}$ of the
Einstein equations (\ref{einst1}) for the Levi--Civita connection $\Gamma
_{\ \ ^{k}\alpha \ ^{k}\beta }^{\ ^{k}\gamma }$ if the coefficients of
metric are subjected additionally to the conditions (\ref{lccond}).
\end{corollary}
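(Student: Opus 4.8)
The plan is to piece the statement together from Theorem~\ref{th3} and the Condition of Section~\ref{s3} (the one expressed through the constraints (\ref{lccondg})). Theorem~\ref{th3} already certifies that an ansatz (\ref{ansgensol}) with $\ ^{k}\omega^{2}=1$ and coefficients (\ref{coeff}) solves the nonholonomic system $\widehat{\mathbf{R}}_{\ \ \ ^{k}\beta}^{\ ^{k}\alpha}=\Upsilon_{\ \ \ ^{k}\beta}^{\ ^{k}\alpha}$ (\ref{deinst1}) for the canonical d--connection $\ ^{k}\widehat{\mathbf{D}}$. By the Condition, such a metric is simultaneously a solution of the Einstein equations (\ref{einst1}) for the Levi--Civita connection $\ ^{k}\nabla$ as soon as the torsion and distortion coefficients listed in (\ref{lccondg}) vanish on every shell, since then $\Gamma_{\ \ ^{k}\alpha\ ^{k}\beta}^{\ ^{k}\gamma}=\widehat{\mathbf{\Gamma}}_{\ \ ^{k}\alpha\ ^{k}\beta}^{\ ^{k}\gamma}$ and the distortion tensor (\ref{deft}) is zero. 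Hence the whole task reduces to one verification: that, specialized to this ansatz, the algebro--differential system (\ref{lccond}) is equivalent to (\ref{lccondg}).

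I would carry out that verification shell by shell, starting with $k=0$. First, the ``$\widehat{C}$--type'' constraints hold for free: in (\ref{candcon}) one has $\widehat{C}_{jb}^{i}=\tfrac12 g^{ik}e_{b}g_{jk}$, and since $g_{ij}=\mathrm{diag}[\epsilon_{1},g_{\widehat i}(x^{\widehat k})]$ does not depend on the vertical coordinates, $e_{b}g_{jk}=0$, so $\widehat{C}_{jb}^{i}=0$ identically; the same remark annihilates $\widehat{C}_{\beta\ ^{1}b}^{\alpha},\dots,\widehat{C}_{\ ^{k-1}\beta\ ^{k}b}^{\ ^{k-1}\alpha}$ because each lower--shell block $\mathbf{g}_{\ ^{s-1}\alpha\ ^{s-1}\beta}$ is independent of $y^{\ ^{s}a}$. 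Next, the N--connection curvature (\ref{ncurv}) gives $\Omega_{ji}^{a}=\mathbf{e}_{j}N_{i}^{a}-\mathbf{e}_{i}N_{j}^{a}$ with $N_{k}^{4}=w_{k},N_{k}^{5}=n_{k}$; imposing $\Omega_{ji}^{4}=0$ and $\Omega_{ji}^{5}=0$ reproduces $\mathbf{e}_{k}w_{i}=\mathbf{e}_{i}w_{k}$ and (once $\partial_{v}n_{i}=0$) $\partial_{i}n_{k}=\partial_{k}n_{i}$. Finally, the mixed torsion of (\ref{dtor}), $\widehat{T}_{\ bi}^{a}=\partial N_{i}^{a}/\partial y^{b}-\widehat{L}_{bi}^{a}$, evaluated on diagonal $h_{ab}$ with $\widehat{L}_{bk}^{a}$ from (\ref{candcon}), collapses to $\partial_{v}w_{i}=\mathbf{e}_{i}\ln|h_{4}|$ together with $\partial_{v}n_{i}=0$. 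Thus all four relations of the $k=0$ block of (\ref{lccond}) are exactly (\ref{lccondg}) for that block. Repeating the computation verbatim on shells $s=1,\dots,k$ under the substitutions $v\mapsto\ ^{s}v$, $w_{i}\mapsto w_{\ ^{s-1}\alpha}$, $n_{i}\mapsto n_{\ ^{s-1}\alpha}$, $h_{4}\mapsto h_{2s+4}$ yields the remaining blocks, and the conformal factors play no role since $\ ^{k}\omega^{2}=1$.

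The step I expect to be the real obstacle is not the equivalence above but checking that the supplementary first--order system (\ref{lccond}) is genuinely compatible with the already--integrated formulas (\ref{coeff}), i.e.\ that the constraints do not overdetermine the metric and empty the solution space. This is exactly where the integration functions $\ ^{1}n_{j},\ ^{2}n_{j},\dots$ and $^{0}f,\ ^{0}h$ are used: one shows that $\partial_{v}n_{i}=0$ merely kills (or reparametrizes) the $\ ^{2}n_{k}$--term in $n_{k}$, that $w_{i}=-\partial_{i}\varsigma/\partial_{v}\varsigma$ is a total $v$--gradient and hence automatically satisfies $\mathbf{e}_{k}w_{i}=\mathbf{e}_{i}w_{k}$ and $\partial_{v}w_{i}=\mathbf{e}_{i}\ln|h_{4}|$ once $h_{4},h_{5}$ and $\varsigma$ are taken in the form (\ref{coeff}), so that a nonempty family of admissible integration functions survives. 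I would verify these integrability relations one at a time on each shell, exactly as done for $k=0$ in Refs.~\cite{vrflg,ijgmmp}; assembling them gives the Levi--Civita reduction and thus the Corollary. (The non--trivial $\ ^{k}\omega$ case, which requires the additional constraints (\ref{confcondk}), is the content of Theorem~\ref{mth} and lies outside this Corollary.)
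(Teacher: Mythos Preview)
Your approach is essentially the paper's: invoke Theorem~\ref{th3} for the $\ ^{k}\widehat{\mathbf{D}}$--solution, invoke the Condition of Section~\ref{s3}, and then verify that for the ansatz (\ref{data1}) the abstract constraints (\ref{lccondg}) specialize to the explicit ones (\ref{lccond}). The paper's own proof is much terser---it simply asserts the equivalence and points to the formulas (\ref{aux02cc}), (\ref{omeg}), (\ref{dtorsc}) computed in Appendix~B---whereas you spell out shell by shell why $\widehat{C}_{jb}^{i}=0$ holds for free and how the vanishing of $\Omega^{a}_{\ ji}$ and $\widehat{T}^{a}_{\ bi}$ unpacks into the four relations of (\ref{lccond}). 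That added detail is fine and matches the computations in Appendix~B.

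One caution about your final paragraph. You write that $w_i=-\partial_i\varsigma/\partial_v\varsigma$ ``is a total $v$--gradient and hence automatically satisfies $\mathbf{e}_k w_i=\mathbf{e}_i w_k$ and $\partial_v w_i=\mathbf{e}_i\ln|h_4|$''. This is not correct: a direct calculation of $\mathbf{e}_k w_i-\mathbf{e}_i w_k$ for $w_i=-\varsigma_i/\varsigma_v$ leaves the antisymmetric residue $2(\varsigma_i\varsigma_{vk}-\varsigma_k\varsigma_{vi})/\varsigma_v^{2}$, which does not vanish identically, and there is no mechanism that forces $\partial_v w_i=\mathbf{e}_i\ln|h_4|$ from the form (\ref{coeff}) alone. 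The paper treats (\ref{lccond}) as genuine \emph{additional} constraints on the generating and integration functions (this is why they are listed separately in the Main Theorem), not as automatic consequences of (\ref{coeff}). Your instinct that one must check nonemptiness of the constrained family is right, but the argument for it is that the remaining freedom in $\ ^{0}f,\ ^{0}h,\ ^{1}n_k,\ ^{2}n_k,\ldots$ is large enough to accommodate (\ref{lccond}); it is not that (\ref{lccond}) holds for free. The Corollary itself, as stated and as proved in the paper, does not assert compatibility---it is a conditional statement (``if the coefficients \ldots are subjected additionally to (\ref{lccond})'')---so this issue lies outside the scope of what you need to prove here.
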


\begin{proof}
By straightforward computations for ansatz (\ref{gendm}) with coefficients (%
\ref{data1}), we get that the conditions (\ref{lccondg}) resulting in $%
\Gamma _{\ \ ^{k}\alpha \ ^{k}\beta }^{\ ^{k}\gamma }=\widehat{\mathbf{%
\Gamma }}_{\ \ ^{k}\alpha \ ^{k}\beta }^{\ ^{k}\gamma }$ are just those
written as (\ref{lccond}).\ For such ansatz, one $\ $holds the conditions (%
\ref{aux02cc})$\ $\ and the N--connection and torsion coefficients vanish,
i. e. the values (\ref{omeg}) and (\ref{dtorsc}) became zero. We get
nonholonomic configurations for the Levi--Civita connection with nontrivial
anholonomy coefficients (\ref{anhrel}). $\square $
\end{proof}

\vskip5pt

We conclude that in order to generate exact solutions with Killing
symmetries in Einstein gravity and its higher order generalizations, we
should consider N--adapted frames and nonholonomic deformations of the
Levi--Civita connection to an auxiliary metric compatible d--connection (for
instance, to the canonical d--connection, $\ ^{k}\widehat{\mathbf{D}}),$
when the corresponding system of nonholonomic gravitational field equations (%
\ref{4ep1a})--(\ref{4ep4a}) can be integrated in general form. Subjecting
the integral variety of such solutions to additional constraints of type (%
\ref{lccondg}), i.e. imposing the conditions (\ref{lccond}) to the
coefficients of metrics, we may construct new classes of exact solutions of
Einstein equations for the Levi--Civita connection $\ ^{k}\nabla .$

\subsection{General non--Killing solutions}

Our final aim is to consider general classes of solutions of the
nonholonomic gravitational field equations (\ref{deinst1}), and (for more
particular cases), of Einstein equations (\ref{einst1}) metrics depending on
all coordinates $u^{\ ^{k}\alpha }=(x^{i},y^{\ ^{k}a}),$ i.e. the solutions
will be without Killing symmetries.

Let us introduce some nontrivial multiples $\ ^{k}\omega ^{2}(u^{\ ^{k}\alpha })$
before coefficients $h_{\ ^{k}a}$ parametrized in the form (\ref{data1}) and
defining solutions with Killing symmetries. We get an ansatz \
\begin{eqnarray}
\ _{\omega }^{k}\mathbf{g} &=&\epsilon _{1}e^{1}\otimes e^{1}+g_{\widehat{j}%
}(x^{\widehat{k}})e^{\widehat{j}}\otimes e^{\widehat{j}}+\omega
^{2}(x^{i},y^{a})h_{a}(x^{i},v)\mathbf{e}^{a}\otimes \mathbf{e}^{a}  \notag
\\
&&+\ ^{1}\omega ^{2}(u^{\ ^{1}\alpha })\ \ h_{\ ^{1}a\ ^{1}b}(u^{\ \alpha
},\ ^{1}v)\mathbf{e}^{\ ^{1}a}\otimes \mathbf{e}^{\ ^{1}b}  \notag \\
&& +\ ^{2}\omega ^{2}(u^{\ ^{2}\alpha })h_{\ ^{2}a\ ^{2}b}(u^{\ ^{1}\alpha
},\ ^{2}v)\mathbf{e}^{\ ^{2}a}\otimes \mathbf{e}^{\ ^{2}b} +\ldots  \notag \\
&& +\ ^{k}\omega ^{2}(u^{\ ^{k}\alpha })\ h_{\ ^{k}a\ ^{k}b}(u^{\
^{k-1}\alpha },\ ^{k}v)\mathbf{e}^{\ ^{k}a}\otimes \mathbf{e}^{\ ^{k}b},
\label{genansc}
\end{eqnarray}%
where the N--adapted basis $\mathbf{e}^{\ ^{k}a}$ (and N--connection) are
the same as in (\ref{gendm}). Under such noholonomic conformal transform%
\footnote{%
we use the term ''nonholonomic'' because such transforms/ deformations are
adapted to a N--splitting stated by a prescribed nonholonomic distribution
on a corresponding high dimension spacetime} (defined by generating
functions $\ ^{2}\omega ^{2}(u^{\ ^{k}\alpha }))$ of metric, $\ ^{k}\mathbf{%
g\rightarrow \ }\ _{\ ^{k}\omega }^{\ }\mathbf{g,}$ the canonical
d--connection deforms as $\widehat{\mathbf{\Gamma }}_{\ \ ^{k}\alpha \
^{k}\beta }^{\ ^{k}\gamma }\rightarrow $ $\ _{\ ^{k}\omega }^{\ }\widehat{%
\mathbf{\Gamma }}_{\ \ ^{k}\alpha \ ^{k}\beta }^{\ ^{k}\gamma },$ where
 $\ _{\ ^{k}\omega }^{\ }\widehat{\mathbf{\Gamma }}_{\ \ ^{k}\alpha \
^{k}\beta }^{\ ^{k}\gamma } = (\widehat{L}_{jk}^{i},_{\ \omega }^{\ }%
\widehat{L}_{bk}^{a},\widehat{C}_{jc}^{i},_{\ \omega }^{\ }\widehat{C}%
_{bc}^{a};$ $_{\ ^{1}\omega }^{\ }\widehat{L}_{\ ^{1}b\alpha }^{\ ^{1}a},_{\
\omega }^{\ }\widehat{C}_{\beta \ ^{1}c}^{\alpha },_{\ ^{1}\omega }^{\ }%
\widehat{C}_{\ ^{1}b\ ^{1}c}^{\ ^{1}a};\ldots ;\
 _{\ ^{k}\omega }^{\ }\widehat{L}_{\ ^{k}b\ ^{k-1}\alpha }^{\ ^{k}a},_{\
^{k-1}\omega }^{\ }\widehat{C}_{\ ^{k-1}\beta \ ^{k}c}^{\ ^{k-1}\alpha },_{\
^{k}\omega }^{\ }\widehat{C}_{\ ^{k}b\ ^{k}c}^{\ ^{k}a}),$
\begin{eqnarray*}
&&\mbox{ for } \widehat{C}_{jc}^{i} = _{\ \omega }^{\ }\widehat{C}_{\beta \ ^{1}c}^{\alpha
}=...=\ _{\ ^{k-1}\omega }^{\ }\widehat{C}_{\ ^{k-1}\beta \ ^{k}c}^{\
^{k-1}\alpha }=0,\  \ _{\ ^{k}\omega }^{\ }\widehat{L}_{b\ ^{k-1}\alpha }^{\ ^{k}a}
= \\
&&\widehat{L}%
_{\ ^{k}b\ ^{k-1}\alpha }^{\ ^{k}a}+\ _{\ ^{k}\omega }^{\ z}\widehat{L}_{\
^{k}b\ ^{k-1}\alpha }^{\ ^{k}a},
\ _{\ ^{k}\omega }^{\ }\widehat{C}_{\ ^{k}b\ ^{k}c}^{\ ^{k}a} = \widehat{C}%
_{\ ^{k}b\ ^{k}c}^{\ ^{k}a}+\ \ _{\ ^{k}\omega }^{\ z}\widehat{C}_{\ ^{k}b\
^{k}c}^{\ ^{k}a},
\end{eqnarray*}%
\begin{eqnarray}
&&\mbox{with } \ _{\ ^{k}\omega }^{\ z}
\widehat{L}_{\ ^{k}b\ ^{k-1}\alpha }^{\ ^{k}a} =%
\frac{1}{2\ ^{k}\omega ^{2}}h^{\ ^{k}a\ ^{k}c}\ [ h_{\ ^{k}b\ ^{k}c}\mathbf{e%
}_{\ ^{k-1}\beta }(\ ^{k}\omega ^{2})  \notag \\
&& -h_{\ ^{k}b\ ^{k}c}N_{\ ^{k-1}\beta }^{\ ^{k}d}\partial _{\ ^{k}d}(\
^{k}\omega ^{2})]  =\delta _{\ ^{k}b}^{^{k}a}\mathbf{e}_{\ ^{k-1}\beta }\ln |\ ^{k}\omega |;
\label{defcdc1} \\
 &&\ _{\ ^{k}\omega }^{\ z}\widehat{C}_{\ ^{k}b\ ^{k}c}^{\ ^{k}a} =\left(
\delta _{\ ^{k}b}^{^{k}a}\partial _{\ ^{k}c}\mathbf{+}\delta _{\
^{k}c}^{^{k}a}\partial _{\ ^{k}b}-h_{\ ^{k}b\ ^{k}e}h^{\ ^{k}a\
^{k}e}\partial _{\ ^{k}e}\right) \ln |\ ^{k}\omega |,  \label{defcdc2}
\end{eqnarray}%
are computed by introducing coefficients of $\ _{\omega }^{k}\mathbf{g}$ (%
\ref{genansc}) into (\ref{candcon}).

\begin{proposition}
For nonholonomic N--adapted transforms $\ ^{k}\mathbf{g}$ $\ $(\ref{gendm}) $%
\mathbf{\rightarrow \ }\ _{\ ^{k}\omega }^{\ }\mathbf{g}$ (\ref{genansc})
with shell coefficients $\ ^{k}\omega $ satisfying respectively the
conditions $\mathbf{e}_{\ ^{k-1}\beta }(\ ^{k}\omega)$ $=0,$ the Ricci
tensor transform $\widehat{\mathbf{R}}_{\ ^{k}\alpha \ ^{k}\beta }$ (\ref%
{driccia}) $\rightarrow \ _{\ ^{k}\omega }^{\ \ }\widehat{\mathbf{R}}_{\
^{k}\alpha \ ^{k}\beta },$ where
\begin{eqnarray}
\mathbf{\ }\ _{\ ^{k}\omega }^{\ }\widehat{\mathbf{R}}_{\ ^{k}\alpha \
^{k}\beta } &=&\{\widehat{R}_{ij},\widehat{R}_{ia},\ \widehat{R}_{ai},_{\
\omega }^{\ }\widehat{R}_{ab};_{\ \omega }^{\ }\widehat{R}_{\alpha \beta },%
\widehat{R}_{\alpha \ ^{1}a},\ \widehat{R}_{\ ^{1}a\beta },_{\ ^{1}\omega
}^{\ }\widehat{R}_{\ ^{1}a\ ^{1}b};  \notag \\
&&\ _{\ ^{1}\omega }^{\ }\widehat{R}_{\ ^{1}\alpha \ ^{1}\beta },\widehat{R}%
_{\ ^{1}\alpha \ ^{2}a},\ \widehat{R}_{\ ^{2}a\ ^{1}\beta },\ _{\ ^{2}\omega
}^{\ }\widehat{R}_{\ ^{2}a\ ^{2}b};...;  \label{dricciac} \\
&&\mathbf{\ }\ _{\ ^{k-1}\omega }^{\ }\widehat{R}_{\ ^{k-1}\alpha \
^{k-1}\beta },\widehat{R}_{\ ^{k-1}\alpha \ ^{k}a},\ \widehat{R}_{\ ^{k}a\
^{k-1}\beta },\ _{\ ^{k}\omega }^{\ }\widehat{R}_{\ ^{k}a\ ^{k}b}\},  \notag
\end{eqnarray}%
with $_{\ ^{k-1}\omega }^{\ }\widehat{R}_{\ ^{k-1}\alpha \ ^{k-1}\beta }$
computed recurrently using $\mathbf{\ }\widehat{R}_{\ ^{k-1}\alpha \
^{k-1}\beta }$ and \\
 $\ _{\ ^{k}\omega }^{\ }\widehat{R}_{\ ^{k}a\ ^{k}b}=\ \widehat{R}_{\ ^{k}a\
^{k}b}+\ _{\ ^{k}\omega }^{\ z}\widehat{R}_{\ ^{k}a\ ^{k}b},$
 where the deformation tensor $\ _{\ ^{k}\omega }^{\ z}\widehat{R}_{\ ^{k}a\
^{k}b}$ is given by formula
\begin{eqnarray}
&&\ _{\ ^{k}\omega }^{\ z}\widehat{R}_{\ ^{k}a\ ^{k}b}=(2-\ ^{k}m)\
\widehat{D}_{\ ^{k}a}\widehat{D}_{\ ^{k}b}\ln |\ ^{k}\omega |
 -h_{\ ^{k}a\ ^{k}b}h^{\ ^{k}c\ ^{k}d}\times \notag \\ &&  \widehat{D}_{\ ^{k}c}\widehat{D}_{\
^{k}d}\ln |\ ^{k}\omega |
-(2-\ ^{k}m)\ \left( \widehat{D}_{\ ^{k}a}\ln |\ ^{k}\omega |\right)
\widehat{D}_{\ ^{k}b}\ln |\ ^{k}\omega |  \label{riccivc} \\
&&+(2-\ ^{k}m)\ h_{\ ^{k}a\ ^{k}b}h^{\ ^{k}c\ ^{k}d}\ \left( \widehat{D}_{\
^{k}c}\ln |\ ^{k}\omega |\right) \widehat{D}_{\ ^{k}d}\ln |\ ^{k}\omega |.
\notag
\end{eqnarray}
\end{proposition}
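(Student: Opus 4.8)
The plan is to reduce the claim, shell by shell, to the classical formula for the conformal change of the Ricci tensor carried out inside the top vertical distribution. I would first compute how the canonical d--connection is deformed. Since $\ _{\ ^{k}\omega }^{\ }\mathbf{g}$ (\ref{genansc}) differs from $\ ^{k}\mathbf{g}$ (\ref{gendm}) only by the replacement $h_{\ ^{k}a\ ^{k}b}\rightarrow \ ^{k}\omega ^{2}\,h_{\ ^{k}a\ ^{k}b}$ in the top block, while keeping the same $\ ^{k}\mathbf{N}$--connection and the same lower--shell data, inserting this into the defining formulas (\ref{candcon}) shows that the only coefficients of $\ ^{k}\widehat{\mathbf{D}}$ that can change are $\widehat{L}_{\ ^{k}b\ ^{k-1}\alpha }^{\ ^{k}a}$ and $\widehat{C}_{\ ^{k}b\ ^{k}c}^{\ ^{k}a}$, the changes being exactly the distortions $\ _{\ ^{k}\omega }^{\ z}\widehat{L}_{\ ^{k}b\ ^{k-1}\alpha }^{\ ^{k}a}$ and $\ _{\ ^{k}\omega }^{\ z}\widehat{C}_{\ ^{k}b\ ^{k}c}^{\ ^{k}a}$ of (\ref{defcdc1})--(\ref{defcdc2}). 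The hypothesis $\mathbf{e}_{\ ^{k-1}\beta }(\ ^{k}\omega )=0$ makes $\ _{\ ^{k}\omega }^{\ z}\widehat{L}_{\ ^{k}b\ ^{k-1}\alpha }^{\ ^{k}a}=\delta _{\ ^{k}b}^{\ ^{k}a}\,\mathbf{e}_{\ ^{k-1}\beta }\ln |\ ^{k}\omega |$ vanish, so the entire deformation of $\ ^{k}\widehat{\mathbf{D}}$ reduces to the purely vertical piece $\ _{\ ^{k}\omega }^{\ z}\widehat{C}_{\ ^{k}b\ ^{k}c}^{\ ^{k}a}$. I would also record that for the ansatz (\ref{gendm}) one has $\widehat{C}_{\ ^{k-1}\beta \ ^{k}c}^{\ ^{k-1}\alpha }=\frac{1}{2}g^{\ ^{k-1}\alpha \ ^{k-1}\gamma }e_{\ ^{k}c}g_{\ ^{k-1}\beta \ ^{k-1}\gamma }=0$, since the lower--shell metric does not depend on $y^{\ ^{k}c}$; this is used below.

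Next I would propagate this through the curvature, using the N--adapted curvature formulas of Theorem \ref{thasa1} (formulas (\ref{dcurv})) together with the contractions (\ref{dricci}). Because all the lower--shell data and all the $\widehat{L}$--coefficients are untouched and $\widehat{C}_{\ ^{k-1}\beta \ ^{k}c}^{\ ^{k-1}\alpha }=0$, the components $\widehat{R}_{ij},\widehat{R}_{ia},\widehat{R}_{ai}$, the full lower--shell Ricci blocks (which are deformed only recurrently, through their own factors $\ ^{1}\omega ,\ldots ,\ ^{k-1}\omega $, governed by induction on the shell number with base case the $k=0$ result of \cite{vrflg,ijgmmp}), and the mixed top--shell components $\widehat{R}_{\ ^{k-1}\alpha \ ^{k}a},\ \widehat{R}_{\ ^{k}a\ ^{k-1}\alpha }$ are left invariant; only the pure v--block $\widehat{R}_{\ ^{k}a\ ^{k}b}=\widehat{R}_{\ \ ^{k}a\ ^{k}b\ ^{k}c}^{\ ^{k}c}$ picks up a correction $\ _{\ ^{k}\omega }^{\ z}\widehat{R}_{\ ^{k}a\ ^{k}b}$. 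This produces the list (\ref{dricciac}).

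It remains to evaluate $\ _{\ ^{k}\omega }^{\ z}\widehat{R}_{\ ^{k}a\ ^{k}b}$. By (\ref{candcon}) the coefficients $\widehat{C}_{\ ^{k}b\ ^{k}c}^{\ ^{k}a}$ are the Levi--Civita (Christoffel) symbols of the fiber metric $h_{\ ^{k}a\ ^{k}b}$ of the $\ ^{k}m$--dimensional top shell, $\widehat{D}_{\ ^{k}a}$ is the associated fiber covariant derivative, and $\widehat{R}_{\ \ ^{k}a\ ^{k}b\ ^{k}c}^{\ ^{k}c}$ is literally the Ricci tensor of $(h_{\ ^{k}a\ ^{k}b})$ in dimension $\ ^{k}m$. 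Hence $h_{\ ^{k}a\ ^{k}b}\rightarrow \ ^{k}\omega ^{2}h_{\ ^{k}a\ ^{k}b}=e^{2f}h_{\ ^{k}a\ ^{k}b}$ with $f=\ln |\ ^{k}\omega |$ is an ordinary conformal rescaling, and substituting $f$ into the classical identity for the conformal change of the Ricci tensor in dimension $\ ^{k}m$,
\begin{eqnarray*}
\widetilde{R}_{\ ^{k}a\ ^{k}b} &=&R_{\ ^{k}a\ ^{k}b}-(\ ^{k}m-2)\left( \widehat{D}_{\ ^{k}a}\widehat{D}_{\ ^{k}b}f-(\widehat{D}_{\ ^{k}a}f)\,\widehat{D}_{\ ^{k}b}f\right) \\
&&-h_{\ ^{k}a\ ^{k}b}\,h^{\ ^{k}c\ ^{k}d}\left( \widehat{D}_{\ ^{k}c}\widehat{D}_{\ ^{k}d}f+(\ ^{k}m-2)(\widehat{D}_{\ ^{k}c}f)\,\widehat{D}_{\ ^{k}d}f\right) ,
\end{eqnarray*}
and using $(\ ^{k}m-2)=-(2-\ ^{k}m)$, gives exactly (\ref{riccivc}). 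No torsion corrections appear, because $\widehat{C}_{\ ^{k}b\ ^{k}c}^{\ ^{k}a}$ is symmetric in $(\ ^{k}b,\ ^{k}c)$, i.e. $\widehat{T}_{\ \ ^{k}b\ ^{k}c}^{\ ^{k}a}=0$ by Theorem \ref{th1}, and metric compatible with $h_{\ ^{k}a\ ^{k}b}$.

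I expect the main obstacle to be the bookkeeping in the second paragraph: one must check explicitly that the mixed curvature coefficients feeding $\widehat{R}_{\ ^{k-1}\alpha \ ^{k}a}$ and $\widehat{R}_{\ ^{k}a\ ^{k-1}\alpha }$ — which a priori contain the deformed $\widehat{C}_{\ ^{k}b\ ^{k}c}^{\ ^{k}a}$ and $\widehat{L}_{\ ^{k}b\ ^{k-1}\alpha }^{\ ^{k}a}$, as well as the anholonomy coefficients $\Omega _{\ ^{k-1}\alpha \ ^{k-1}\beta }^{\ ^{k}a}$ and the torsion pieces $\widehat{T}_{\ ^{k-1}\alpha \ ^{k}b}^{\ ^{k}c}$ — combine to a vanishing net correction once the constraint $\mathbf{e}_{\ ^{k-1}\beta }(\ ^{k}\omega )=0$ and the identity $\widehat{C}_{\ ^{k-1}\beta \ ^{k}c}^{\ ^{k-1}\alpha }=0$ are inserted. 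A conceptually cleaner alternative would be to view $\ _{\ ^{k}\omega }^{\ }\mathbf{g}$ as obtained from $\ ^{k}\mathbf{g}$ by a fiberwise conformal change whose factor is horizontally covariantly constant, and to invoke submersion--type curvature identities; but the direct N--adapted computation, although longer, stays entirely within the formalism already set up in the excerpt.
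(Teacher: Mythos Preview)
Your proposal is correct and follows essentially the same route as the paper: use the distortion relations (\ref{defcdc1})--(\ref{defcdc2}), kill the $\widehat{L}$--deformation via the hypothesis $\mathbf{e}_{\ ^{k-1}\beta }(\ ^{k}\omega )=0$, and then recognize that the remaining purely vertical change is the standard conformal transformation of the Ricci tensor in dimension $\ ^{k}m$, for which the paper simply cites Appendix~D of \cite{wald}. Your treatment is in fact more explicit than the paper's---you single out the vanishing of $\widehat{C}_{\ ^{k-1}\beta \ ^{k}c}^{\ ^{k-1}\alpha }$ and flag the bookkeeping needed for the mixed Ricci components, which the paper dispatches in one sentence by saying the lower--shell coordinates ``can be considered as some parameters.''
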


\begin{proof}
It follows from an explicit computation of N--adapted coefficients of (\ref%
{dricciac}) taking into account the deformation relations (\ref{defcdc1})
and (\ref{defcdc2}) when the condition $\mathbf{e}_{\ ^{k-1}\beta }(\
^{k}\omega )=0,$ which is just (\ref{confcondk}) from the Main Theorem \ref%
{mth}. Working with shell coordinates $y^{\ ^{k}a},$ the formulas for
curvature and Ricci tensors are the same as on usual (pseudo) Riemannian
spaces for the Levi--Civita connection, when coordinates of type $x^{i}$ and
$y^{\ ^{k-1}a}$ can be considered as some parameters. For ''pure vertical ''
components, we can apply usual formulas for conformal transforms, like (\ref%
{defcdc2}) and (\ref{riccivc}) outlined, for instance, in Appendix D of
monograph \cite{wald}.\ $\square $
\end{proof}

\vskip5pt

\begin{remark}
\begin{enumerate}
\item There are two reasons to consider two dimensional shells with $\
^{k}m=2:$ \ The first one is that this results in field equations of type (%
\ref{4ep2a}) which can be integrated in general form (it is a problem, at
least technically, to find exact solutions for $^{k}m>2).$ The second one is
that from (\ref{riccivc}) we get $\ _{\ ^{k}\omega }^{\ z}\widehat{R}_{\ \
^{k}b}^{^{k}a}=\delta _{\ \ ^{k}b}^{^{k}a}\ ^{k}\widehat{\square }\ln |\
^{k}\omega |,$ with $\ ^{k}\widehat{\square }\doteqdot h^{\ ^{k}c\ ^{k}d}%
\widehat{D}_{\ ^{k}c}\widehat{D}_{\ ^{k}d}$ being a shell type d'Alambert
operator defined by the canonical d--connection.

\item We can impose additionally the conditions
\begin{equation}
\ ^{k}\widehat{\square }\ln |\ ^{k}\omega |=0,  \label{confd}
\end{equation}%
or to include such terms in sources (\ref{source}), redefining the
nonholonomic distributions to have
\begin{equation}
\ ^{k}\Upsilon _{2}(u^{\ ^{k-1}\alpha },\ ^{k}v)=\ _{\omega }^{k}\Upsilon
_{2}(u^{\ \ ^{k}\alpha })-\ ^{k}\widehat{\square }\ln |\ ^{k}\omega (u^{\
^{k}\alpha })|,  \label{sourscd}
\end{equation}%
for some well defined $\ _{\omega }^{k}\Upsilon _{2}(u^{\ \ ^{k}\alpha }) $
when formulas of type (\ref{aux3f}) can be computed. The conditions (\ref%
{confd}) or (\ref{sourscd}) can be selected also by corresponding
integration functions, for instance, in (\ref{p1}), (\ref{w1}) and/or (\ref%
{n1}) and/or their higher shell analogs.

\item Solutions with $^{k}m>2$ can be with different topologies and
generalized nonholonomic conformal symmetries. Locally such constructions
may be performed in a simplest way by considering formulas only with $%
^{k}m=2 $ by increasing the number of ''formal'' shells.
\end{enumerate}
\end{remark}

As a result, we get the proof of

\begin{lemma}
\label{lm1}Any metric parametrized in the form (\ref{genansc}) with
coefficients depending on all variables on a (pseudo) Riemannian manifold $\
^{k}\mathbf{V}$ ($\dim $ $\ ^{k}\mathbf{V}=3,$ or $2,+2k;\ $\ with $%
k=0,1,2,...$ two dimensional shells) defines a ''non--Killing'' solution of
the Einstein equations for the canonical d--connection $\ ^{k}\widehat{%
\mathbf{D}}$ if the coefficients are given by data (\ref{deinst1}) as
solutions with Killing symmetries of (\ref{4ep1a})--(\ref{4ep4a}), when the
parameters of nonholonomic conformal deformations $^{k}\omega (u^{\
^{k}\alpha })$ are chosen as generating functions satisfying the conditions $%
\mathbf{e}_{\ ^{k-1}\beta }(\ ^{k}\omega )=0$ and, for instance, $\ ^{k}%
\widehat{\square }\ln |\ ^{k}\omega |=0.$ Imposing additional restrictions
on integration functions as in Corollary \ref{cor1}, we get general
solutions for the Levi--Civita connection $\ ^{k}\mathbf{\nabla .}$
\end{lemma}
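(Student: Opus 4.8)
The plan is to obtain the non--Killing solution as a nonholonomic conformal deformation of the Killing one produced by Theorem \ref{th3}, reading off its effect on the field equations from the preceding Proposition and Remark. First I would fix, for $k=0,1,2,\dots$, coefficients $g_{\widehat i},h_{\ ^{k}a},w_{\ ^{k-1}\beta},n_{\ ^{k-1}\beta}$ solving the system (\ref{4ep1a})--(\ref{4ep4a}) with the prescribed data; by Theorem \ref{th3} this gives a metric $\ ^{k}\mathbf{g}$ of type (\ref{gendm}) (all $\ ^{k}\omega^{2}=1$) satisfying $\widehat{\mathbf{R}}_{\ \ \ ^{k}\beta}^{\ ^{k}\alpha}=\Upsilon_{\ \ \ ^{k}\beta}^{\ ^{k}\alpha}$. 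Then I would switch on the multiples $\ ^{k}\omega^{2}(u^{\ ^{k}\alpha})$, passing to the ansatz (\ref{genansc}), and track how the canonical d--connection and its Ricci tensor respond.

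The core step is the deformed Ricci computation. Using (\ref{defcdc1})--(\ref{defcdc2}), the requirement $\mathbf{e}_{\ ^{k-1}\beta}(\ ^{k}\omega)=0$ --- which is exactly the condition (\ref{confcondk}) of the Main Theorem --- annihilates the extra horizontal--vertical terms $\ _{\ ^{k}\omega}^{\ z}\widehat{L}_{\ ^{k}b\ ^{k-1}\alpha}^{\ ^{k}a}$, so the only surviving deformation of $\widehat{\mathbf{\Gamma}}$ is the purely vertical $\ _{\ ^{k}\omega}^{\ z}\widehat{C}_{\ ^{k}b\ ^{k}c}^{\ ^{k}a}$. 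By the Proposition this leaves every N--adapted Ricci coefficient of (\ref{driccia}) unchanged except the ''pure vertical'' $\widehat{R}_{\ ^{k}a\ ^{k}b}$, which acquires the term $\ _{\ ^{k}\omega}^{\ z}\widehat{R}_{\ ^{k}a\ ^{k}b}$ of (\ref{riccivc}); in particular the off--diagonal equations $\widehat{R}_{4i}=0$, $\widehat{R}_{5i}=0$ and their shell analogs still hold. Since each shell has $\ ^{k}m=2$, (\ref{riccivc}) collapses to $\ _{\ ^{k}\omega}^{\ z}\widehat{R}_{\ \ ^{k}b}^{^{k}a}=\delta_{\ \ ^{k}b}^{^{k}a}\ ^{k}\widehat{\square}\ln|\ ^{k}\omega|$, so imposing the wave--type condition (\ref{confd}) --- or, equivalently, absorbing $\ ^{k}\widehat{\square}\ln|\ ^{k}\omega|$ into the source via (\ref{sourscd}) and recomputing $\ ^{k}\varsigma$ from (\ref{aux3f}) --- keeps the diagonal system $\widehat{\mathbf{R}}_{\ \ \ ^{k}\beta}^{\ ^{k}\alpha}=\Upsilon_{\ \ \ ^{k}\beta}^{\ ^{k}\alpha}$ intact. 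Carrying this out recurrently in $k$, with the lower--shell coordinates entering as parameters, gives the first assertion: (\ref{genansc}) solves (\ref{deinst1}).

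For the Levi--Civita reduction I would invoke Corollary \ref{cor1}: imposing the first--order constraints (\ref{lccond}) forces the distortion tensor (\ref{deft}) and the canonical torsion (\ref{dtor}) to vanish on the integral variety, so that $\Gamma_{\ \ ^{k}\alpha\ ^{k}\beta}^{\ ^{k}\gamma}=\widehat{\mathbf{\Gamma}}_{\ \ ^{k}\alpha\ ^{k}\beta}^{\ ^{k}\gamma}$ and the (conformally deformed) metric becomes a genuine solution of (\ref{einst1}) for $\ ^{k}\nabla$.

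The step I expect to be the main obstacle is checking that, on every shell simultaneously, the conformal constraints $\mathbf{e}_{\ ^{k-1}\beta}(\ ^{k}\omega)=0$ and $\ ^{k}\widehat{\square}\ln|\ ^{k}\omega|=0$ are \emph{compatible} with the Levi--Civita system (\ref{lccond}) --- i.e. that the freedom left in the integration functions entering (\ref{p1}), (\ref{w1}), (\ref{n1}) and their higher--order analogs is genuinely large enough to carry all of these first--order restrictions at once, without fixing $\ ^{k}\omega$ over--determining the $w$'s and $n$'s of the same shell. A secondary, bookkeeping difficulty is making the recurrence rigorous: verifying that after deforming shells $1,\dots,k-1$ the equations for shell $k$ remain exactly of the form (\ref{4ep2a}) with all lower--shell variables frozen as parameters, so that the Proposition applies verbatim at each level.
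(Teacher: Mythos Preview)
Your proposal is correct and follows essentially the same route as the paper: the paper's own ``proof'' of Lemma~\ref{lm1} is the single line ``As a result, we get the proof of'', meaning the lemma is read off directly from the preceding Proposition (the deformed Ricci tensor (\ref{dricciac})--(\ref{riccivc}) under $\mathbf{e}_{\ ^{k-1}\beta}(\ ^{k}\omega)=0$) and the Remark (the $\ ^{k}m=2$ collapse to $\delta_{\ \ ^{k}b}^{^{k}a}\ ^{k}\widehat{\square}\ln|\ ^{k}\omega|$ and the options (\ref{confd}) or (\ref{sourscd})), together with Corollary~\ref{cor1} for the Levi--Civita reduction --- exactly the chain you assemble. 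The two obstacles you flag (compatibility of the $\omega$--constraints with (\ref{lccond}), and rigor of the shell recurrence) are not addressed in the paper either; they are tacitly absorbed into the phrase ``chosen as generating functions'' and the recurrent shell formalism, so your proposal is already at the level of detail the paper provides.
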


Finally, in this section we formulate:

\begin{conclusion}
\begin{enumerate}
\item Summarizing the Theorems \ref{th1} -- \ref{th3} and Lemma \ref{lm1},
we prove the Main Result stated in Theorem \ref{mth}.

\item The general solutions defined by the conditions of Theorem \ref{mth}
(and related results) can be extended to include contributions of an
arbitrary number of commutative and noncommutative parameteres. This is
possible following the constructions with Killing symmetries, in our case
for metrics (\ref{gendm}) provided in Ref. \cite{ijgmmp}, which can be
similarly reconsidered with higher order shells.
\end{enumerate}
\end{conclusion}

\section{Summary and Discussion}

\label{s5} This work was primarily motivated by the question if the Einstein
equations can be integrated in very general forms, for generic off--diagonal
metrics depending on all possible variables, in arbitrary dimensions. To the
best of our knowledge, such a problem has not yet been addressed in
mathematical and physical literature being known the high complexity of
related systems of nonlinear partial differential equations. This is in
spite of the fact that there were elaborated a number of analytic and
numerical methods of constructing exact and approximate solutions and that
various types of such solutions seem to be of crucial physical importance in
modern astrophysics and cosmology. Here we note that the bulk of former
derived solutions are for diagonalizabe metrics (by coordinate transforms),
depending on one and/or two (in some exceptional cases, on three) variables,
with compactified dimensions, imposed symmetries, boundary conditions etc.

In a series of works, see reviews of results in Refs. \cite{ijgmmp,vncg,vsgg}%
, one of the main our goals was to formulate a geometric method which would
allow us to construct exact solutions of gravitational field equations. We
applied the formalism of nonholonomic distributions with generating and
integration functions, when some of them are subjected to additional
conditions/ constraints (written as certain types of first order partial
equations, algebraic relations, symmetry conditions etc). That allowed us to
elaborate a general scheme for deriving exact solutions with one Killing
vector symmetry and various types of parametric dependencies. Finally, the
so--called anholonomic deformation method was developed for general
''non--Killing'' solutions in paper \cite{vgseg}.

Following the anholonomic deformation method, we define some 'more
convenient" holonomic and nonholonomic variables (frames coefficients and
coordinates), which for certain types well defined conditions transform the
Einstein equations into exactly integrable systems of equations. The key
idea is to use additionally some auxiliary linear connections
correspondingly adapted to nonholonmic distributions. Surprisingly, in our
approach, it was possible to reformulate (and, in general, to modify) the
Einstein equations in such forms, when general integral varieties can be
constructed. Subjecting the coefficients of such way defined solutions to
additional constraints, we can determine some integral subvarieties for
standard gravity theories and generalizations. Here we note that our
auxiliary connection (the so--called, canonical distinguished connection, in
brief, d--connection) is also metric compatible and uniquely defined by the
metric coefficients. It contains a nonholonomically induced torsion but such
a geometric object is completely different from that, for instance, in
Einstein--Cartan/ gauge / string theory. In our approach, we do not need any
additional field equations because we work with torsion coefficients induced
by certain off--diagonal coefficients of metric. All geometric constructions
can be equivalently performed using the Levi--Civita connection or,
alternatively, the canonical d--connection.

Of course, our findings should be considered only in a line of qualitative
understanding of the concept of general exact solutions in Einstein and high
dimensional gravity. For such generic nonlinear systems, it is not possible
to formulate any general uniqueness and completeness criteria for solution
if we do not introduce any additional suppositions on classes of generating
functions, symmetries, horizons, singularities, asymptotic conditions etc.
Only in some more special/ restricted cases, we can provide certain physical
meaning for such general classes of solutions; to put, for instance, the
Cauchy problem, construct some evolution models, determine symmetries of
interactions etc. Our constructions are general ones because "almost" any
solution in gravity theories can be parametrization in such a form at least
locally even very different classes of metrics and connections can be stated
globally for different topologies, boundary conditions, with various types
of horizons and singularities etc. It is not our aim to perform such studies
in this article.

Finally, we emphasize that the bulk of exact solutions in gravity theories
(in Einstein gravity and various supersymmetric/noncommutative sting, brane,
gauge, Kaluza--Klein, Lagrange--Finsler, generalizations etc) can be
represented in a form similar to (\ref{ansgensol}). In this paper, we do not
analyze possible explicit symmetries and physical properties of such
solutions. We consider such problems in our recent papers \cite%
{vrfs1,vrfs2,vrfs3,vrfs4,vrfs5,vkvnsst,vfbh,vfbhnc,avfbr} and plan to
provide further developments and applications in our future works.

\vskip5pt

\textbf{Acknowledgement: } Author is grateful to M. Anastasiei for
discussions and support.

\appendix

\setcounter{equation}{0} \renewcommand{\theequation}
{A.\arabic{equation}} \setcounter{subsection}{0}
\renewcommand{\thesubsection}
{A.\arabic{subsection}}

\section{Coefficients of N--adapted Curvature}

\label{asa}In this section, we outline some formulas which play an important
role in finding systems of partial differential equations which are
equivalent to the Einstein equations with nonholonomic variables of
arbitrary dimensions, see details in Refs. \cite{ijgmmp,vncg,vsgg,vrflg}.
The formulas for coefficients of curvature $\widehat{\mathcal{R}}$ of the
canonical d--connection $\widehat{\mathbf{D}}$ are written with respect to
N--adapted frames (\ref{dder}) and (\ref{ddif}).

\begin{theorem}
\label{thasa1}The curvature $\widehat{\mathcal{R}}$ (\ref{curv}) of $\ $the
canonical d--connection $\widehat{\mathbf{D}}$ computed with respect to
N--adapted frames (\ref{dder}) and (\ref{ddif}) is characterized by
coefficients
\begin{eqnarray}
\widehat{R}_{\ hjk}^{i} &=&e_{k}\widehat{L}_{\ hj}^{i}-e_{j}\widehat{L}_{\
hk}^{i}+\widehat{L}_{\ hj}^{m}\widehat{L}_{\ mk}^{i}-\widehat{L}_{\ hk}^{m}%
\widehat{L}_{\ mj}^{i}-\widehat{C}_{\ ha}^{i}\Omega _{\ kj}^{a},  \notag \\
\widehat{R}_{\ bjk}^{a} &=&e_{k}\widehat{L}_{\ bj}^{a}-e_{j}\widehat{L}_{\
bk}^{a}+\widehat{L}_{\ bj}^{c}\widehat{L}_{\ ck}^{a}-\widehat{L}_{\ bk}^{c}%
\widehat{L}_{\ cj}^{a}-\widehat{C}_{\ bc}^{a}\Omega _{\ kj}^{c},  \notag \\
\widehat{R}_{\ jka}^{i} &=&e_{a}\widehat{L}_{\ jk}^{i}-\widehat{D}_{k}%
\widehat{C}_{\ ja}^{i}+\widehat{C}_{\ jb}^{i}\widehat{T}_{\ ka}^{b},\
\label{dcurv} \\
\widehat{R}_{\ bka}^{c} &=&e_{a}\widehat{L}_{\ bk}^{c}-D_{k}\widehat{C}_{\
ba}^{c}+\widehat{C}_{\ bd}^{c}\widehat{T}_{\ ka}^{c},  \notag \\
\widehat{R}_{\ jbc}^{i} &=&e_{c}\widehat{C}_{\ jb}^{i}-e_{b}\widehat{C}_{\
jc}^{i}+\widehat{C}_{\ jb}^{h}\widehat{C}_{\ hc}^{i}-\widehat{C}_{\ jc}^{h}%
\widehat{C}_{\ hb}^{i},  \notag \\
\widehat{R}_{\ bcd}^{a} &=&e_{d}\widehat{C}_{\ bc}^{a}-e_{c}\widehat{C}_{\
bd}^{a}+\widehat{C}_{\ bc}^{e}\widehat{C}_{\ ed}^{a}-\widehat{C}_{\ bd}^{e}%
\widehat{C}_{\ ec}^{a};  \notag
\end{eqnarray}%
\begin{eqnarray*}
\widehat{R}_{\ \tau \beta \gamma }^{\alpha } &=&e_{\gamma }\widehat{L}_{\
\tau \beta }^{\alpha }-e_{\beta }\widehat{L}_{\ \tau \gamma }^{\alpha }+%
\widehat{L}_{\ \tau \beta }^{\mu }\widehat{L}_{\ \mu \gamma }^{\alpha }-%
\widehat{L}_{\ \tau \gamma }^{\mu }\widehat{L}_{\ \mu \beta }^{\alpha }-%
\widehat{C}_{\ \tau \ ^{1}a}^{\alpha }\Omega _{\ \gamma \beta }^{^{1}a}, \\
\widehat{R}_{\ ^{1}b\beta \gamma }^{^{1}a} &=&e_{\gamma }\widehat{L}_{\
^{1}b\beta }^{^{1}a}-e_{\beta }\widehat{L}_{\ ^{1}b\gamma }^{^{1}a}+\widehat{%
L}_{\ ^{1}b\beta }^{^{1}c}\widehat{L}_{\ ^{1}c\gamma }^{^{1}a}-\widehat{L}%
_{\ ^{1}b\gamma }^{^{1}c}\widehat{L}_{\ ^{1}c\beta }^{^{1}a}-\widehat{C}_{\
^{1}b\ ^{1}c}^{^{1}a}\Omega _{\ \gamma \beta }^{^{1}c}, \\
\widehat{R}_{\ \beta \gamma \ ^{1}a}^{\alpha } &=&e_{^{1}a}\widehat{L}_{\
\beta \gamma }^{\alpha }-\widehat{D}_{\gamma }\widehat{C}_{\ \beta \
^{1}a}^{\alpha }+\widehat{C}_{\ \beta \ ^{1}b}^{\alpha }\widehat{T}_{\
\gamma \ ^{1}a}^{\ ^{1}b},\  \\
\widehat{R}_{\ \ ^{1}b\gamma \ ^{1}a}^{\ ^{1}c} &=&e_{\ ^{1}a}\widehat{L}_{\
\ ^{1}b\gamma }^{\ ^{1}c}-\widehat{D}_{\gamma }\widehat{C}_{\ \ ^{1}b\
^{1}a}^{\ ^{1}c}+\widehat{C}_{\ ^{1}\ b\ ^{1}d}^{\ ^{1}c}\widehat{T}_{\
\gamma \ ^{1}a}^{\ ^{1}c},
\end{eqnarray*}%
\begin{eqnarray*}
\widehat{R}_{\ \beta \ ^{1}b\ ^{1}c}^{\alpha } &=&e_{\ ^{1}c}\widehat{C}_{\
\beta \ ^{1}b}^{\alpha }-e_{\ ^{1}b}\widehat{C}_{\ \beta \ ^{1}c}^{\alpha }+%
\widehat{C}_{\ \beta \ ^{1}b}^{\mu }\widehat{C}_{\ \mu \ ^{1}c}^{\alpha }\ -%
\widehat{C}_{\ \beta \ ^{1}c}^{\mu }\widehat{C}_{\ \mu \ ^{1}b}^{\alpha }, \\
\widehat{R}_{\ \ ^{1}b\ ^{1}c\ ^{1}d}^{\ ^{1}a} &=&e_{\ ^{1}d}\widehat{C}_{\
\ ^{1}b\ ^{1}c}^{\ ^{1}a}-e_{\ ^{1}c}\widehat{C}_{\ \ ^{1}b\ ^{1}d}^{\ ^{1}a}
\\
&&+\widehat{C}_{\ \ ^{1}b\ ^{1}c}^{\ ^{1}e}\widehat{C}_{\ \ ^{1}e\ ^{1}d}^{\
^{1}a}-\widehat{C}_{\ \ ^{1}b\ ^{1}d}^{\ ^{1}e}\widehat{C}_{\ \ ^{1}e\
^{1}c}^{\ ^{1}a};
\end{eqnarray*}%
\begin{eqnarray*}
\widehat{R}_{\ \ ^{1}\tau \ ^{1}\beta \ ^{1}\gamma }^{\ ^{1}\alpha } &=&e_{\
^{1}\gamma }\widehat{L}_{\ \ ^{1}\tau \ ^{1}\beta }^{\ ^{1}\alpha }-e_{\
^{1}\beta }\widehat{L}_{\ \ ^{1}\tau \ ^{1}\gamma }^{\ ^{1}\alpha }+\widehat{%
L}_{\ \ ^{1}\tau \ ^{1}\beta }^{\ ^{1}\mu }\widehat{L}_{\ \ ^{1}\mu \
^{1}\gamma }^{\ ^{1}\alpha } \\
&&-\widehat{L}_{\ \ ^{1}\tau \ ^{1}\gamma }^{\ ^{1}\mu }\widehat{L}_{\ \
^{1}\mu \ ^{1}\beta }^{\ ^{1}\alpha }-\widehat{C}_{\ \ ^{1}\tau \ ^{2}a}^{\
^{1}\alpha }\Omega _{\ \ ^{1}\gamma \ ^{1}\beta }^{^{2}a}, \\
\widehat{R}_{\ ^{2}b\ ^{1}\beta \ ^{1}\gamma }^{^{2}a} &=&e_{\ ^{1}\gamma }%
\widehat{L}_{\ ^{2}b\ ^{1}\beta }^{^{2}a}-e_{\ ^{1}\beta }\widehat{L}_{\
^{2}b\ ^{1}\gamma }^{^{2}a} \\
&&+\widehat{L}_{\ ^{2}b\ ^{1}\beta }^{^{2}c}\widehat{L}_{\ ^{2}c\ ^{1}\gamma
}^{^{2}a}-\widehat{L}_{\ ^{2}b\ ^{1}\gamma }^{^{2}c}\widehat{L}_{\ ^{2}c\
^{1}\beta }^{^{2}a}-\widehat{C}_{\ ^{2}b\ ^{2}c}^{^{2}a}\Omega _{\ \
^{1}\gamma \ ^{1}\beta }^{^{2}c},
\end{eqnarray*}%
\begin{eqnarray*}
\widehat{R}_{\ \ ^{1}\beta \ ^{1}\gamma \ ^{2}a}^{\ ^{1}\alpha } &=&e_{^{2}a}%
\widehat{L}_{\ \ ^{1}\beta \ ^{1}\gamma }^{\ ^{1}\alpha }-\widehat{D}_{\
^{1}\gamma }\widehat{C}_{\ \ ^{1}\beta \ ^{2}a}^{\ ^{1}\alpha }+\widehat{C}%
_{\ \ ^{1}\beta \ ^{2}b}^{\ ^{1}\alpha }\widehat{T}_{\ \ ^{1}\gamma \
^{2}a}^{\ ^{2}b},\  \\
\widehat{R}_{\ \ ^{2}b\ ^{1}\gamma \ ^{2}a}^{\ ^{2}c} &=&e_{\ ^{2}a}\widehat{%
L}_{\ \ ^{2}b\ ^{1}\gamma }^{\ ^{2}c}-\widehat{D}_{\ ^{1}\gamma }\widehat{C}%
_{\ \ ^{2}b\ ^{2}a}^{\ ^{2}c}+\widehat{C}_{\ ^{2}\ b\ ^{2}d}^{\ ^{2}c}%
\widehat{T}_{\ \ ^{1}\gamma \ ^{2}a}^{\ ^{2}c},
\end{eqnarray*}%
\begin{eqnarray*}
\widehat{R}_{\ \ ^{1}\beta \ ^{2}b\ ^{2}c}^{\ ^{1}\alpha } &=&e_{\ ^{2}c}%
\widehat{C}_{\ \ ^{1}\beta \ ^{2}b}^{\alpha }-e_{\ ^{2}b}\widehat{C}_{\ \
^{1}\beta \ ^{2}c}^{\ ^{1}\alpha } \\
&&+\widehat{C}_{\ \ ^{1}\beta \ ^{2}b}^{\ ^{1}\mu }\widehat{C}_{\ \ ^{1}\mu
\ ^{2}c}^{\ ^{1}\alpha }\ -\widehat{C}_{\ \ ^{1}\beta \ ^{2}c}^{\ ^{1}\mu }%
\widehat{C}_{\ \ ^{1}\mu \ ^{2}b}^{\ ^{1}\alpha }, \\
\widehat{R}_{\ \ ^{2}b\ ^{2}c\ ^{2}d}^{\ ^{2}a} &=&e_{\ ^{2}d}\widehat{C}_{\
\ ^{2}b\ ^{2}c}^{\ ^{2}a}-e_{\ ^{2}c}\widehat{C}_{\ \ ^{2}b\ ^{2}d}^{\ ^{2}a}
\\
&&+\widehat{C}_{\ \ ^{2}b\ ^{2}c}^{\ ^{2}e}\widehat{C}_{\ \ ^{2}e\ ^{2}d}^{\
^{2}a}-\widehat{C}_{\ \ ^{2}b\ ^{2}d}^{\ ^{2}e}\widehat{C}_{\ \ ^{2}e\
^{2}c}^{\ ^{2}a};\\
&& .......... \\
\widehat{R}_{\ \ ^{k-1}\tau \ ^{k-1}\beta \ ^{k-1}\gamma }^{\ ^{k-1}\alpha }
&=&e_{\ ^{k-1}\gamma }\widehat{L}_{\ \ ^{k-1}\tau \ ^{k-1}\beta }^{\
^{k-1}\alpha }-e_{\ ^{k-1}\beta }\widehat{L}_{\ \ ^{k-1}\tau \ ^{k-1}\gamma
}^{\ ^{k-1}\alpha }+ \\
&&\widehat{L}_{\ \ ^{k-1}\tau \ ^{k-1}\beta }^{\ ^{k-1}\mu }\widehat{L}_{\
\ ^{k-1}\mu \ ^{k-1}\gamma }^{\ ^{k-1}\alpha }-\widehat{L}_{\ \ ^{k-1}\tau \
^{k-1}\gamma }^{\ ^{k-1}\mu }\widehat{L}_{\ \ ^{k-1}\mu \ ^{k-1}\beta }^{\
^{k-1}\alpha } \\
&&-\widehat{C}_{\ \ ^{k-1}\tau \ ^{k}a}^{\ ^{k-1}\alpha }\Omega _{\ \
^{k-1}\gamma \ ^{k-1}\beta }^{^{k}a}, \\
\widehat{R}_{\ ^{k}b\ ^{k-1}\beta \ ^{k-1}\gamma }^{^{k}a} &=&e_{\
^{k-1}\gamma }\widehat{L}_{\ ^{k-1}b\ ^{k-1}\beta }^{^{k-k}a}-e_{\
^{k-1}\beta }\widehat{L}_{\ ^{k}b\ ^{k-1}\gamma }^{^{k}a} \\
&&+\widehat{L}_{\ ^{k}b\ ^{k-1}\beta }^{^{k}c}\widehat{L}_{\ ^{k}c\
^{k-1}\gamma }^{^{k}a}-\widehat{L}_{\ ^{k}b\ ^{k-1}\gamma }^{^{k}c}\widehat{L%
}_{\ ^{k}c\ ^{k-1}\beta }^{^{k}a} \\
&&-\widehat{C}_{\ ^{k}b\ ^{k}c}^{^{k}a}\Omega _{\ \ ^{k-1}\gamma \
^{k-1}\beta }^{^{k}c},\\
\widehat{R}_{\ \ ^{k-1}\beta \ ^{k-1}\gamma \ ^{k}a}^{\ ^{k-1}\alpha }
&=&e_{^{k}a}\widehat{L}_{\ \ ^{k-1}\beta \ ^{k-1}\gamma }^{\ ^{k-1}\alpha }-%
\widehat{D}_{\ ^{k-1}\gamma }\widehat{C}_{\ \ ^{k-1}\beta \ ^{k}a}^{\
^{k-1}\alpha } \\
&&+\widehat{C}_{\ \ ^{k-1}\beta \ ^{k}b}^{\ ^{k-1}\alpha }\widehat{T}_{\ \
^{k-1}\gamma \ ^{k}a}^{\ ^{k}b}, \\
\widehat{R}_{\ \ ^{k}b\ ^{k-1}\gamma \ ^{k}a}^{\ ^{k}c} &=&e_{\ ^{k}a}%
\widehat{L}_{\ \ ^{k}b\ ^{k-1}\gamma }^{\ ^{k}c}-\widehat{D}_{\ ^{k-1}\gamma
}\widehat{C}_{\ \ ^{k}b\ ^{k}a}^{\ ^{k}c}+\widehat{C}_{\ ^{k}\ b\ ^{k}d}^{\
^{k}c}\widehat{T}_{\ \ ^{k-1}\gamma \ ^{k}a}^{\ ^{k}c},\\
\widehat{R}_{\ \ ^{k-1}\beta \ ^{k}b\ ^{k}c}^{\ ^{k-1}\alpha } &=&e_{\ ^{k}c}%
\widehat{C}_{\ \ ^{k-1}\beta \ ^{k}b}^{\alpha }-e_{\ ^{k}b}\widehat{C}_{\ \
^{k-1}\beta \ ^{k}c}^{\ ^{k-1}\alpha } \\
&&+\widehat{C}_{\ \ ^{k-1}\beta \ ^{k}b}^{\ ^{k-1}\mu }\widehat{C}_{\ \
^{k-1}\mu \ ^{k}c}^{\ ^{k-1}\alpha }\ -\widehat{C}_{\ \ ^{k-1}\beta \
^{k}c}^{\ ^{k-1}\mu }\widehat{C}_{\ \ ^{k-1}\mu \ ^{k}b}^{\ ^{k-1}\alpha },
\\
\widehat{R}_{\ \ ^{k}b\ ^{k}c\ ^{k}d}^{\ ^{k}a} &=&e_{\ ^{k}d}\widehat{C}_{\
\ ^{k}b\ ^{k}c}^{\ ^{k}a}-e_{\ ^{k}c}\widehat{C}_{\ \ ^{k}b\ ^{k}d}^{\ ^{k}a}
\\
&&+\widehat{C}_{\ \ ^{k}b\ ^{k}c}^{\ ^{k}e}\widehat{C}_{\ \ ^{k}e\ ^{k}d}^{\
^{k}a}-\widehat{C}_{\ \ ^{k}b\ ^{k}d}^{\ ^{k}e}\widehat{C}_{\ \ ^{k}e\
^{k}c}^{\ ^{k}a}.
\end{eqnarray*}
\end{theorem}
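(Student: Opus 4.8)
The plan is to compute the curvature $2$--form $\widehat{\mathcal{R}}=d\widehat{\mathbf{\Gamma}}-\widehat{\mathbf{\Gamma}}\wedge\widehat{\mathbf{\Gamma}}$ directly from the $1$--form (\ref{cdcf}) and then to read off the N--adapted components by projecting onto the $h$-- and $\ ^{s}v$--subspaces shell by shell. Writing $\widehat{\mathbf{\Gamma}}_{\ \ ^{k}\beta}^{\ ^{k}\alpha}=\widehat{\mathbf{\Gamma}}_{\ \ ^{k}\beta\ ^{k}\gamma}^{\ ^{k}\alpha}\mathbf{e}^{\ ^{k}\gamma}$, the exterior derivative splits into a term $(\mathbf{e}_{\ ^{k}\tau}\widehat{\mathbf{\Gamma}}_{\ \ ^{k}\beta\ ^{k}\gamma}^{\ ^{k}\alpha})\,\mathbf{e}^{\ ^{k}\tau}\wedge\mathbf{e}^{\ ^{k}\gamma}$ coming from differentiating the coefficients and a term $\widehat{\mathbf{\Gamma}}_{\ \ ^{k}\beta\ ^{k}\gamma}^{\ ^{k}\alpha}\,d\mathbf{e}^{\ ^{k}\gamma}$; the second one I would evaluate from the structure equations dual to the nonholonomy relations (\ref{anhrel}), so that $d\mathbf{e}^{\ ^{k}a}$ contributes precisely the N--connection curvature terms $\Omega_{\ ^{k-1}\alpha\ ^{k-1}\beta}^{\ ^{k}a}$ of (\ref{ncurv}). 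Antisymmetrizing in $\tau,\gamma$ and collecting against $-\widehat{\mathbf{\Gamma}}_{\ \ ^{k}\beta}^{\ ^{k}\gamma}\wedge\widehat{\mathbf{\Gamma}}_{\ \ ^{k}\gamma}^{\ ^{k}\alpha}$ should reproduce, in the purely horizontal block, the Riemann--type expression $e_{k}\widehat{L}_{\ hj}^{i}-e_{j}\widehat{L}_{\ hk}^{i}+\widehat{L}_{\ hj}^{m}\widehat{L}_{\ mk}^{i}-\widehat{L}_{\ hk}^{m}\widehat{L}_{\ mj}^{i}$ together with the extra anholonomy correction $-\widehat{C}_{\ ha}^{i}\Omega_{\ kj}^{a}$, and similarly for $\widehat{R}_{\ bjk}^{a},\widehat{R}_{\ jka}^{i},\widehat{R}_{\ bka}^{c},\widehat{R}_{\ jbc}^{i},\widehat{R}_{\ bcd}^{a}$; the mixed blocks will involve the canonical torsion coefficients $\widehat{T}_{\ ka}^{b}$ of (\ref{dtor}) through the $\widehat{D}_{k}\widehat{C}$--type terms.

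Next I would exploit the nested ``shell--by--shell'' structure to avoid repeating the calculation on each level. Since the metric and the canonical d--connection on $\ ^{s}\mathbf{V}$ restrict compatibly to $\ ^{s-1}\mathbf{V}$ (the coefficients $\widehat{L}_{\ ^{s}\beta\ ^{s}\gamma}^{\ ^{s}\alpha}$ coincide with $\widehat{\mathbf{\Gamma}}_{\ ^{s-1}\beta\ ^{s-1}\gamma}^{\ ^{s-1}\alpha}$, as recorded just before (\ref{candcon})), the computation on the $s$--th shell is formally identical to the $k=0$ case under the substitutions $i\to\ ^{s-1}\alpha$, $a\to\ ^{s}a$, $\partial_{a}\to e_{\ ^{s}a}$, $\Omega_{\ kj}^{a}\to\Omega_{\ ^{s-1}\gamma\ ^{s-1}\beta}^{\ ^{s}a}$, and $\widehat{T}_{\ ka}^{b}\to\widehat{T}_{\ ^{s-1}\gamma\ ^{s}a}^{\ ^{s}b}$. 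An induction on $s=0,1,\dots,k$ then produces each successive block of (\ref{dcurv}) from the previous one, which is exactly why the statement can be written in the recurrent form of (\ref{curvnc}).

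The main obstacle, and the only place needing real care, is the nonholonomy: because $[\mathbf{e}_{\ ^{k}\alpha},\mathbf{e}_{\ ^{k}\beta}]=w_{\ ^{k}\alpha\ ^{k}\beta}^{\ ^{k}\gamma}\mathbf{e}_{\ ^{k}\gamma}\neq0$, the N--adapted frame derivatives do not commute and $d\mathbf{e}^{\ ^{k}\gamma}$ is not closed, so the usual $d\widehat{\mathbf{\Gamma}}-\widehat{\mathbf{\Gamma}}\wedge\widehat{\mathbf{\Gamma}}$ expansion has to be carried out keeping every $w$--term; it is precisely these terms that generate the $\Omega$--corrections and determine which torsion coefficients appear in the mixed $h$--$v$ blocks. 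Everything else is bookkeeping: expand, antisymmetrize the $\mathbf{e}^{\ ^{k}\tau}\wedge\mathbf{e}^{\ ^{k}\gamma}$ products, match indices to the $h/\,^{s}v$ decomposition, and check that the result is manifestly the list displayed. I would carry out the $k=0$ computation in full (or refer to the corresponding derivations in \cite{ijgmmp,vncg,vsgg,vrflg}) and then close the argument by the shell recursion.
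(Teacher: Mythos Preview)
Your proposal is correct and takes essentially the same approach as the paper: the paper's own proof is a one-line reference stating that the result ``follows from `shell by shell computations' as in Refs.\ \cite{ijgmmp,vncg,vsgg,vrflg,vmth1,vmth2,vmth3,vnp1,vsp1,vv1,vst,vvc,vd1,vd2}'', which is precisely your plan of carrying out the $k=0$ calculation of $d\widehat{\mathbf{\Gamma}}-\widehat{\mathbf{\Gamma}}\wedge\widehat{\mathbf{\Gamma}}$ in N--adapted frames (or citing it) and then lifting to higher shells by the index substitution/recursion you describe. Your outline in fact supplies more detail than the paper does, particularly the identification of the $\Omega$-- and $\widehat{T}$--terms as arising from the nonholonomy contributions to $d\mathbf{e}^{\ ^{k}\gamma}$.
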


\begin{proof}
It follows from ''shell by shell computations'' as in Refs. \cite%
{ijgmmp,vncg,vsgg,vrflg,vmth1,vmth2,vmth3,vnp1,vsp1,vv1,vst,vvc,vd1,vd2}.$\
\square $
\end{proof}

\setcounter{equation}{0} \renewcommand{\theequation}
{B.\arabic{equation}} \setcounter{subsection}{0}
\renewcommand{\thesubsection}
{B.\arabic{subsection}}

\section{Proof of Theorem \ref{th2}}

\label{proofth2}Such a proof can be obtained by straightforward computations
as in Parts I and II of monograph \cite{vsgg}, containing all developments
from Refs. \cite{vmth1,vmth2,vmth3}, see also summaries and some important
details and discussions in Refs. \cite{ijgmmp,vncg}. In this section, we
generalize some formulas by considering ''shell'' labels for indices, when $%
k=0,1,2,...$ using data (\ref{data1}) for a metric $\ ^{k}\mathbf{g=\mathbf{%
\{\mathbf{g}}}_{\ ^{k}\beta \ ^{k}\gamma }\mathbf{\mathbf{\}}}$ (\ref{gendm}%
).

We can perform a N--adapted differential calculus on a N--anholonomic
manifold if instead of partial derivatives $\partial _{\ ^{k}\alpha
}=\partial /\partial u^{\ ^{k}\alpha }$ there are considered operators (\ref%
{dder}) parametrized in the form
 $\mathbf{e}\ _{^{k-1}\alpha }=\partial _{\ ^{k-1}\alpha }-N_{\ ^{k-1}\alpha
}^{\ ^{k}a}\partial _{\ ^{k}a}=\partial _{\ ^{k-1}\alpha }-w_{\ ^{k-1}\alpha
}^{\ }\partial _{\ ^{k}v}-n_{\ ^{k-1}\alpha }^{\ }\partial _{\ ^{k}y},$
 for $y^{4+2k}=\ ^{k}v$ and $y^{5+2k}=\ ^{k}y.$ For instance, for data (\ref%
{data1}), the coefficients of N--connection curvature (\ref{ncurv}) are
\begin{eqnarray}
\Omega _{\ ^{k-1}\alpha \ ^{k-1}\beta }^{\ 4+2k} &=&\partial _{\
^{k-1}\alpha }w_{\ ^{k-1}\beta }^{\ }-\partial _{\ ^{k-1}\beta }w_{\
^{k-1}\alpha }^{\ }  \notag \\
&& -w_{\ ^{k-1}\alpha }^{\ }\partial _{\ ^{k}v}w_{\ ^{k-1}\beta }^{\ }+w_{\
^{k-1}\beta }^{\ }\partial _{\ ^{k}v}w_{\ ^{k-1}\alpha }^{\ };  \label{omeg}
\\
\Omega _{\ ^{k-1}\alpha \ ^{k-1}\beta }^{\ 5+2k} &=&\partial _{\
^{k-1}\alpha }n_{\ ^{k-1}\beta }^{\ }-\partial _{\ ^{k-1}\beta }n_{\
^{k-1}\alpha }^{\ }  \notag \\
&& -w_{\ ^{k-1}\alpha }^{\ }\partial _{\ ^{k}v}n_{\ ^{k-1}\beta }^{\ }+w_{\
^{k-1}\beta }^{\ }\partial _{\ ^{k}v}n_{\ ^{k-1}\alpha }^{\ }.  \notag
\end{eqnarray}

In a similar form we compute all coefficients of the canonical d--connection
(\ref{candcon}) and its Ricci and Einstein tensors.

\subsubsection*{Coefficients of the canonical d--connection}

For data (\ref{data1}), we get such nontrivial coefficients of $\widehat{%
\mathbf{\Gamma }}_{\ \ ^{k}\alpha \ ^{k}\beta }^{\ ^{k}\gamma }:$
\begin{eqnarray}
&&\widehat{L}_{22}^{2} =\frac{\partial _{2}g_{2}}{2g_{2}},\ \widehat{L}%
_{23}^{2}=\frac{\partial _{3}g_{2}}{2g_{2}},\ \widehat{L}_{33}^{2}=-\frac{%
\partial _{2}g_{3}}{2g_{2}},\ \widehat{L}_{22}^{3} =-\frac{\partial _{3}g_{2}%
}{2g_{3}},\   \label{aux1a} \\
&&\widehat{L} _{23}^{3}=\frac{\partial _{2}g_{3}}{2g_{3}},\ \widehat{L}%
_{33}^{3}=\frac{\partial _{3}g_{3}}{2g_{3}},\ \widehat{L}_{4i}^{4} =\frac{1}{%
2h_{4}}\left( \partial _{i}h_{4}-w_{i}\partial _{v}h_{4}\right);\
 \widehat{L}_{4j}^{5}=\frac{1}{2}\partial _{v}n_{j},\  \notag \\
&& \widehat{L}%
_{5j}^{5} = \frac{1}{2h_{5}}\left( \partial _{j}h_{5}-w_{j}\partial
_{v}h_{5}\right);\
\widehat{C}_{44}^{4} =\frac{\partial _{v}h_{4}}{2h_{4}},\widehat{C}%
_{55}^{4}=-\frac{\partial _{v}h_{5}}{2h_{4}},\widehat{C}_{45}^{5}=\frac{%
\partial _{v}h_{5}}{2h_{5}};  \notag\\
&&.... \notag \\
&&\widehat{L}_{4+2k\ ^{k-1}\alpha }^{4+2k} =\frac{1}{2h_{4+2k}}\left(
\partial _{\ ^{k-1}\alpha }h_{4+2k}-w_{\ ^{k-1}\alpha }\partial _{\
^{k}v}h_{4+2k}\right),\ \widehat{L}_{4+2k\ ^{k-1}\alpha }^{5+2k} = \notag \\ && \frac{1}{2}\partial _{\
^{k}v}n_{\ ^{k-1}\alpha },\
\ \widehat{L}_{5+2k\ ^{k-1}\alpha }^{5+2k} = \frac{1}{2h_{5+2k}}\left(
\partial _{\ ^{k-1}\alpha }h_{5+2k}-w_{\ ^{k-1}\alpha }\partial _{\
^{k}v}h_{5+2k}\right); \notag \\
&&\widehat{C}_{4+2k\ 4+2k}^{4+2k} =\frac{\partial _{\ ^{k}v}h_{4+2k}}{%
2h_{4+2k}},\widehat{C}_{5+2k\ 5+2k}^{4+2k}=-\frac{\partial _{\ ^{k}v}h_{5+2k}%
}{2h_{4+2k}}, \notag \\
&&\widehat{C}_{4+2k\ 5+2k}^{5+2k} =\frac{\partial _{\ ^{k}v}h_{5+2k}}{%
2h_{5+2k}}. \notag
\end{eqnarray}%
We note that
\begin{equation}
\widehat{C}_{jc}^{i}=\frac{1}{2}g^{ik}\frac{\partial g_{jk}}{\partial y^{c}}%
=0,...,\widehat{C}_{\ ^{k-1}\beta \ ^{k}c}^{\ ^{k-1}\alpha }=\frac{1}{2}g^{\
^{k-1}\alpha \ ^{k-1}\tau \ }\ \frac{\partial g_{^{k-1}\beta \ ^{k-1}\tau }}{%
\partial y^{\ ^{k}c}}=0,  \label{aux02cc}
\end{equation}%
which is an important condition for generating exact solutions of the
Einstein equations for the Levi--Civita connection, see formulas (\ref%
{lccondg}).

\subsubsection*{Calculation of torsion coefficients}

The nontrivial coefficients of torsions (\ref{dtor}) for data (\ref{data1})
are given by formulas (\ref{omeg}) and, respectively, (\ref{aux1a})
resulting in
\begin{eqnarray}
\widehat{T}_{\ ^{k-1}\alpha \ ^{k-1}\beta }^{\ 4+2k} &=&\partial _{\
^{k-1}\beta }w_{\ ^{k-1}\alpha }^{\ }-\partial _{\ ^{k-1}\alpha }w_{\
^{k-1}\beta }^{\ }  \label{dtorsc} \\
&&-w_{\ ^{k-1}\beta }^{\ }\partial _{\ ^{k}v}w_{\ ^{k-1}\alpha }^{\ }+w_{\
^{k-1}\alpha }^{\ }\partial _{\ ^{k}v}w_{\ ^{k-1}\beta }^{\ };  \notag \\
\widehat{T}_{\ ^{k-1}\alpha \ ^{k-1}\beta }^{\ 5+2k} &=&\partial _{\
^{k-1}\beta }n_{\ ^{k-1}\alpha }^{\ }-\partial _{\ ^{k-1}\alpha }n_{\
^{k-1}\beta }^{\ }  \notag \\
&&w_{\ ^{k-1}\beta }^{\ }\partial _{\ ^{k}v}n_{\ ^{k-1}\alpha }^{\ }-w_{\
^{k-1}\alpha }^{\ }\partial _{\ ^{k}v}n_{\ ^{k-1}\beta }^{\ },  \notag \\
\widehat{T}_{4+2k\ ^{k-1}\alpha }^{4+2k} &=&\partial _{\ ^{k}v}w_{\
^{k-1}\alpha }-\frac{1}{2h_{4+2k}}\left( \partial _{\ ^{k-1}\alpha
}h_{4+2k}-w_{\ ^{k-1}\alpha }\partial _{\ ^{k}v}h_{4+2k}\right) ,  \notag \\
\widehat{T}_{5+2k\ ^{k-1}\alpha }^{4+2k} &=&\frac{h_{5+2k}}{2h_{4+2k}}%
\partial _{\ ^{k}v}n_{\ ^{k-1}\alpha },\ \widehat{T}_{4+2k\ ^{k-1}\alpha
}^{5+2k}=\frac{1}{2}\partial _{\ ^{k}v}n_{\ ^{k-1}\alpha },  \notag \\
\widehat{T}_{5+2k\ ^{k-1}\alpha }^{5+2k} &=&-\frac{1}{2h_{5+2k}}\left(
\partial _{\ ^{k-1}\alpha }h_{5+2k}-w_{\ ^{k-1}\alpha }\partial _{\
^{k}v}h_{5+2k}\right) .  \notag
\end{eqnarray}

\subsubsection*{Calculation of the Ricci tensor}

For instance, let us compute the values $\widehat{R}_{ij}=\widehat{R}_{\
ijk}^{k}$ from (\ref{dricci}),
\begin{equation*}
\widehat{R}_{\ hjk}^{i}=\mathbf{e}_{k}\widehat{L}_{.hj}^{i}-\mathbf{e}_{j}%
\widehat{L}_{.hk}^{i}+\widehat{L}_{.hj}^{m}\widehat{L}_{mk}^{i}-\widehat{L}%
_{.hk}^{m}\widehat{L}_{mj}^{i}-\widehat{C}_{.ha}^{i}\Omega _{.jk}^{a},
\end{equation*}%
using (\ref{dcurv}) and $\widehat{C}_{.ha}^{i}=0$ (\ref{aux02cc}). We have
 $\mathbf{e}_{k}\widehat{L}_{.hj}^{i}=\partial _{k}\widehat{L}%
_{.hj}^{i}+N_{k}^{a}\partial _{a}\widehat{L}_{.hj}^{i}=\partial _{k}\widehat{%
L}_{.hj}^{i}+w_{k}\left( \widehat{L}_{.hj}^{i}\right) ^{\ast }=\partial _{k}%
\widehat{L}_{.hj}^{i}$ %
because $\widehat{L}_{.hj}^{i}$ do not depend on variable $y^{4}=v.$ We use,
in brief, denotations of type $\partial _{2}g=g^{\bullet },\partial
_{3}g=g^{^{\prime }},\partial _{4}g=g^{\ast }.$

Deriving (\ref{aux1a}), we obtain
\begin{eqnarray*}
\partial _{2}\widehat{L}_{\ 22}^{2} &=&\frac{g_{2}^{\bullet \bullet }}{2g_{2}%
}-\frac{\left( g_{2}^{\bullet }\right) ^{2}}{2\left( g_{2}\right) ^{2}},\
\partial _{2}\widehat{L}_{\ 23}^{2}=\frac{g_{2}^{\bullet ^{\prime }}}{2g_{2}}%
-\frac{g_{2}^{\bullet }g_{2}^{^{\prime }}}{2\left( g_{2}\right) ^{2}},\  \\
\partial _{2}\widehat{L}_{\ 33}^{2} &=&-\frac{g_{3}^{\bullet \bullet }}{%
2g_{2}}+\frac{g_{2}^{\bullet }g_{3}^{\bullet }}{2\left( g_{2}\right) ^{2}},\
\partial _{2}\widehat{L}_{\ 22}^{3}=-\frac{g_{2}^{\bullet ^{\prime }}}{2g_{3}%
}+\frac{g_{2}^{\bullet }g_{3}^{^{\prime }}}{2\left( g_{3}\right) ^{2}}, \\
\partial _{2}\widehat{L}_{\ 23}^{3} &=&\frac{g_{3}^{\bullet \bullet }}{2g_{3}%
}-\frac{\left( g_{3}^{\bullet }\right) ^{2}}{2\left( g_{3}\right) ^{2}},\
\partial _{2}\widehat{L}_{\ 33}^{3}=\frac{g_{3}^{\bullet ^{\prime }}}{2g_{3}}%
-\frac{g_{3}^{\bullet }g_{3}^{^{\prime }}}{2\left( g_{3}\right) ^{2}},
\end{eqnarray*}%
\begin{eqnarray*}
\partial _{3}\widehat{L}_{\ 22}^{2} &=&\frac{g_{2}^{\bullet ^{\prime }}}{%
2g_{2}}-\frac{g_{2}^{\bullet }g_{2}^{^{\prime }}}{2\left( g_{2}\right) ^{2}}%
,\partial _{3}\widehat{L}_{\ 23}^{2}=\frac{g_{2}^{ll}}{2g_{2}}-\frac{\left(
g_{2}^{l}\right) ^{2}}{2\left( g_{2}\right) ^{2}}, \\
\partial _{3}\widehat{L}_{\ 33}^{2} &=&-\frac{g_{3}^{\bullet ^{\prime }}}{%
2g_{2}}+\frac{g_{3}^{\bullet }g_{2}^{^{\prime }}}{2\left( g_{2}\right) ^{2}}%
,\ \partial _{3}\widehat{L}_{\ 22}^{3}=-\frac{g_{2}^{^{\prime \prime }}}{%
2g_{3}}+\frac{g_{2}^{\bullet }g_{2}^{^{\prime }}}{2\left( g_{3}\right) ^{2}},
\\
\partial _{3}\widehat{L}_{\ 23}^{3} &=&\frac{g_{3}^{\bullet ^{\prime }}}{%
2g_{3}}-\frac{g_{3}^{\bullet }g_{3}^{^{\prime }}}{2\left( g_{3}\right) ^{2}}%
,\partial _{3}\widehat{L}_{\ 33}^{3}=\frac{g_{3}^{ll}}{2g_{3}}-\frac{\left(
g_{3}^{l}\right) ^{2}}{2\left( g_{3}\right) ^{2}}.
\end{eqnarray*}

For these values, there are only 2 nontrivial components,
\begin{eqnarray*}
\widehat{R}_{\ 323}^{2} &=&\frac{g_{3}^{\bullet \bullet }}{2g_{2}}-\frac{%
g_{2}^{\bullet }g_{3}^{\bullet }}{4\left( g_{2}\right) ^{2}}-\frac{\left(
g_{3}^{\bullet }\right) ^{2}}{4g_{2}g_{3}}+\frac{g_{2}^{ll}}{2g_{2}}-\frac{%
g_{2}^{l}g_{3}^{l}}{4g_{2}g_{3}}-\frac{\left( g_{2}^{l}\right) ^{2}}{4\left(
g_{2}\right) ^{2}} \\
\widehat{R}_{\ 223}^{3} &=&-\frac{g_{3}^{\bullet \bullet }}{2g_{3}}+\frac{%
g_{2}^{\bullet }g_{3}^{\bullet }}{4g_{2}g_{3}}+\frac{\left( g_{3}^{\bullet
}\right) ^{2}}{4(g_{3})^{2}}-\frac{g_{2}^{ll}}{2g_{3}}+\frac{%
g_{2}^{l}g_{3}^{l}}{4(g_{3})^{2}}+\frac{\left( g_{2}^{l}\right) ^{2}}{%
4g_{2}g_{3}}
\end{eqnarray*}%
with $\ \widehat{R}_{22}=-\widehat{R}_{\ 223}^{3}$ and $\widehat{R}_{33}=%
\widehat{R}_{\ 323}^{2},$ or%
\begin{equation*}
\widehat{R}_{2}^{2}=\widehat{R}_{3}^{3}=-\frac{1}{2g_{2}g_{3}}\left[
g_{3}^{\bullet \bullet }-\frac{g_{2}^{\bullet }g_{3}^{\bullet }}{2g_{2}}-%
\frac{\left( g_{3}^{\bullet }\right) ^{2}}{2g_{3}}+g_{2}^{\prime \prime }-%
\frac{g_{2}^{l}g_{3}^{l}}{2g_{3}}-\frac{\left( g_{2}^{l}\right) ^{2}}{2g_{2}}%
\right]
\end{equation*}%
as in (\ref{4ep1a}).

Now, we consider
\begin{eqnarray*}
\widehat{R}_{\ bka}^{c} &=&\frac{\partial \widehat{L}_{.bk}^{c}}{\partial
y^{a}}-\left( \frac{\partial \widehat{C}_{.ba}^{c}}{\partial x^{k}}+\widehat{%
L}_{.dk}^{c\,}\widehat{C}_{.ba}^{d}-\widehat{L}_{.bk}^{d}\widehat{C}%
_{.da}^{c}-\widehat{L}_{.ak}^{d}\widehat{C}_{.bd}^{c}\right) +\widehat{C}%
_{.bd}^{c}\widehat{T}_{.ka}^{d} \\
&=&\frac{\partial \widehat{L}_{.bk}^{c}}{\partial y^{a}}-\widehat{C}%
_{.ba|k}^{c}+\widehat{C}_{.bd}^{c}\widehat{T}_{.ka}^{d}
\end{eqnarray*}%
from (\ref{dcurv}). Contracting indices, we get
 $\widehat{R}_{bk}=\widehat{R}_{\ bka}^{a}=\frac{\partial L_{.bk}^{a}}{%
\partial y^{a}}-\widehat{C}_{.ba|k}^{a}+\widehat{C}_{.bd}^{a}\widehat{T}%
_{.ka}^{d}.$
Let us denote $\widehat{C}_{b}=\widehat{C}_{.ba}^{c}$ and write%
 $\widehat{C}_{.b|k}=\mathbf{e}_{k}\widehat{C}_{b}-\widehat{L}_{\ bk}^{d\,}%
\widehat{C}_{d}=\partial _{k}\widehat{C}_{b}-N_{k}^{e}\partial _{e}\widehat{C%
}_{b}-\widehat{L}_{\ bk}^{d\,}\widehat{C}_{d}=\partial _{k}\widehat{C}%
_{b}-w_{k}\widehat{C}_{b}^{\ast }-\widehat{L}_{\ bk}^{d\,}\widehat{C}_{d}.$
 We express $ \widehat{R}_{bk}=\ _{[1]}R_{bk}+\ _{[2]}R_{bk}+\ _{[3]}R_{bk},$
where%
\begin{eqnarray*}
\ _{[1]}R_{bk} &=&\left( \widehat{L}_{bk}^{4}\right) ^{\ast },\
_{[2]}R_{bk}=-\partial _{k}\widehat{C}_{b}+w_{k}\widehat{C}_{b}^{\ast }+%
\widehat{L}_{\ bk}^{d\,}\widehat{C}_{d}, \\
\ _{[3]}R_{bk} &=&\widehat{C}_{.bd}^{a}\widehat{T}_{.ka}^{d}=\widehat{C}%
_{.b4}^{4}\widehat{T}_{.k4}^{4}+\widehat{C}_{.b5}^{4}\widehat{T}_{.k4}^{5}+%
\widehat{C}_{.b4}^{5}\widehat{T}_{.k5}^{4}+\widehat{C}_{.b5}^{5}\widehat{T}%
_{.k5}^{5}
\end{eqnarray*}%
for $\ \widehat{C}_{4}=\widehat{C}_{44}^{4}+\widehat{C}_{45}^{5}=\frac{%
h_{4}^{\ast }}{2h_{4}}+\frac{h_{5}^{\ast }}{2h_{5}},\widehat{C}_{5}=\widehat{%
C}_{54}^{4}+\widehat{C}_{55}^{5}=0.$

We compute $\ \widehat{R}_{4k}=\ _{[1]}R_{4k}+\ _{[2]}R_{4k}+\ _{[3]}R_{4k}$
with
\begin{eqnarray*}
\ _{[1]}R_{4k} &=&\left( \widehat{L}_{4k}^{4}\right) ^{\ast },\
_{[2]}R_{4k}=-\partial _{k}\widehat{C}_{4}+w_{k}\widehat{C}_{4}^{\ast }+%
\widehat{L}_{\ 4k}^{4\,}\widehat{C}_{4}, \\
\ _{[3]}R_{4k} &=&\widehat{C}_{.44}^{4}\widehat{T}_{.k4}^{4}+\widehat{C}%
_{.45}^{4}\widehat{T}_{.k4}^{5}+\widehat{C}_{.44}^{5}\widehat{T}_{.k5}^{4}+%
\widehat{C}_{.45}^{5}\widehat{T}_{.k5}^{5}.
\end{eqnarray*}%
Summarizing, we get%
\begin{equation*}
2h_{5}\widehat{R}_{4k}=w_{k}\left[ h_{5}^{\ast \ast }-\frac{\left(
h_{5}^{\ast }\right) ^{2}}{2h_{5}}-\frac{h_{4}^{\ast }h_{5}^{\ast }}{2h_{4}}%
\right] +\frac{h_{5}^{\ast }}{2}\left( \frac{\partial _{k}h_{4}}{h_{4}}+%
\frac{\partial _{k}h_{5}}{h_{5}}\right) -\partial _{k}h_{5}^{\ast }
\end{equation*}%
which is equivalent to (\ref{4ep3a}).

In a similar way, we compute $\ \widehat{R}_{5k}=\ _{[1]}R_{5k}+\
_{[2]}R_{5k}+\ _{[3]}R_{5k},$ where
\begin{eqnarray*}
\ _{[1]}R_{5k} &=&\left( \widehat{L}_{5k}^{4}\right) ^{\ast },\
_{[2]}R_{5k}=-\partial _{k}\widehat{C}_{5}+w_{k}\widehat{C}_{5}^{\ast }+%
\widehat{L}_{\ 5k}^{4\,}\widehat{C}_{4}, \\
\ _{[3]}R_{5k} &=&\widehat{C}_{.54}^{4}\widehat{T}_{.k4}^{4}+\widehat{C}%
_{.55}^{4}\widehat{T}_{.k4}^{5}+\widehat{C}_{.54}^{5}\widehat{T}_{.k5}^{4}+%
\widehat{C}_{.55}^{5}\widehat{T}_{.k5}^{5}.
\end{eqnarray*}%
We have
 $\widehat{R}_{5k} = \left( \widehat{L}_{5k}^{4}\right) ^{\ast }+\widehat{L}%
_{\ 5k}^{4\,}\widehat{C}_{4}+\widehat{C}_{.55}^{4}\widehat{T}_{.k4}^{5}+%
\widehat{C}_{.54}^{5}\widehat{T}_{.k5}^{4}=
\left( -\frac{h_{5}}{h_{4}}n_{k}^{\ast }\right) ^{\ast }-\frac{h_{5}}{%
h_{4}}n_{k}^{\ast }\left( \frac{h_{4}^{\ast }}{2h_{4}}+\frac{h_{5}^{\ast }}{%
2h_{5}}\right) +\frac{h_{5}^{\ast }}{2h_{5}}\frac{h_{5}}{2h_{4}}n_{k}^{\ast
}-\frac{h_{5}^{\ast }}{2h_{4}}\frac{1}{2}n_{k}^{\ast },$
which can be written%
\begin{equation*}
2h_{4}\widehat{R}_{5k}=h_{5}n_{k}^{\ast \ast }+\left( \frac{h_{5}}{h_{4}}%
h_{4}^{\ast }-\frac{3}{2}h_{5}^{\ast }\right) n_{k}^{\ast },
\end{equation*}%
i. e. we prove (\ref{4ep4a}).

For $
\widehat{R}_{\ jka}^{i}=\frac{\partial \widehat{L}_{.jk}^{i}}{\partial y^{k}}%
-\left( \frac{\partial \widehat{C}_{.ja}^{i}}{\partial x^{k}}+\widehat{L}%
_{.lk}^{i}\widehat{C}_{.ja}^{l}-\widehat{L}_{.jk}^{l}\widehat{C}_{.la}^{i}-%
\widehat{L}_{.ak}^{c}\widehat{C}_{.jc}^{i}\right) +\widehat{C}_{.jb}^{i}%
\widehat{T}_{.ka}^{b}$
from (\ref{dcurv}), we obtain zeros because $\widehat{C}_{.jb}^{i}=0$ and $%
\widehat{L}_{.jk}^{i}$ do not depend on $y^{k}.$ So,
 $\widehat{R}_{ja}=\widehat{R}_{\ jia}^{i}=0.$

Taking
$\widehat{R}_{\ bcd}^{a}=\frac{\partial \widehat{C}_{.bc}^{a}}{\partial y^{d}}%
-\frac{\partial \widehat{C}_{.bd}^{a}}{\partial y^{c}}+\widehat{C}_{.bc}^{e}%
\widehat{C}_{.ed}^{a}-\widehat{C}_{.bd}^{e}\widehat{C}_{.ec}^{a}$
 from (\ref{dcurv}) and contracting the indices in order to obtain the Ricci
coefficients,%
 $\widehat{R}_{bc}=\frac{\partial \widehat{C}_{.bc}^{d}}{\partial y^{d}}-\frac{%
\partial \widehat{C}_{.bd}^{d}}{\partial y^{c}}+\widehat{C}_{.bc}^{e}%
\widehat{C}_{.ed}^{d}-\widehat{C}_{.bd}^{e}\widehat{C}_{.ec}^{d},$ we compute
\begin{equation*}
\widehat{R}_{bc}=\left( \widehat{C}_{.bc}^{4}\right) ^{\ast }-\partial _{c}%
\widehat{C}_{b}+\widehat{C}_{.bc}^{4}\widehat{C}_{4}-\widehat{C}_{.b4}^{4}%
\widehat{C}_{.4c}^{4}-\widehat{C}_{.b5}^{4}\widehat{C}_{.4c}^{5}-\widehat{C}%
_{.b4}^{5}\widehat{C}_{.5c}^{4}-\widehat{C}_{.b5}^{5}\widehat{C}_{.5c}^{5}.
\end{equation*}%
There are nontrivial values,
 $\widehat{R}_{44} = \left( \widehat{C}_{.44}^{4}\right) ^{\ast }-\widehat{C}%
_{4}^{\ast }+\widehat{C}_{44}^{4}(\widehat{C}_{4}-\widehat{C}%
_{44}^{4})-\left( \widehat{C}_{.45}^{5}\right) ^{2}$ and
 $\widehat{R}_{55} =\left( \widehat{C}_{.55}^{4}\right) ^{\ast }-\widehat{C}%
_{.55}^{4}\left( -\widehat{C}_{4}+2\widehat{C}_{.45}^{5}\right)$
 resulting in%
\begin{equation*}
\widehat{R}_{4}^{4}=\widehat{R}_{5}^{5}=\frac{1}{2h_{4}h_{5}}\left[
-h_{5}^{\ast \ast }+\frac{\left( h_{5}^{\ast }\right) ^{2}}{2h_{5}}+\frac{%
h_{4}^{\ast }h_{5}^{\ast }}{2h_{4}}\right]
\end{equation*}%
which is just (\ref{4ep2a}).

Computations for higher shells, with $k=1,2,...$ are similar.

Theorem \ref{th2} is proven.

\end{document}